\numberwithin{equation}{section}
\newcommand{\diag}{\operatorname{diag}}
\newcommand{\tr}{\operatorname{Tr}}
\newcommand{\mc}[1]{\mathcal{#1}}
\newcommand{\wt}[1]{\widetilde{#1}}
\newcommand{\norm}[1]{\left\lVert#1\right\rVert}
\newcommand{\ketbra}[2]{|#1\rangle\!\langle #2 |}
\newcommand{\Or}{\mathcal{O}}
\newcommand{\CC}{\mathbb{C}}
\newcommand{\ZZ}{\mathbb{Z}}
\newtheorem{thm}{\protect\theoremname}
\theoremstyle{plain}
\newtheorem{lemma}[thm]{\protect\lemmaname}
\theoremstyle{plain}
\newtheorem{rem}[thm]{\protect\remarkname}
\theoremstyle{plain}
\theoremstyle{plain}
\newtheorem{prop}[thm]{\protect\propositionname}
\theoremstyle{plain}
\newtheorem{cor}[thm]{\protect\corollaryname}
\newtheorem{defn}[thm]{\protect\definitionname}
\newtheorem{fact}[thm]{Fact}
\providecommand{\definitionname}{Definition}
\providecommand{\assumptionname}{Assumption}
\providecommand{\corollaryname}{Corollary}
\providecommand{\lemmaname}{Lemma}
\providecommand{\propositionname}{Proposition}
\providecommand{\remarkname}{Remark}
\providecommand{\theoremname}{Theorem}
\newcommand{\REV}[1]{\textcolor{black}{ #1}}
\newcommand{\mmg}[1]{\textcolor{black}{ #1}}
\newcommand{\gap}{\mathrm{Gap}}
\newcommand{\ms}[1]{\mathscr{#1}}
\renewcommand{\R}{\mathbb{R}}
\renewcommand{\C}{\mathbb{C}}
\newcommand{\eps}{\epsilon}
\newcommand{\lad}{\lambda}
\newcommand{\si}{\sigma}
\def \ww {\omega}
\def \w {\widetilde}
\def \q {\quad}
\def  \mi {{\bf 1}}
\def \bh {\mc{B}(\mc{H})}
\def \l {\langle}
\def \r {\rangle}
\def \si {\sigma}
\def \dd {\cdot}
\def \ms {\mathsf}
\newcommand{\vp}{\varphi}
\def \xx {\mf{X}}
\def \yy {\mf{Y}}
\def \zz {\mf{Z}}
\newcommand{\mf}[1]{\mathsf{#1}}
\def \bx {\textbf{X}}
\def \bz {\textbf{Z}}
\title{Polynomial-Time Preparation of Low-Temperature Gibbs States for 2D Toric Code}
\author[1]{Zhiyan Ding\thanks{\texttt{zding.m@berkeley.edu}.}}
\author[2]{Zeph Landau\thanks{\texttt{zeph.landau@gmail.com}.}}
\author[3]{Bowen Li\thanks{\texttt{bowen.li@cityu.edu.hk}.}}
\author[1,4]{Lin Lin\thanks{\texttt{linlin@math.berkeley.edu}.}}
\author[5]{Ruizhe Zhang\thanks{\texttt{rzzhang@berkeley.edu}.}}
\affil[1]{\emph{Department of Mathematics, University of California, Berkeley}}
\affil[2]{\emph{Department of Computer Science, University of California, Berkeley}}
\affil[3]{\emph{Department of Mathematics, City University of Hong Kong}}
\affil[4]{\emph{Applied Mathematics and Computational Research Division, Lawrence Berkeley National Laboratory}}
\affil[5]{\emph{Simons Institute for the Theory of Computing}}
\date{}
\begin{document}

\maketitle

\begin{abstract}
We propose a polynomial-time algorithm for preparing the Gibbs state of the two-dimensional toric code Hamiltonian at any temperature, starting from any initial condition, significantly improving upon prior estimates that suggested exponential scaling with inverse temperature. Our approach combines the Lindblad dynamics using a local Davies generator with simple global jump operators to enable efficient transitions between logical sectors. Our proof also shows that the Lindblad dynamics with a digitally implemented low-temperature local Davies generator is able to efficiently drive the quantum state
towards the ground state manifold.
\end{abstract}

\section{Introduction}

The ability (or the lack of it) to efficiently prepare Gibbs states has far-reaching implications in quantum information theory, condensed matter physics, quantum chemistry, statistical mechanics, and optimization.
Given a quantum Hamiltonian $H\in\CC^{2^N\times 2^N}$,  we would like to prepare the associated thermal state $\sigma_\beta\propto e^{-\beta H}$, where  $\beta$ is the inverse temperature. A number of quantum algorithms have been designed to efficiently prepare high-temperature Gibbs states with a small $\beta=\Or(\mathrm{poly}(N^{-1})$)~\cite{PoulinWocjan2009,ChowdhurySomma2017,VanApeldoornGilyenGriblingEtAl2017,GilyenSuLowEtAl2019,an2023quantum}. However, as the temperature lowers (i.e., $\beta$ becomes large), the complexity of these algorithms can scale exponentially in the number of qubits $N$, rendering them impractical for low-temperature regimes where the Gibbs state has a significant overlap with the ground state of $H$.

Recent advancements have rekindled interest in designing quantum Gibbs samplers based on Lindblad dynamics~\cite{MozgunovLidar2020,RallWangWocjan2023,chen2021fast,ChenKastoryanoBrandaoEtAl2023, ChenKastoryanoGilyen2023,WocjanTemme2023,ding2024efficient}.
These algorithms rely on a specific form of open quantum system dynamics to drive the system toward its thermal equilibrium, an idea pioneered by Davies in the 1970s~\cite{davies1970quantum,Davies1974}.
The efficiency of such algorithms largely depends on the mixing time of the underlying dynamics, which can vary significantly across different systems and different forms of Lindbladians.

The computational complexity of preparing quantum Gibbs states, computing partition functions, and the potential for establishing quantum advantage in these tasks is a topic of ongoing debate in the literature.
On the one hand, at high enough temperatures, there exist polynomial-time classical algorithms to sample from Gibbs states and to estimate partition functions~\cite{bravyi2021complexity,MannHelmuth2021,YinLucas2023,BakshiLiuMoitraEtAl2024}. On the other hand, in the low-temperature regime, preparing classical Gibbs states is already \NP-hard in the worst case~\cite{Barahona1982,Sly2010}, and we do not expect efficient quantum algorithms in these cases.  The development of these new Gibbs samplers has also contributed to advancements in our complexity-theoretic understanding~\cite{rouz2024,bergamaschi2024quantum,rajakumar2024gibbs}. \cite{rouz2024} proved that simulating the Lindbladian  proposed in \cite{ChenKastoryanoGilyen2023} to time $T=\poly(N)$ at $\beta = \Omega(\log(N))$ for a $k$-local Hamiltonian is \BQP-complete. \cite{bergamaschi2024quantum,rajakumar2024gibbs}
constructed a family of $k$-local Hamiltonians such that quantum Gibbs sampling at constant temperatures (lower than the classically simulatable threshold)
can be efficiently achieved with the block-encoding framework~\cite{ChenKastoryanoBrandaoEtAl2023}, and the task is classically intractable, assuming no collapse of the polynomial hierarchy.

None of these constructions imply the efficient preparation of Gibbs states at low temperatures. Indeed, when the temperature is sufficiently low, the Gibbs state can exhibit a high overlap with the ground state, and cooling to these temperatures is expected to be \QMA-hard in the worst case.
However, it is important to recognize that \QMA-hardness does not preclude the possibility of developing efficient Gibbs samplers for \emph{specific} Hamiltonians.
Consequently, understanding and controlling the mixing time for specific systems (or specific classes of systems) is a fundamental open question in this field and can provide valuable insights into the practical performance of these Gibbs samplers~\cite{Alicki_2009,temme2015faststabilizer,Temme_2017,PhysRevB.90.134302,RevModPhys.88.045005,Freeman_2018,TemmeKastoryanoRuskaiEtAl2010,KastoryanoTemme2013,KastoryanoBrandao2016,BardetCapelGaoEtAl2023,kochanowski2024,Entropy_2024,rouz2024}.

In this work, we propose a novel Gibbs sampler with nonlocal jump operators,
and analyze its convergence rate for preparing \emph{low-temperature} Gibbs states of the 2D toric code~\cite{Kitaev2003}, a paradigmatic model in quantum information theory, quantum error correction, and condensed matter physics. In the context of thermalization, when the toric code is exposed to thermal noise modeled by a specific form of Lindbladians called the Davies generator, the seminal work of Alicki, Fannes, and Horodecki~\cite{Alicki_2009} showed that the inverse spectral gap (which leads to an upper bound of the mixing time) grows exponentially with inverse temperature $\beta$. Using a different argument based on energy barriers, Temme and Kastoryano~\cite{temme2015faststabilizer,Temme_2017} confirmed that the thermalization time (i.e., mixing time of the Lindblad dynamics) 
of the 2D toric code with local noise should indeed scale exponentially with $\beta$. However, it is unknown whether the $\exp(\beta)$ factor in the mixing time is unavoidable for \emph{all} Lindblad dynamics on this problem.
This leads to the central question of this work:
\begin{quote}\textit{Can we design a quantum algorithm, based on Lindblad dynamics, that efficiently prepares the low-temperature Gibbs state of the 2D toric code from \textbf{any} initial state with a polynomial runtime in both the inverse temperature $\beta$ and the number of qubits $N$?}
\end{quote}



\paragraph{Notations.} 
For a finite-dimensional Hilbert space $\mc{H}$, we denote by $\mc{B}(\mc{H})$ the space of bounded operators with identity element $\mi$. We let $\mc{D}(\mc{H}): = \{\rho \in \mc{B}(\mc{H})\,; \ \rho \ge 0\,,\  \tr(\rho) = 1\}$ be the set of quantum states. Denoting by $X^\dag$ the adjoint operator of $X \in \mc{B}(\mc{H})$, we recall the Hilbert--Schmidt (HS) inner product on $\bh$: $\l X, Y\r := \tr (X^\dag Y)$. With some abuse of notation, the adjoint of a superoperator $\Phi: \mc{B}(\mc{H}) \to \mc{B}(\mc{H})$ for $\l \dd,\dd \r$ is also denoted by $\Phi^\dag$. Moreover, $\{X, Y\} = XY + YX$ and
$[X,Y] = XY - YX$ for $X, Y \in \bh$ denote the anti-commutator and commutator, respectively.
We will use the standard asymptotic notations: $\Or$, $\Omega$, and $\Theta$. Precisely, we write $f=\Omega(g)$ if $g=\Or(f)$, and $f=\Theta(g)$ if $f=\Or(g)$ and $g=\Or(f)$. We shall use the standard Pauli matrices:
\[
\sigma^x = \begin{bmatrix}
0 & 1 \\
1 & 0
\end{bmatrix}, \quad
\sigma^y = \begin{bmatrix}
0 & -i \\
i & 0
\end{bmatrix}, \quad
\sigma^z = \begin{bmatrix}
1 & 0 \\
0 & -1
\end{bmatrix}.
\]

\subsection{Contribution} \label{sec:contri}

We derive a perhaps counterintuitive result regarding the 2D toric code: by employing a Lindbladian composed of a local Davies generator supplemented with simple global jump operators that facilitate transitions between logical sectors to overcome the energy barrier, it is possible to efficiently prepare the Gibbs state of the 2D toric code at any temperature. Furthermore, ignoring the information in the logical space, the local Davies generator alone suffices to efficiently drive the density matrix towards the ground state manifold
in the zero temperature limit.  This approach circumvents the previously established exponential dependence of the mixing time on $\beta$, achieving a polynomial scaling with system size (the number of qubits $N$).


To be specific, given $\beta>0$, we construct the following Davies generator:
\REV{\begin{equation}\label{eqn:fast_mixing_Davies}
\mathcal{L}_\beta = \underbrace{\sum^{N}_{j=1}\left(\mathcal{L}_{\sigma^x_j}+\mathcal{L}_{\sigma^y_j}+\mathcal{L}_{\sigma^z_{j}}\right)}_{:=\mathcal{L}_{\text{\rm local full}}}+\underbrace{\mathcal{L}_{\xx_1}+\mathcal{L}_{\zz_1}+\mathcal{L}_{\xx_2}+\mathcal{L}_{\zz_2}}_{:=\mathcal{L}_{\rm global}}\,,
\end{equation}}
where $\mc{L}_{[\cdot]}$ is the standard Davies generator defined via \cref{eqq:davies2} based on 2D toric code Hamiltonian \eqref{2d:toric_ham},
and $\xx_1,\zz_1,\xx_2,\zz_2$ are global logic operators for 2D toric code, see \cref{eqn:2D_global_X_Z} in \cref{sec:2D_toric}. Then, we prove the following main results:

\begin{thm}[Fast mixing of 2D toric code]\label{thm:fast_mixing_2D}
The spectral gap of the Gibbs sampler $\mathcal{L}_\beta$ in \eqref{eqn:fast_mixing_Davies} has the following lower bound
\begin{equation*}
  {\rm Gap}(- \mc{L}_\beta) = \max\left\{\exp(-\Or(\beta)),\Omega(N^{-2})\right\}\,.
\end{equation*}
\end{thm}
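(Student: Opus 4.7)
The plan is to establish the two lower bounds in the maximum separately. For the $\exp(-\Or(\beta))$ estimate, I would rely on the classical Alicki--Fannes--Horodecki bound for the local Davies generator on the 2D toric code~\cite{Alicki_2009}: since $\mc{L}_{\rm global}$ contributes a nonnegative Dirichlet form, any Poincar\'e inequality for $\mc{L}_{\text{local full}}$ transfers directly to $\mc{L}_\beta$, giving $\gap(-\mc{L}_\beta)\geq \gap(-\mc{L}_{\text{local full}})\geq \exp(-\Or(\beta))$. The novel content of the theorem is the temperature-independent bound $\gap(-\mc{L}_\beta)=\Omega(N^{-2})$, whose proof must exploit the global jump operators.

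For this second bound, I would work in the Dirichlet-form / Poincar\'e-inequality formulation: since $\mc{L}_\beta$ is KMS-symmetric,
\[
\gap(-\mc{L}_\beta) \;=\; \inf_{f}\ \frac{\mc{E}_{\mc{L}_\beta}(f)}{\mathrm{Var}_{\sigma_\beta}(f)}\,,
\]
with $\mc{E}_{\mc{L}_\beta}=\mc{E}_{\text{local full}}+\mc{E}_{\rm global}$ additive across jump terms. The key structural observation is that the 2D toric code Hamiltonian admits a four-fold ground-space degeneracy labelled by the eigenvalues $(a,b)\in\{\pm 1\}^2$ of the logical pair $(\xx_1,\xx_2)$, and this labelling extends to an orthogonal decomposition $\mc{H}=\bigoplus_{(a,b)}\mc{H}_{(a,b)}$ preserved by $H$ and by every local Davies term. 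Let $\sigma_\beta^{(a,b)}$ denote the restricted Gibbs state on sector $(a,b)$; by symmetry $\sigma_\beta=\tfrac14\sum_{(a,b)}\sigma_\beta^{(a,b)}$.

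My strategy is a two-scale variance decomposition:
\[
\mathrm{Var}_{\sigma_\beta}(f) \;=\; \mathbb{E}_{(a,b)}\!\left[\mathrm{Var}_{\sigma_\beta^{(a,b)}}(f)\right] + \mathrm{Var}_{(a,b)}\!\left[\mathbb{E}_{\sigma_\beta^{(a,b)}}(f)\right],
\]
bounding each term against a piece of the Dirichlet form. For the \emph{between-sector} term I use that $\xx_i$ and $\zz_i$ commute with $H$, so $\mc{L}_{\xx_i}$ and $\mc{L}_{\zz_i}$ reduce to constant-rate dephasing channels that, viewed on the logical labels, generate a classical random walk on $\{\pm 1\}^2$ with spectral gap $\Omega(1)$; this yields a Poincar\'e inequality comparing the between-sector variance to $\mc{E}_{\rm global}(f)$ with a constant independent of $\beta$ and $N$. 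For the \emph{within-sector} term I would prove that $\mc{E}_{\text{local full}}(f)$, restricted to observables supported in a single sector, controls $\mathrm{Var}_{\sigma_\beta^{(a,b)}}(f)$ with ratio $\Omega(N^{-2})$ uniformly in $\beta$.

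The within-sector Poincar\'e inequality is the technical heart of the argument and the main obstacle. Standard lower bounds on toric-code mixing times use test functions built from string operators along noncontractible loops, which are precisely the functions that distinguish logical sectors; restricting to a single sector removes this bottleneck, but one still must make the rate quantitative. I expect to proceed by a canonical-path or comparison argument: within a fixed topological sector the syndrome (anyon) configuration forms a classical spin system with no global constraint, and the local Davies implements a quantum Glauber-type dynamics whose gap can be controlled by mapping onto a reference chain with independent anyon pair creation/annihilation and random-walk hops, with each local move carrying cost at most $\Or(1)$ and path length at most $\Or(N)$. Combining the within- and between-sector estimates via the variance decomposition above then yields $\gap(-\mc{L}_\beta)\geq \Omega(N^{-2})$, which together with the Alicki--Fannes--Horodecki bound proves the theorem.
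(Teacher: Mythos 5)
Your high-level roadmap---split into within-sector and between-sector mixing, get $\exp(-\Or(\beta))$ for free from the local generator and monotonicity, and get $\Omega(N^{-2})$ from the global logical jumps plus a $\beta$-uniform syndrome gap---is the same as the paper's. But there are two genuine gaps.

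First, the claim that the $(\xx_1,\xx_2)$-eigenspace decomposition $\mc{H}=\bigoplus_{(a,b)}\mc{H}_{(a,b)}$ is ``preserved by every local Davies term'' is false. For any site $j$ lying on the support of $\xx_1$, the couplings $\sigma^y_j$ and $\sigma^z_j$ anticommute with $\xx_1$, so $\mc{L}_{\sigma^y_j}$ and $\mc{L}_{\sigma^z_j}$ transfer population between $\mc{H}_{(+1,b)}$ and $\mc{H}_{(-1,b)}$; switching to a $(\zz_1,\zz_2)$ labelling breaks $\sigma^x_j$ instead. Consequently the within/between split of the Dirichlet form does not factor as claimed. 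Relatedly, the classical law of total variance you invoke does not account for the coherence blocks of $f$ between distinct logical sectors, which also contribute to $\langle f,f\rangle_{\sigma_\beta}$. The paper circumvents both issues: it first \emph{discards} the offending couplings via \cref{lem1} (item 2), retaining only $\sigma^x_j$ on a ``snake'' and $\sigma^z_j$ on a ``comb'' chosen disjoint from the supports of $\zz_1,\zz_2$ and $\xx_1,\xx_2$ respectively, so that the surviving jumps commute with all four logical operators; the resulting $\mc{L}^{\rm gapped}$ is then genuinely block-diagonal over the sixteen GNS-orthogonal blocks $\mc{B}_{B_1,B_2}=B_1B_2\otimes\mc{A}^{\rm full}_{\rm m}\otimes\mc{A}^{\rm full}_{\rm e}$, and the fifteen off-identity blocks (including all the coherence sectors) are bounded by \cref{lem1} (item 4), combining the $\Theta(1)$ global dephasing rate with the local gap (\cref{prop:second_step}).

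Second, and more fundamentally, the $\beta$-uniform within-sector estimate $\Omega(N^{-2})$ is the entire technical content of the theorem, and your sketch---a canonical-path or comparison argument with $\Or(1)$ move costs and $\Or(N)$ path lengths---does not explain why the path congestion does not degrade like $\exp(\Theta(\beta))$ as the stationary measure concentrates on the ground configuration, which is exactly the obstruction that led prior work to the $\exp(-\Theta(\beta))$ bound. The paper's proof is of a different nature: after a further block-diagonalization of the syndrome algebra over subsets $\Lambda\subset[L^2]$ (\cref{lem:blockdecom}), all but the abelian block $\Lambda=[L^2]$ already have $\Omega(1)$ gap; for the abelian block the $\beta\to\infty$ limit of the master Hamiltonian is a \emph{non-degenerate} classical chain whose two-excitation sector is precisely a stair-graph Laplacian plus a diagonal perturbation, whose minimum eigenvalue is shown to be $\Theta(N^{-2})$ by a direct spectral-graph argument (\cref{thm:graph_spectral}); an iterative excitation-number reduction (\cref{eqn:iteration}) lifts this to all sectors, a perturbative estimate handles finite large $\beta$, and the comb case needs an additional path-covering argument (\cref{fac:comb_line_cover}). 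Without a concrete mechanism making your reference chain's congestion temperature-independent, the core estimate remains unproved.
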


Our proof is valid for all temperatures, which includes two important regimes: 1. finite temperature regime $\beta = \mathcal{O}(1)$; 2. low temperature regime $\beta \gg 1$. In the first case of $\beta = \mathcal{O}(1)$, the local Davies generator $\mathcal{L}_{\text{\rm local full}}$ is sufficient to ensure the spectral gap of $\mc{L}_\beta$ at least $\exp(-\Theta(\beta))$ and independent of $N$. This is consistent with the result in \cite[Theorem 2]{Alicki_2009}. In the second case of $\beta \gg 1$, the Davies generator with global jumps $\mathcal{L}_{\rm global}$ significantly increases the spectral gap of $\mc{L}_\beta$, achieving a lower bound of order $\mathrm{poly}\left(N^{-1}\right)$ and independent of $\beta$.


The choice of the global jump operators in \eqref{eqn:fast_mixing_Davies} is natural. As $\beta\to \infty$, the thermal state converges toward the ground state of the Hamiltonian. Since the 2D toric code has four degenerate ground states that are not locally connected, the local Davies generator $\mathcal{L}_{\text{\rm local full}}$ cannot efficiently transit between these ground states, resulting in a slow mixing process. To overcome this difficulty, we introduce the global jump operators $\mathcal{L}_{\rm global}$ that enable transitions between different ground states at low temperatures, which ensures fast mixing even at zero temperature. This phenomenon is not unique to the 2D toric code. For example, in our paper, we also consider a simpler 1D Ising model in \cref{sec:1d_ising}, construct a similar Davies generator with global jump operators, and demonstrate a fast mixing result similar to \cref{thm:fast_mixing_2D} to illustrate the proof concept.

Our proof implies a more detailed characterization of the mixing process.
The algebra of observables for 2D toric code can be expressed as a tensor product between a logical observable space and a syndrome space. The spectral gap
of the Lindbladian in the logical space and the syndrome space can be analyzed independently. We find that for the standard Davies generator with only local jump operators, the exponentially vanishing spectral gap on $\beta$ is \emph{only} due to the action on the logical space, where the 2D toric code has four linearly independent ground states that are not locally connected. On the other hand,  we show that the spectral gap of the local Davies generator, when restricted to the syndrome space (by tracing out the logical subspace), has a lower bound that decays polynomially with the size of the system and remains independent of $\beta$ in the low-temperature regime. This is summarized in the following proposition:
\begin{prop}\label{prop:syndrome}
Let $\mathcal{L}_{\text{\rm local full}}$ be defined as in Eq.~\eqref{eqn:fast_mixing_Davies}. Then a part of $\mathcal{L}_{\text{\rm local full}}$ admits the syndrome subspace as an invariant subspace and exhibits a ``large'' spectral gap.

Specifically, there is a subset of Paulis $\{p_i\}\subset \{\sigma^{x/y/z}_j\}$, which defines $\mc{L}_{\rm local}=\sum_i\mc{L}_{p_i}$, a  decomposition $\mathcal{H}=\CC^{2^N}=\mc{H}_{\rm logic}\otimes \mc{H}_{\rm syndrome}$ with $\mc{H}_{\rm logic}\cong\mathbb{C}^4$ and $\mc{H}_{\rm syndrome}\cong\mathbb{C}^{2^{N-2}}$, and a corresponding decomposition $\mathcal{B}(\mc{H})\cong\mc{B}\left(\mc{H}_{\rm logic}\right)\otimes \mc{B}\left(\mc{H}_{\rm syndrome}\right)$, such that
\[
\mc{L}_{\rm local}\left(\mc{B}\left(\mc{H}_{\rm logic}\right)\otimes \mc{B}\left(\mc{H}_{\rm syndrome}\right)\right)=\mc{B}\left(\mc{H}_{\rm logic}\right)\otimes \mc{L}_{\rm local}\left(\mc{B}\left(\mc{H}_{\rm syndrome}\right)\right)\,.
\]
The spectral gap of $- \mc{L}_{\rm local}$ restricted to the syndrome space $\mc{B}\left(\mc{H}_{\rm syndrome}\right)$ is lower bounded by $$\max\left\{\exp(-\Or(\beta)), \Omega(N^{-2})\right\}\,.$$
\end{prop}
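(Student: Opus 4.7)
The plan is to exhibit an explicit isomorphism $\mc{H}\cong\mc{H}_{\rm logic}\otimes\mc{H}_{\rm syndrome}$ under which a suitably chosen subset of single-qubit Pauli jumps acts as $I_{\rm logic}\otimes(\cdot)_{\rm syn}$, and then to control the resulting classical-type Markov dynamics on the syndrome factor. I would fix a spanning tree $T^*$ of the dual lattice and a spanning tree $T$ of the primal lattice of the torus; since $H_1(\TT^2;\ZZ_2)$ is two-dimensional, exactly two non-tree edges in each lattice give rise to non-contractible fundamental cycles, yielding four distinguished edges $j_1^x,j_2^x,j_1^z,j_2^z$. My jump set is $\{p_i\}=\{\sigma^x_j:j\neq j_1^x,j_2^x\}\cup\{\sigma^z_j:j\neq j_1^z,j_2^z\}$, of cardinality $2N-4$, with $\sigma^y_j$ omitted so as to respect the CSS splitting. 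For each syndrome $s$ I would define $P_s$ as the product of $\sigma^x$'s along tree paths in $T^*$ from a chosen root to each plaquette defect of $s$, times the analogous $\sigma^z$-product on $T$ for the vertex defects; the map $P_s|\ell\rangle_{\rm logic}\mapsto|\ell\rangle_{\rm logic}\otimes|s\rangle$ then provides the required isomorphism with $\mc{H}_{\rm syndrome}\cong\CC^{2^{N-2}}$.

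The key algebraic step is to check that for every $p$ in the jump set that maps syndrome $s$ to $s'=s\triangle\{q_1,q_2\}$, the composite $p\,P_s\,P_{s'}^{-1}$ is a Pauli chain whose underlying cycle lies entirely in a spanning tree together with one non-distinguished non-tree edge, hence is homologically trivial and is a product of stabilizers. Acting on any code state it therefore equals $+I$, so $p=I_{\rm logic}\otimes\widetilde p_{\rm syn}$. Since the spectral projectors of $H$ preserve the decomposition (the Hamiltonian is a sum of stabilizers and acts as the identity on the logical factor), the Bohr-frequency components $p_\omega$ inherit the factored form, and consequently every Davies dissipator $\mc{L}_p$ takes the form $I_{\rm logic}\otimes\mc{L}^{\rm syn}_p$. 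Summing over the jump set yields the invariance assertion $\mc{L}_{\rm local}=I_{\rm logic}\otimes\mc{L}^{\rm syn}_{\rm local}$ stated in the proposition.

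For the spectral gap on the syndrome factor, I would observe that $\mc{L}^{\rm syn}_{\rm local}$ decouples by CSS into independent dynamics on $X$-type and $Z$-type defects; each is a classical continuous-time Markov chain on even-parity defect configurations with only three elementary transitions --- pair creation at rate $\gamma(4)$, one-step hopping at rate $\gamma(0)$, and pair annihilation at rate $\gamma(-4)$. The $\exp(-\Or(\beta))$ bound follows from the standard Poincar\'e inequality for local Davies generators applied to this reduced chain (and can be inherited from \cite{Alicki_2009} once one knows the logical sector has been factored out). For the dominant $\Omega(N^{-2})$ bound at low temperature, I would run a canonical-paths argument that routes each configuration to the zero-defect configuration by optimally pairing defects, walking one partner to meet the other along hopping transitions (of length $\Or(\sqrt N)$ each), and pair-annihilating; crucially, no creation transitions are used, so the rates along all paths are $\Theta(1)$ independent of $\beta$, and a standard congestion count yields the polynomial spectral gap.

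The main obstacle is the topological bookkeeping underlying the factorization: the assertion that $p\,P_s\,P_{s'}^{-1}$ is always a stabilizer breaks for exactly the four excluded Paulis, for which the induced cycle is one of the non-contractible homology generators and therefore realizes a logical $\bar X$ or $\bar Z$ operation rather than the identity. One must verify that after removing these four jumps the effective graph on defect configurations is still connected and that the canonical-paths construction never calls on them, and that the Gibbs measure on the syndrome factor is unique. Once this tree-based gauge-fixing is handled correctly, the remaining spectral analysis is a classical Markov-chain exercise, but one must still take care at low temperature that the routed paths never traverse the exponentially suppressed pair-creation transitions, since otherwise the congestion bound would collapse.
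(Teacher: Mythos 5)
Your proposal takes a genuinely different route from the paper (a spanning--tree gauge fixing with a jump set of size $2N-4$ including non-tree edges, versus the paper's snake/comb construction which uses only the $N-2$ tree edges), but there are two real issues.

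First, there is a concrete error in the topological step. The inference ``since $H_1(\TT^2;\ZZ_2)$ is two-dimensional, exactly two non-tree edges in each lattice give rise to non-contractible fundamental cycles'' is false. The linear map $e\mapsto[C_e]\in H_1(\TT^2;\ZZ_2)$ from the $L^2+1$ non-tree edges onto $\ZZ_2^2$ has an $(L^2-1)$-dimensional kernel, but this kernel need not be spanned by coordinate vectors; for a generic spanning tree, many non-tree edges can carry non-contractible fundamental cycles. Each such edge must be excluded from the jump set, because for it $\sigma^x_j P_s P_{s'}^{-1}$ acts on code states as a nontrivial logical $\bar X^{[C_j]}$ rather than the identity, which breaks the asserted factorization $p=I_{\rm logic}\otimes\widetilde{p}_{\rm syn}$ and thus the invariance $\mc{L}_{\rm local}\bigl(\mc{B}(\mc{H}_{\rm logic})\otimes\mc{B}(\mc{H}_{\rm syndrome})\bigr)=\mc{B}(\mc{H}_{\rm logic})\otimes\mc{L}_{\rm local}\bigl(\mc{B}(\mc{H}_{\rm syndrome})\bigr)$. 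So the jump set is either smaller than $2N-4$ or the tree must be specially engineered; the paper sidesteps this entirely by using \emph{only} tree edges (the snake and comb), whose associated cycles are automatically trivial.

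Second, the spectral-gap argument is only sketched, and it glosses over what is arguably the paper's main technical contribution. For $\beta\gtrsim\ln N$ the stationary measure on the syndrome factor concentrates at the zero-defect configuration, and the quantity one must control is the bottom eigenvalue of the master Hamiltonian restricted to nontrivial defect sectors --- a \emph{killed}/sub-Markovian operator --- not the Poincar\'e constant of an ergodic chain. The paper handles this by passing to the $\beta=\infty$ limit (which is well defined at the master-Hamiltonian level), decomposing by defect number, reducing to the two-defect sector via the iteration in~\eqref{eqn:iteration}, and proving a sharp $\Theta(n^{-2})$ bound on a perturbed graph Laplacian (the stair-graph Laplacian plus a diagonal pair-annihilation term, \cref{thm:graph_spectral}). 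Your ``route to the zero-defect state using only $\Theta(1)$-rate hops and annihilations'' has the right intuition --- it is effectively a Dirichlet-eigenvalue or escape-rate estimate --- but the congestion bound is not carried out, the defect-number reduction is not addressed, and the claimed $\Or(\sqrt N)$ path lengths rely on your larger two-dimensional jump set (the paper's quasi-one-dimensional snake/comb has distances $\Or(N)$, and the tight $\Theta(N^{-2})$ exponent comes precisely from that). Calling this step ``a classical Markov-chain exercise'' underestimates its difficulty and novelty.
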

The above proposition implies that, even in the absence of the global jump operator, the system thermalizes quickly within the syndrome space. This is a key step towards proving \cref{thm:fast_mixing_2D}; see \cref{sec:technical_overview} for further details. Here the operator $\mathcal{L}_{\text{local}}$ in \cref{prop:syndrome} is carefully designed so that analyzing syndrome mixing reduces to studying the spectral gap of two distinct Glauber dynamics with quasi-1D classical Ising Hamiltonians (precisely, on the ``snake'' and ``comb'', see \cref{fig:2d_snake_comb}) at low temperature. Then, to derive lower bounds of their spectral gaps,  we introduce a new iterative method and establish a new tight estimate on the minimal eigenvalue of the perturbed graph Laplacian of the stair graph. To the best of our knowledge, this result concerning low-temperature thermalization for such quasi-1D classical Ising models is also novel in the literature.

\subsection{Implications}


\paragraph{Mixing time for preparing low-temperature Gibbs state.}
In this paper, we mainly focus on estimating the spectral gap of the Lindblad generator. It is well known that a lower bound on the spectral gap provides an upper bound on the mixing time of the Lindblad dynamics~\cite{TemmeKastoryanoRuskaiEtAl2010}. Specifically, for a detailed balanced Lindblad generator $\mathcal{L}$ with a spectral gap lower bounded by $\alpha$ and the unique fixed point $\si_\beta$, the following holds:
\[
\|e^{t \mathcal{L}^\dagger}\rho-\sigma_\beta \|^2_{\mathrm{tr}} \leq
\chi^2(e^{t \mathcal{L}^\dagger}\rho, \si_\beta) \le \chi^2(\rho, \si_\beta)
e^{- 2 \alpha t},
\]
where $\chi^2(\rho, \si_\beta)  = \mathrm{Tr}[(\rho-\sigma_\beta) \sigma_\beta^{-1/2}(\rho-\sigma_\beta) \sigma_\beta^{-1/2}]$ is the $\chi^2$-divergence. Notice that
\[
\max_\rho \chi^2(\rho, \si_\beta) \le (\lad_{\min}(\si_\beta))^{-1} \leq 2^{N}e^{\beta\|H\|}\,,
\]
with $\lad_{\min}(\dd)$ denoting the minimal eigenvalue. For 2D toric code Hamiltonian, we have the operator norm $\|H\| = \mc{O}(N)$. Thus, it readily gives an upper bound of the mixing time $t_{\mathrm{mix}}(\epsilon):=  \{t\ge 0\,;\  \|e^{t \mathcal{L^{\dag}} } \rho -\sigma_\beta \|_{\rm tr} \leq \epsilon, \  \forall \, \text{quantum states $\rho$}\}$ for a fixed $\epsilon$:
\[
t_{\rm mix}(\epsilon)=\mathcal{O}\left(\REV{\frac{N}{\alpha}\left(1 + \beta\right)}\right)\,.
\]
By substituting the spectral gap estimate from \cref{thm:fast_mixing_2D}, the mixing time of $\mc{L}_\beta$ in \cref{eqn:fast_mixing_Davies} scales as $\mathcal{O}\left(\min\left\{\beta\mathrm{poly}(N),\exp(c\beta)(N+\beta)\right\}\right)$ for some universal constant $c$, which is a significant improvement over the $\mathcal{O}\left(\exp(c\beta)(N+\beta)\right)$ bound given in~\cite{Alicki_2009} when $\beta\gg 1$. It is worth mentioning that in both cases, the mixing time scales polynomially with the number of qubits $N$, which is referred to as \emph{fast mixing} in the literature. In our case, the $N$ dependence arises from both the spectral gap and the prefactor $\max_\rho \chi^2(\rho, \si_\beta)$. While the latter dependence can sometimes be improved to $\mc{O}(\log(N))$ by considering the relative entropy distance~\cite{BardetCapelGaoEtAl2023,Entropy_2024,kochanowski2024}, achieving the \emph{rapid mixing} regime. This would also require translating the spectral gap lower bound into a constant or $\Omega(\log(N)^{-1})$ lower bound for the modified logarithmic Sobolev inequality (MLSI) constant. This presents an interesting direction for future exploration.


\paragraph{Thermal state versus ground state preparation.}  The 2D toric code is a stabilizer Hamiltonian, and its ground state can be efficiently prepared using local measurements. Recently, Hwang and Jiang~\cite{hwang2024gibbsstatepreparationcommuting} developed an efficient quantum protocol for preparing Gibbs states of the 2D toric code in  $\mathcal{O}(N^2)$ time uniformly in any positive temperature, and the approach can be generalized to the defected toric code. It constructs quantum Gibbs states by reducing the problem to Gibbs sampling of a classical local Hamiltonian. The goal of these protocols is to prepare ground or thermal states, starting from specific states such as tensor product states or maximally mixed states. In contrast, the goal of this work is different: it investigates whether dissipative dynamics can drive any initial state toward thermal equilibrium, analogous to certain dissipative state engineering approaches used for ground state preparation (see, e.g.,~\cite{VerstraeteWolfCirac2009}).


 However, these approaches do not easily generalize to algorithms for thermal state preparation. The fast mixing result at low temperatures in this work suggests that thermal state preparation methods can also serve as a generic tool for ground state preparation by selecting a sufficiently low temperature. Specifically, as discussed in \cref{prop:syndrome}, the local Davies generator within the syndrome space has a spectral gap that decays only polynomially with system size as \(\beta\) approaches infinity. This implies that at fixed sufficiently low temperature (\(\beta \gg 1\)), once a quasi-particle pair (an elementary excited state of the 2D toric code, see \cref{sec:2D_ground_state}) appears in the system, the local Davies generator can eliminate it in \(\text{poly}(N)\) time. Consequently, if the goal is to efficiently prepare \emph{some} ground state while discarding logical information, it is sufficient to use local Davies generators, and the time required is much shorter than that for equilibrating all logical sectors using local Davies generators.

\paragraph{Fast thermalization from a ground state.}  A natural question regarding the thermalization of the toric code Hamiltonian is: if one starts from the ground state, doesn't it take exponential time in $\beta$ to create even one quasi-particle pair in the first place?  While this is correct and may seem paradoxical, it does not contradict our main result that the thermalization time scales polynomially in $\beta$.
The reason is that although creating one quasi-particle pair from the ground state takes an exponential amount of time in $\beta$, the fraction of the excited state in the thermal state is exponentially small in $\beta$. In the case of 2D toric code,
the slow thermalization of local Davies generators at low temperatures mainly stems from the need to achieve an equal population across all ground states.  In other words, the challenge of thermalization lies mainly in transitioning between orthogonal ground states, as discussed in \cref{sec:contri}. In our work, we prove that the spectral gap of the global jump operators restricted to the logical space is independent of $\beta$ at low temperatures. This allows the system to equilibrate rapidly, even starting from a ground state.

\paragraph{Fast thermal state preparation,  quantum memory, and cellular automaton.}
A good \emph{self-correcting quantum memory} (SCQM) should be able to store a quantum state in contact with a cool thermal bath for a duration that increases exponentially with the size of the system.
This is sometimes also referred to as passively protected quantum memories, autonomous quantum error correction, or autonomous quantum memory protection~\cite{BarnesWarren2000,LeghtasKirchmairVlastakisEtAl2013,LieuLiuGorshkov2024}.
If the thermalization time of a quantum system only scales polynomially with system size, it cannot be considered a viable candidate for SCQM. \cite{Alicki_2009} demonstrated that at any constant temperature, the 2D toric code has a spectral gap that remains independent of system size and thus is not a good SCQM, and to date, valid candidates for SCQM are only known in 4D or higher~\cite{Alicki_2010}.
Our refined estimate indicates that local Davies generators at low temperatures (or even at zero temperature) can be efficient in annihilating quasi-particles in the syndrome space.

However, can we \emph{digitally} implement a low-temperature Davies generator to protect the logical information in the 2D toric code? Specifically, consider a Lindbladian $\mc{L}=\mc{L}_e+\mc{L}_r$ where $\mc{L}_e$ is a Davies generator modeling thermal noise in nature at some finite temperature $\beta^{-1}$, and $\mc{L}_r$ is a digitally implemented Davies generator at near-zero temperature, and the number of terms (each of up to unit strength) in $\mc{L}_r$ can scale polynomially in $N$. Starting from a pure ground state $\rho_0$ carrying well-defined logical information, we run the dynamics for some fixed time $t$, and then apply a single round of decoding map $\mc{E}_d$ to obtain a final density matrix $\rho_f$. Can we ensure that the trace distance between $\rho_0$ and $\rho_f$  decreases super-polynomially in $N$?
This procedure is similar to continuous-time recovery with time-independent control (also called continuous-time quantum error correction; see e.g.,~\cite{ChaseLandahlGeremia2008,KwonMukherjeeKim2022}). This task appears to be highly challenging.  The reason is that once $\mc{L}_e$ creates a syndrome in the form of a quasi-particle excitation, this quasi-particle can be diffused by either $\mc{L}_e$ or $\mc{L}_r$ at zero energy cost.
As a result, in the worst case, a quasi-particle may diffuse across the torus in polynomial time before it is annihilated, which changes the logical information.

The problem of dissipative error correction is closely related to the design of a local cellular automaton decoder for the 2D toric code, where each qubit updates its state based only on the states of its neighboring qubits and available syndrome information, following deterministic or probabilistic update rules. This process is inherently parallelizable.
Interestingly, a recent work by \cite{balasubramanian2024local} demonstrates that such a local cellular automaton decoder is indeed feasible, employing a hierarchical construction to address the challenge posed by diffusive quasi-particle excitations. This suggests that Lindblad dynamics incorporating a more complex global jump operator, designed to emulate this cellular automaton decoder, may also enable passive error correction for the 2D toric code.




\subsection{Related works}

The thermalization of stabilizer Hamiltonians using a local Davies generator has been explored in several prior works~\cite{Alicki_2009,temme2015faststabilizer,Temme_2017,PhysRevB.90.134302,RevModPhys.88.045005,Freeman_2018}. In \cite{Alicki_2009}, the authors demonstrated that the local Davies generator achieves fast thermalization for the 2D toric code. In particular, the spectral gap of the Davies generator, when considering all local Pauli coupling operators, is lower bounded by $\exp(-\Theta(\beta))$. The bound is valid
 for all temperatures and results in a mixing time (defined via the trace distance) scaling as $t_{\rm mix}=N\exp(\Theta(\beta))$. Along the same direction, \cite{temme2015faststabilizer,Temme_2017} considered general stabilizer codes and established a lower bound on the spectral gap  using the generalized energy barrier $\overline{\epsilon}$~\cite[Definition 13]{Temme_2017}. Specifically, for any given $\beta > 0$, the spectral gap can be lower bounded by $C_N\exp\left(-\beta \overline{\epsilon}\right)$, where $C_N$ is a technical constant that typically scales as $1/N$.
Although these works address thermalization across all temperatures, these lower bounds on the spectral gap are insufficient for efficiently preparing low-temperature thermal states, as the gap decays exponentially in \(\beta\). Specifically, when using a local Davies generator to transition between ground states along a local Pauli path, the energy must first increase, requiring the dynamics to overcome the energy barrier to fully thermalize. Furthermore, as analyzed in~\cite{Temme_2017, kastoryano2024}, an exponentially small spectral gap of order $\exp(-\Theta(\beta))$ in a local Davies generator appears inevitable when such energy barriers are present.

In our work, to overcome the bottleneck posed by the energy barrier,
we modify the local Davies generator by incorporating appropriate global coupling operators that can directly connect the degenerate ground states, thereby avoiding the issues associated with the generalized energy barrier defined by local Pauli paths. By refining the analysis in~\cite{Alicki_2009}, we demonstrate that the resulting dynamics exhibit a spectral gap that decays polynomially with the size of the system but remains \emph{independent of} $\beta$, ensuring fast mixing even at low temperatures. We note that the polynomially decaying spectral gap primarily arises from the mixing rate within the syndrome space, which remains unaffected by the introduction of the new global jump operators acting on the logical space. Consequently, our refined mixing time estimate rigorously demonstrates that the local Davies generator at low temperatures (and even at zero temperature) can efficiently annihilate quasi-particles residing in the syndrome space. A similar phenomenon has been numerically observed in the study of quantum memory~\cite{PhysRevB.90.134302,RevModPhys.88.045005,Freeman_2018}.

There is extensive literature on the mixing properties of local Davies generators for general local commuting Hamiltonians~\cite{KastoryanoBrandao2016,BardetCapelGaoEtAl2023,kochanowski2024,Entropy_2024}. However, most studies consider a fixed finite temperature $\beta = \mathcal{O}(1)$~\cite{KastoryanoBrandao2016,BardetCapelGaoEtAl2023,Entropy_2024,kochanowski2024}, particularly in the context of 1D local commuting Hamiltonians, where the mixing time implicitly depends on $\beta$, or a high temperature $\beta \ll 1$~\cite{KastoryanoBrandao2016,kochanowski2024}. Extending these general approaches to low-temperature thermal state preparation and explicitly calculating the temperature dependence remains an interesting and challenging problem.


There is also a long line of works studying the fast mixing and rapid mixing of different types of classical Markov chains for spin systems. However, many classical results suffer from \emph{low-temperature bottlenecks}: the dynamics can mix in polynomial time only when the inverse temperature $\beta$ is below some threshold $\beta_c$. For example, it is well-known that the Glauber dynamics for the ferromagnetic Ising model with $N$ spins on a $d$-dimensional lattice has mixing time $\Theta(N\log N)$ when $\beta < \beta_c(d)$ \cite{mo94,ls16}, and $e^{\Theta(N^{1-1/d})}$ when $\beta>\beta_c(d)$ \cite{pis96}, where $N$ is the total number of spins and the constant in the $\Theta$ notation depends on $\beta$. Some classical results managed to overcome this bottleneck by carefully designing the initial distribution \cite{gs22}, or studying other Markov chains or other graphical models \cite{js93,gj17,gsv19,hpr19,bcp21,cgg21}. However, we note that these classical techniques cannot be applied directly to obtain our results.

\subsection{Organization}

In the following part of the paper, we start with a technical overview of the proof of \cref{thm:fast_mixing_2D} in \cref{sec:technical_overview}. Then, we provide a brief introduction to the Davies generator and properties of its spectral gap in \cref{sec:preliminary}. The detailed introduction to the 2D toric code and its proof can be found in \cref{sec:2D_toric}. Additionally, in \cref{sec:1d_ising}, we discuss a simpler case of the 1D ferromagnetic Ising chain for completeness.


\subsection*{Acknowledgment}

This work is supported by the National Key R$\&$D Program of China Grant No. 2024YFA1016000
(B.L.), the U.S. Department of Energy, Office of Science, National Quantum Information Science Research Centers, Quantum Systems Accelerator (Z.D.),  by the Challenge Institute for Quantum Computation (CIQC) funded by National Science Foundation (NSF) through grant number OMA-2016245 (L.L.), and
by DOE Grant No. DE-SC0024124 (R.Z.).
L.L. is a Simons Investigator in Mathematics.
We thank Garnet Chan, Li Gao, Jiaqing Jiang, Yunchao Liu, and John Preskill for helpful discussions and feedback.

\section{Technical overview}\label{sec:technical_overview}

In this section, we provide a technical overview of the proof of \cref{thm:fast_mixing_2D}. In the analysis, there are three main steps:
\begin{enumerate}[(a)]
     \item \label{a} Decompose $\mathcal{H}$ into logic and syndrome subspaces and the observable algebra $\mathcal{B}(\mc{H})$ correspondingly.
  \item \label{b} Demonstrate efficient transition between logic subspaces.
  \item \label{c} Demonstrate fast mixing inside the syndrome subspace.
\end{enumerate}

The decomposition in the first step leverages the special structure of the stabilizer Hamiltonian, following the approach outlined in previous work by~\cite{Alicki_2009}. Specifically, for the 2D toric code, we decompose $\mc{H} = \mc{H}_{\rm logic} \otimes \mc{H}_{\rm syndrome}$ according to the eigendecomposition of $H^{\rm toric}$. Since $H^{\rm toric}$ has a four-dimensional ground state space, it encodes two logical qubits, making $\dim(\mc{H}_{\rm logic}) = 4$. The syndrome subspace $\mc{H}_{\rm syndrome}$ is then spanned by the electric and magnetic excited states, which are characterized by the bond configurations of the local observables in $H^{\rm toric}$.
That is, it can be further decomposed into the electric and magnetic excited subspaces $\mc{H}_{\rm syndrome}=\mc{H}^{\rm m}_{\rm b}\otimes \mc{H}^{\rm e}_{\rm b}$. According to the decomposition of $\mc{H}$, we can naturally decompose the observable algebra $\mc{B}(\mc{H}) = \mc{Q}_1 \otimes \mc{Q}_2 \otimes \mc{A}^{\rm full}_{\rm m} \otimes \mc{A}^{\rm full}_{\rm e}$, where $\mc{Q}_1 \otimes \mc{Q}_2$ is generated by the logical operators, i.e., the global operators $\xx_1, \zz_1, \xx_2, \zz_2$ appearing in \eqref{eqn:fast_mixing_Davies}. The syndrome algebras $\mc{A}^{\rm full}_{\rm m}$ and $\mc{A}^{\rm full}_{\rm e}$ are spanned by linear transformations acting on the syndrome subspaces $\mc{H}^{\rm m}_{\rm b}$ and $\mc{H}^{\rm e}_{\rm b}$, respectively. Here, the $\otimes$ symbol represents multiplication between commuting matrices, and every element in $\mc{Q}_1, \mc{Q}_2, \mc{A}^{\rm full}_{\rm m}$, and $\mc{A}^{\rm full}_{\rm e}$ is understood as a matrix defined over the entire Hilbert space $\mc{H}$. We put the detailed discussion of the above decomposition in \cref{sec:2D_ground_state}.

After decomposing $\mc{B}(\mc{H})$, we can further decompose the Davies generator \eqref{eqn:fast_mixing_Davies}:
\[\mc{L}_\beta = \underbrace{\mc{L}_{\rm local}+\mc{L}_{\rm global}}_{:=\mc{L}^{\rm gapped}}+\mc{L}^{\rm rest}\,,\]
with $\mc{L}^{\rm gapped}$ defined in \eqref{eqn:L_gapped} and $\mc{L}^{\rm rest}=\mc{L}_{\text{\rm local full}}-\mc{L}_{\rm local}$. Here we can show that $\mc{L}^{\rm gapped}$ is ergodic, i.e.,  $\mathrm{Ker}(\mc{L}^{\rm gapped})={\rm Span}\{\mi\}$.  By \cref{lem1} (item 2), we can lower bound $\mathrm{Gap}(\mc{L}_\beta)$ by $\mathrm{Gap}(\mc{L}^{\rm gapped})$. More importantly, the generator $\mc{L}^{\rm gapped}$ is block diagonal with respect to the following decomposition:
\begin{equation*}
\mc{B}(\mc{H})=\bigoplus_{B_1\in \{\mi,\xx_1, \yy_1,
\zz_1\},B_2\in \{\mi,\xx_2,\yy_2,\zz_2\}}\mc{B}_{B_1,B_2}\,,
\end{equation*}
where $\mc{B}_{B_1,B_2}=B_1\otimes B_2\otimes \mathcal{A}^{\rm full}_{\rm m}\otimes \mathcal{A}^{\rm full}_{\rm e}$ and the operators $\xx_i, \yy_i, \zz_i$ are given in \eqref{eqn:2D_global_X_Z}. Because $\mathrm{Ker}(\mc{L}^{\rm gapped})= {\rm Span}\{\mi\}$ and $\mc{L}^{\rm gapped}$ is block diagonal, we have
\begin{multline} \label{eqn:Gap_inequality}
\mathrm{Gap}\left(-\mc{L}_\beta\right)\geq \mathrm{Gap}\left(-\mc{L}^{\rm gapped}\right) \\ \geq \min\left\{\mathrm{Gap}\left(-\mc{L}^{\rm gapped}\middle|_{\mc{B}_{\mi,\mi}}\right),\lambda_{\min}\left(-\mc{L}^{\rm gapped}|_{\mc{B}_{B_1,B_2}}\right)\middle|B_1\neq \mi\ \text{or}\ B_2\neq \mi\right\}\,.
\end{multline}
Thus, for a lower bound estimate of $\mathrm{Gap}(-\mc{L}^{\rm gapped})$, it suffices to consider  $\mathrm{Gap}(-\mc{L}^{\rm gapped}|_{\mc{B}_{\mi,\mi}})$ and $\lad_{\min}(-\mc{L}^{\rm gapped})|_{\mc{B}_{B_1,B_2}}$ with $B_1\neq \mi\ \text{or}\ B_2\neq \mi$. The latter term characterizes the transition rate between different logical subspaces. If $\mc{L}_\beta$ contains only local coupling operators, the system requires a long evolution time to overcome the energy barrier and to transit between different logical subspaces, which leads to a slow mixing time scaling as $\exp(\Theta(\beta))$ according to~\cite{Alicki_2009}. In our work, an important observation is that the global logical operators $\xx_1, \zz_1, \xx_2, \zz_2$ can directly flip the logical qubits, enabling transitions between different logical subspaces without the need to overcome the energy barrier. The efficient transition in logic density space implies a fast decaying of $\exp(t \mc{L}^{\rm gapped})$ in the logic subspace $\{\mc{B}_{B_1,B_2}\}_{B_1\neq \mi\ \text{or}\ B_2\neq \mi}$ and a lower bound of $\{-\mc{L}^{\rm gapped}|_{\mc{B}_{B_1,B_2}}\}_{B_1\neq \mi\ \text{or}\ B_2\neq \mi}$. Specifically, in \cref{prop:second_step} in \cref{sec:gibbs_2d}, we show
\begin{equation}\label{eqn:Gap_inequality_2}
  -\mc{L}^{\rm gapped}|_{\mc{B}_{B_1,B_2}}\succeq \mathrm{Gap}\left(-\mc{L}_{\rm local}\middle|_{\mc{B}_{\mi,\mi}}\right)\,.
\end{equation}

According to the above analysis, the remaining thing to lower bound the spectral gap of $\mc{L}_\beta$ is to study the spectral gap of $\mc{L}_{\rm local}$ on $\mc{B}_{\mi,\mi}$. Roughly speaking, this requires us to prove \cref{prop:syndrome}. A similar task is done in~\cite{Alicki_2009}. However, we emphasize that the lower bound and proof technique in~\cite{Alicki_2009} is \textbf{not} suitable for our purpose. In~\cite[Proposition 2]{Alicki_2009}, while the spectral gap is independent of the system size, it decays as $\exp(-\Theta(\beta))$, indicating slow mixing in the syndrome subspace at low temperatures. A main contribution of this work is to
 show that this lower bound is not tight when $\beta \gg 1$, and the system actually mixes fast in the syndrome subspace at low temperatures. To this end, we develop a new iteration argument, a decomposition trick, and a tight estimation of the minimal eigenvalue of a perturbed graph Laplacian
 to prove that $\mathrm{Gap}\left(\mc{L}_{\rm local}\middle|_{\mc{B}_{\mi,\mi}}\right)\geq \min\left\{\exp(-\Theta(\beta)),\mathrm{poly}(1/N)\right\}$, which provides a much sharper lower bound of the spectral gap in the lower temperature regime. This result is summarized in \cref{prop:gap_quasi_1D} in \cref{sec:gap_quasi_1D}, which also provides a proof of \cref{prop:syndrome}. Plugging this into \eqref{eqn:Gap_inequality} and \eqref{eqn:Gap_inequality_2}, we can conclude \cref{thm:fast_mixing_2D}.








\section{Preliminaries}\label{sec:preliminary}



Let $H$ be a quantum many-body Hamiltonian on $\mc{H} \cong \C^{2^N}$ and $\si_\beta = e^{- \beta H}/\mc{Z}_\beta$ be the associated thermal state, where $\beta$ is the inverse temperature and $\mc{Z}_\beta = \tr(e^{- \beta H})$ is the partition function. In this section, we shall recall the canonical form of the Davies generator with $\si_\beta$ being the invariant state and some basic facts for its spectral gap analysis.

A superoperator $\Phi: \bh \to \bh$ is a quantum channel if it is completely positive and trace preserving (CPTP).
Lindblad dynamics \REV{$\mathcal{P}^\dagger_t$} is a $C_0$-semigroup of quantum channels with the generator defined by $\mc{L}^\dag(\rho): = \lim_{t \to 0^+} t^{-1}(\mc{P}^\dag_t (\rho) - \rho)$ for $\rho \in \mc{D}(\mc{H})$. Here and in what follows, we adopt the convention that the adjoint operators (i.e., those with $\dag$) are the maps in Schr\"{o}dinger picture acting on quantum states. Both generators $\mc{L}$ and $\mc{L}^\dag$ are usually referred to as Lindbladian. Davies generator is a special class of Lindbladians derived from the weak coupling limit of open quantum dynamics with a large thermal bath.

We first introduce the Bohr frequencies of $H$ by
\begin{equation*}
    B_H := \{\ww = \lad_i - \lad_j\,:\ \forall~\lad_i,\lad_j \in {\rm Spec}(H)\}\,,
\end{equation*}
where ${\rm Spec}(H)$ is the spectral set of $H$. Let $\{S_a\}_{a \in \mc{A}}$ be a set of coupling operators with $\mc{A}$ being a finite index set that satisfies
\begin{equation}\label{eq:Sa_self_adjoint}
    \{S_a\}_{a \in \mc{A}} = \{S_a^\dag\}_{a \in \mc{A}}\,.
\end{equation}
The jump operators $\{S_a(\ww)\}_{a,\ww}$  for the Davies generator are given by the Fourier components of the  Heisenberg evolution of $S_a$:
\begin{equation} \label{eq:jumpfou}
    e^{i H t} S_a e^{- i H t} = \sum_{\ww \in B_H} e^{i \ww t} \sum_{\lambda_i-\lambda_j=\omega} P_iS_aP_j:=\sum_{\ww \in B_H} e^{i \ww t} S_a(\ww)\,.
\end{equation}
where $P_{i/j}$ is the projection into the eigenspace $\lambda_{i/j}$.
By~\eqref{eq:Sa_self_adjoint}, we have $S_a(\omega)^\dagger = S_a(-\omega)$ for any $\omega\in B_H$.

We then introduce the Davies generator in the Heisenberg picture:
\begin{equation} \label{eqq:davies}
    \mc{L}_{\beta}(X) := \sum_{a\in \mc{A}} \mc{L}_{S_a}(X)\,,\q X \in \mc{B}(\mc{H})\,,
\end{equation}
with
\begin{equation} \label{eqq:davies2}
     \mc{L}_{S_a}(X): = \sum_{\ww \in B_H} \gamma_a(\ww) \left(S_a(\ww)^\dag\,X\,  S_a(\ww) - \frac{1}{2}\left\{S_a(\ww)^\dag  S_a(\ww), X\right\}  \right)\,,
\end{equation}
where the transition rate function $\gamma_a(\ww) > 0$ is given by the Fourier transform of the bath autocorrelation function satisfying the KMS condition \cite{kossakowski1977quantum}:
\begin{equation*}
    \gamma_a(-\ww) = e^{\beta \ww} \gamma_a(\ww)\,.
\end{equation*}
In this work, we always choose the transition rate function $\gamma_a(\ww)$ as the Glauber form:
\begin{equation}\label{eqn:Glauber}
    \gamma_a(\ww) = \frac{2}{e^{\beta \ww} + 1}\,,
\end{equation}
For later use, we define $g_a(\ww) := e^{\beta \ww/2} \gamma_a(\ww)$ and find $g_a(\ww) = g_a(-\ww)$. Then, letting
\begin{equation} \label{eq:la}
L_a(\ww) = \sqrt{g_a(\ww)} S_a(\ww)\,,
\end{equation}
we reformulate the Davies generator \eqref{eqq:davies}--\eqref{eqq:davies2} as follows:

\begin{equation} \label{eq:davies2}
    \begin{aligned}
        \mc{L}_\beta(X)
        & = \sum_{a \in \mc{A}} \sum_{\ww \in B_H} e^{-\beta \ww/2} \Big(L_{a}(\ww)^\dag X L_{a}(\ww) - \frac{1}{2}\left\{L_{a}(\ww)^\dag L_{a}(\ww), X\right\} \Big) \\
        & = \frac{1}{2}\sum_{a \in \mc{A}} \sum_{\ww \in B_H} e^{-\beta \ww/2} L_{a}(\ww)^\dag [X, L_{a}(\ww)] + e^{\beta \ww/2} [L_{a}(\ww), X] L_{a}(\ww)^\dag \\
        & = \frac{1}{2}\sum_{a \in \mc{A}} \sum_{\ww \in B_H} \gamma_a(\ww) S_{a}(\ww)^\dag [X, S_{a}(\ww)] + \gamma_a(-\ww) [S_{a}(\ww), X] S_{a}(\ww)^\dag\,.
    \end{aligned}
\end{equation}
where the second step follows from
\begin{align*}
    L_a(\omega)^\dagger=e^{\beta\omega/4}\sqrt{\gamma_a(\omega)}S_a(-\omega) = e^{\beta\omega/4}\sqrt{e^{-\beta \omega}\gamma_a(-\omega)}S_a(-\omega) = L_a(-\omega)\,.
\end{align*}

We now define the GNS inner product associated with the Gibbs state $\si_\beta$:
\begin{equation}\label{eqn:GNS_inner_product}
    \l Y, X \r_{\si_\beta} = \tr(Y^\dag X \si_\beta)\,.
\end{equation}
It is known \cite{kossakowski1977quantum,ding2024efficient} that the Davies generator satisfies the GNS detailed balance:
\begin{equation*}
     \l Y, \mc{L}_\beta(X) \r_{\si_\beta} =  \l \mc{L}_\beta(Y), X\r_{\si_\beta}\,,
\end{equation*}
and thus the associated Lindblad dynamics $e^{t \mc{L}_\beta^\dag}$ admits $\si_\beta$ as an invariant state, i.e.,
\begin{equation*}
 \mc{L}_\beta^\dag (\si_\beta) = 0\,.
\end{equation*}
It follows that $\mc{L}_\beta$ is similar to a self-adjoint operator for the HS inner product, called the \emph{master Hamiltonian}, and has only the real spectrum. To be precise, we introduce the transform $\vp_X:= X {\si_\beta}^{1/2}$, which gives $\l Y, X \r_{\si_\beta} = \l \vp_Y, \vp_X \r$. Then, the master Hamiltonian $\w{\mc{L}}_\beta$ is given by the similar transform of $\mc{L}_\beta$ via $\vp_X$:
\begin{equation} \label{eq:master_hami}
    \w{\mc{L}}_\beta := \vp \circ \mc{L}_\beta \circ \vp^{-1}\,,
\end{equation}
satisfying $\l Y, \mc{L}_\beta (X)\r_{\si_\beta} = \l \vp_Y, \w{\mc{L}}_\beta \vp_X\r$. It is easy to see that $- \w{\mc{L}}_\beta$ is positive semi-definite and $\sqrt{\si_\beta}$ is the zero-energy ground state of $-\w{\mc{L}}_\beta$. Thus, the spectral gap of $\mc{L}_\beta$ is the same as the ground state spectral
gap of the Hamiltonian $- \w{\mc{L}}_\beta$.

We say that $\mc{L}_\beta$ is primitive if $\si_\beta$ is the unique invariant state; equivalently, the kernel $\ker(\mc{L}_\beta)$ is of one dimension, spanned by $\mi$. In this case, we have
\begin{align*}
    \lim_{t \to \infty} e^{t \mc{L}_\beta^\dag}(\rho) = \si_\beta\,,\q \forall \rho \in \mc{D}(\mc{H})\,,
\end{align*}
and the spectral gap ${\rm Gap}(\mc{L}_\beta)$ of the primitive Davies generator can be characterized by the variational form:
\begin{equation*}
    {\rm Gap}(\mc{L}_\beta) = \inf_{X \neq 0\,, \tr(\si_\beta X) = 0} \frac{\l X, -\mc{L}_\beta (X) \r_{\si_\beta}}{\l X, X\r_{\si_\beta}}\,.
\end{equation*}
Moreover, thanks to the detailed balance, the operator norm of $\mc{L}_\beta$ can be computed by
\begin{equation} \label{eq:supernorm}
    \norm{\mc{L}_\beta}_{\si_\beta \to \si_\beta} = \sup_{X \neq 0} \frac{\l X, - \mc{L}_\beta (X) \r_{\si_\beta}}{\l X, X\r_{\si_\beta}}\,.
\end{equation}

By \cite{wolf5quantum,zhang2023criteria}, a sufficient and necessary condition for the primitivity of $\mc{L}_\beta$ is the $\C$-algebra generated by all the jump operators $\{S_a(\ww)\}_{a,w}$ is the whole algebra $\bh$. By this condition, one can check that for the choice of $\{S_a\}_{a \in \mc{A}} = \{\si_i^x, \si_i^y, \si_i^z\}_{i = 1}^N$, the associated Davies generator is always primitive. Without loss of generality, when discussing the Gibbs samplers in \cref{sec:1d_ising} and \cref{sec:2D_toric}, we always let $\{\si_i^x, \si_i^y, \si_i^z\}_{i = 1}^N$ be a subset of $\{S_a\}_{a \in \mc{A}}$ to guarantee the primitivity.

The following lemmas are collected from \cite[Lemmas 1 and 2]{Alicki_2009} with proof omitted, which shall be used repeatedly in our subsequent spectral gap analysis.

\begin{lemma} \label{lem1}

Let $A, B \in \bh$ be positive operators on a Hilbert space $\mc{H}$, i.e., $A, B \ge 0$. The spectral gaps of $A$ and $B$ are
their smallest non-zero positive eigenvalues, denoted by ${\rm Gap} (A)$ and ${\rm Gap} (B)$, respectively. We have:

\begin{itemize}
    \item If $A$ has a non-trivial kernel and $A^2 \ge g A$ for some real $g > 0$, then
    \begin{equation} \label{eq1}
        {\rm Gap} (A) \ge g\,.
    \end{equation}
    \item If $\ker(A + B)$ is non-trivial such that $\ker(A + B) = \ker(B)$, then
    \begin{equation} \label{eq2}
        \gap(A + B) \ge \gap(B)\,.
    \end{equation}
    \item If $A$ and $B$ are commuting and $\ker(A + B)$ is non-trivial, then
    \begin{equation} \label{eq3}
        \gap(A + B) \ge \min\{\gap(A),\gap(B)\}\,.
    \end{equation}
    \item  If $A$ has gap lower bound $g_A$ and $\l \vp, B \vp \r \ge g_B$ for all normalized $\vp \in \ker(A)$, then
    \begin{equation} \label{eq4}
        A + B \succeq \frac{g_A g_B}{g_A + \norm{B}}\,,
    \end{equation}
    where $\norm{B}$ denotes the operator norm of $B$.
\end{itemize}
\end{lemma}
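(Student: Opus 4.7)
The plan is to handle each of the four items separately. Items 1 through 3 follow from elementary spectral theory, while item 4 requires a more delicate quadratic-form estimate.

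For item 1, I would spectrally decompose $A = \sum_i \lambda_i P_i$ with $\lambda_i \geq 0$, and observe that $A^2 \geq gA$ restricts on each eigenspace to the scalar inequality $\lambda_i(\lambda_i - g) \geq 0$, forcing every nonzero $\lambda_i$ to be at least $g$; since $A$ has a nontrivial kernel, this gives $\gap(A) \geq g$. For item 2, I would use the standard fact that $\ker(A+B) = \ker(A) \cap \ker(B)$ for PSD operators (since $\langle\varphi,(A+B)\varphi\rangle = 0$ with $A, B \succeq 0$ forces both quadratic forms to vanish); combined with the hypothesis $\ker(A+B) = \ker(B)$, this yields $\ker(B) \subseteq \ker(A)$, and restricting to the invariant subspace $\ker(B)^\perp = \ker(A+B)^\perp$, where $B \succeq \gap(B) I$ and $A \succeq 0$, gives $A + B \succeq \gap(B) I$ on that subspace. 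For item 3, I would simultaneously diagonalize the commuting PSD operators; then $A + B$ has eigenvalues $\lambda^A_i + \lambda^B_i$ and $\ker(A+B)$ is the joint zero eigenspace, and a short case split on whether each summand vanishes shows that every nonzero eigenvalue of $A+B$ is at least $\min\{\gap(A), \gap(B)\}$.

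The main work is item 4. For any unit vector $\varphi$ I would decompose $\varphi = \varphi_0 + \varphi_1$ according to the spectral projection $P = P_{\ker A}$, start from
\begin{equation*}
\langle\varphi, (A+B)\varphi\rangle = \langle\varphi_1, A\varphi_1\rangle + \langle\varphi, B\varphi\rangle \geq g_A \|\varphi_1\|^2 + \langle\varphi, B\varphi\rangle,
\end{equation*}
and focus on bounding the $B$-term. Expanding it in the $P$/$Q$-block decomposition produces diagonal contributions controlled by $\langle\varphi_0, B\varphi_0\rangle \geq g_B \|\varphi_0\|^2$ and $\langle\varphi_1, B\varphi_1\rangle \geq 0$, together with a cross term $2\,\mathrm{Re}\,\langle\varphi_0, B\varphi_1\rangle$. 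I would control this cross term using the operator Cauchy--Schwarz inequality for the PSD operator $B$,
\begin{equation*}
|\langle\varphi_0, B\varphi_1\rangle|^2 \leq \langle\varphi_0, B\varphi_0\rangle\,\langle\varphi_1, B\varphi_1\rangle \leq \langle\varphi_0, B\varphi_0\rangle\cdot \|B\|\,\|\varphi_1\|^2,
\end{equation*}
followed by a Young-type split $2\sqrt{xy} \leq x/t + ty$ with a free parameter $t > 0$. Optimizing $t$, combining the resulting parametric bound with the complementary trivial estimate $g_A\|\varphi_1\|^2$, and using $\|\varphi_0\|^2 + \|\varphi_1\|^2 = 1$ produces the claimed operator inequality $A + B \succeq g_A g_B / (g_A + \|B\|)$.

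The main obstacle is calibrating the Young-type split in item 4 to recover the tight denominator $g_A + \|B\|$: exploiting the full PSD structure of $B$ through the operator Cauchy--Schwarz (rather than the crude scalar bound $\|B_{01}\| \leq \|B\|$) is essential, and one must handle the two regimes $\sqrt{g_B}\|\varphi_0\| \gtrless \sqrt{\|B\|}\|\varphi_1\|$ carefully so that the resulting constants combine to the claimed form rather than weaker alternatives like $g_A g_B/(g_A + g_B + \|B\|)$ that emerge from a naive convex combination.
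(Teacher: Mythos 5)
Items 1--3 are correct and use precisely the elementary spectral arguments one expects. Item 4 has a genuine gap. Write $a = \|\varphi_0\|$, $b = \|\varphi_1\|$, $x = \langle\varphi_0, B\varphi_0\rangle$, $y = \langle\varphi_1, B\varphi_1\rangle$. The only information your chain of inequalities (Cauchy--Schwarz for $B \succeq 0$, a Young split, the trivial estimate $g_A b^2$) ever retains about the quadratic form is
\[
\langle\varphi,(A+B)\varphi\rangle \;\geq\; g_A b^2 + \bigl(\sqrt{x}-\sqrt{y}\bigr)^2, \qquad x \geq g_B a^2,\quad 0\leq y\leq \|B\| b^2,
\]
and minimizing the right-hand side over all $(a,b,x,y)$ consistent with these constraints and $a^2+b^2=1$ yields exactly
\[
\frac{(g_A+g_B+\|B\|) - \sqrt{(g_A+g_B+\|B\|)^2 - 4 g_A g_B}}{2},
\]
which is \emph{strictly} smaller than $g_A g_B/(g_A+\|B\|)$ whenever $\|B\|>0$ (e.g.\ $g_A=g_B=\|B\|=1$ gives $(3-\sqrt5)/2\approx 0.382$ against $1/2$). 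This value already reflects the optimal Young parameter $t$ and the complementary trivial bound, so no calibration of $t$ and no matching of the two regimes you describe can cross this barrier; the obstacle you flag at the end is real, and your proposed remedy does not remove it.

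The missing ingredient is a second operator Cauchy--Schwarz constraint coming from $\|B\| I - B \succeq 0$. Since $\langle\varphi_0,\varphi_1\rangle = 0$, one gets $|\langle\varphi_0, B\varphi_1\rangle|^2 \leq (\|B\| a^2 - x)(\|B\| b^2 - y)$, which is genuinely additional information not implied by $B\succeq 0$ and $y\leq\|B\| b^2$, and which suppresses the cross term precisely when $x$ approaches $\|B\| a^2$---the regime your lower bound treats as adversarial. The clean way to exploit it (this is essentially Alicki--Fannes--Horodecki's argument) is to take the minimizing unit eigenvector $\varphi$ of $A+B$ with eigenvalue $\lambda$, project $(A+B)\varphi = \lambda\varphi$ onto $\ker A$ to obtain $\langle\varphi_0, B\varphi_1\rangle = \lambda a^2 - x$ exactly, and then feed this into both Cauchy--Schwarz inequalities and the constraint $\langle\varphi_1, A\varphi_1\rangle\geq g_A b^2$. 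Eliminating $y$ and $b^2$ gives $g_A x \leq \lambda a^2\,(g_A+\|B\|-\lambda)$, and $x\geq g_B a^2$ closes the argument to $\lambda\geq g_A g_B/(g_A+\|B\|)$. Without the $\|B\| I - B\succeq 0$ inequality (or an equivalent Schur-complement bookkeeping), the pure quadratic-form route you sketch is provably too weak to reach the stated constant.
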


\begin{rem}
    The second statement in \cref{lem1} means that for any primitive Davies generator with $\si_\beta$ being invariant, adding any other Davies generator keeping the invariant state does not decrease the spectral gap.
\end{rem}

\section{2D toric code}\label{sec:2D_toric}

Let $N = 2 L^2$ spins be on the edges of the toroidal lattice, modeled by the Hilbert space $\mc{H} \cong \C^{2^N}$. The Hamiltonian is given by
\begin{equation} \label{2d:toric_ham}
    H^{\rm toric}=-\sum_{s}\textbf{X}_s-\sum_{p}\textbf{Z}_p\,,
\end{equation}
where indices $s$ and $p$ denote a \emph{star} and \emph{plaquette} that consist of four sites around a node of the lattice and the center of a cell, respectively, as drawn in \cref{fig:2d_toric}, and the associated observables $\textbf{X}_s$ and $\textbf{Z}_p$ are given by
\begin{equation*}
    \textbf{X}_s = \prod_{i\in s}\sigma^x_{i}\,,\q \textbf{Z}_p = \prod_{i\in p}\sigma^z_{i}\,,
\end{equation*}
which commute with each other: $[\textbf{X}_s,\textbf{Z}_p] = [\textbf{X}_s,\textbf{X}_{s'}] = [\textbf{Z}_p,\textbf{Z}_{p'}]\equiv 0$ for any stars $s,s'$ and plaquettes $p, p'$. In addition, due to the periodic boundary condition, it holds that
\begin{equation}\label{eqn:periodic_boundary}
\prod_s \textbf{X}_s=1\,,\quad \prod_p\textbf{Z}_p=1\,,
\end{equation}



This section is devoted to the spectral gap analysis of the Davies generator (Gibbs sampler) in \eqref{eqn:fast_mixing_Davies} and the proof of \cref{thm:fast_mixing_2D}, building on the discussion in \cref{sec:technical_overview}. We will first discuss the ground states and the decomposition of the observable algebra of $H^{\rm toric}$ in \cref{sec:2D_ground_state}. Then, in \cref{sec:gibbs_2d}, we first address the step \eqref{b} outlined in \cref{sec:technical_overview}, reducing the problem to step \eqref{c}: the study of the spectral gap of the local Davies generator on the syndrome space. This will be explored in detail in \cref{sec:gap_quasi_1D}.

\begin{figure}[bthp]
\centering
\includegraphics[width=0.5\textwidth]{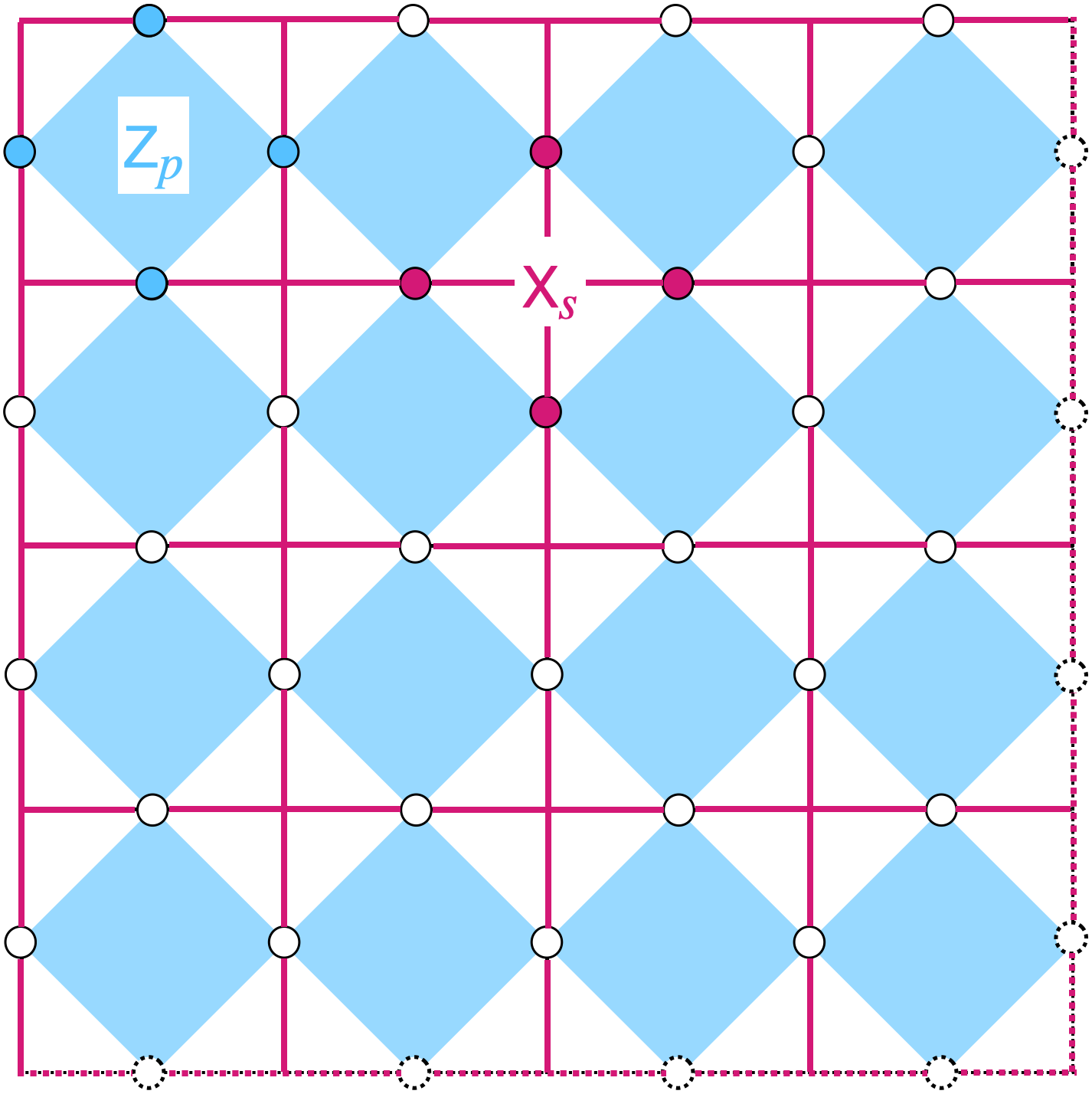}
\caption{2D toric code. Each blue plaquette contains one 4-local operator $\textbf{Z}_p=\prod_{i\,\in\,\text{plaquette }p}\sigma^z_{i}$. Each red star contains one 4-local operator $\textbf{X}_s=\prod_{i\,\in\,\text{star }s}\sigma^x_{i}$. The dashed line on the right/bottom is the same as the solid line on the left/top to indicate the periodic boundary condition.}
\label{fig:2d_toric}
\end{figure}

\subsection{Ground states and observable algebra}\label{sec:2D_ground_state}


In this section, we introduce the ground state space of $H^{\rm toric}$ and the associated observable algebra $\mc{B}(\mc{H})$, following \cite{Alicki_2009,Alicki_2010,bombin2013introduction}, which is important for the step \eqref{a} of the road map in \cref{sec:technical_overview}.

Analogous to the 1D ferromagnetic Ising chain (see \cref{sec:1d_ising}), the 2D  toric code Hamiltonian $H^{\text{toric}}$ is frustration-free, i.e., its ground state is simultaneously an eigenvector with eigenvalue 1 for all local terms $\mathbf{X}_s$ and $\mathbf{Z}_p$. Due to the toric structure with periodic boundary conditions, there are two topologically protected degrees of freedom (i.e., two logical qubits), resulting in a four-dimensional ground state space.




We now construct the ground state space of $H^{\rm toric}$ explicitly. Given a vector $\ket{\phi}$ satisfying $\bz_p \ket{\phi} = \ket{\phi}$ for all $p$, e.g., $\ket{0
^N}$ or $\ket{1^N}$ (note that there are many others),  we can construct a ground state as follows:
\begin{equation}\label{eqn:construct_ground_state}
    \ket{\psi} = \bx_{\rm star} \ket{\phi}\,,\q \bx_{\rm star}: = \prod_{s\in \rm star}\left(I+\textbf{X}_s\right) =  \sum_{\alpha \in \{0,1\}^{N}} \prod_{s\in \rm star} \bx_s^{\alpha_s}\,,
\end{equation}
which, as one can easily check, satisfies $\bz_p \ket{\psi} =  \ket{\psi}$, $\bx_s  \ket{\psi} =  \ket{\psi}$ for all $p,s$. To find all the ground states, we first define some global observables for two logical qubits (see \cref{fig:2d_global_X_Z}):
\begin{equation}\label{eqn:2D_global_X_Z}
\begin{aligned}
    & \xx_1 := \prod_{j \in \text{orange dots}} \sigma^x_j\,, \q \zz_1 := \prod_{j \in \text{green squares}} \sigma^z_j\,, \q \yy_1 := \REV{- i}\zz_1\xx_1\,, \\
    & \xx_2 := \prod_{j \in \text{blue dots}} \sigma^x_j\,, \q \zz_2 := \prod_{j \in \text{red squares}} \sigma^z_j\,, \q \yy_2 := \REV{- i}\zz_2\xx_2\,.
\end{aligned}
\end{equation}
Here, $\{\xx_1,\yy_1,\zz_1\}$ and $\{\xx_2,\yy_2,\zz_2\}$ generates two commuting Pauli algebras. Moreover, the operators $\xx_1$ and $\xx_2$ give one-to-one correspondences between four topological equivalent classes \cite[Fig.\,3.2]{Alicki_2010} (see also \cite[Section 3.2]{bombin2013introduction}). Specifically, we consider four orthogonal vectors that satisfy $\bz_p \ket{\phi} = \ket{\phi}$:
\begin{equation*}
    \ket{\phi_o} = \ket{0^N}\,,\q \ket{\phi_|} = \xx_1\ket{0^N}\,,\q  \ket{\phi_-} = \xx_2  \ket{0^N}\,,\q  \ket{\phi_+} = \xx_1 \xx_2 \ket{0^N}\,.
\end{equation*}
They generate the four orthogonal ground states via \eqref{eqn:construct_ground_state}:
\begin{equation}\label{eqn:2D_basis}
    \begin{aligned}
        &\ket{\psi_g} = \bx_{\rm star} \ket{\phi_g} \quad \text{with} \quad g = o,\ |,\ -,\ +\,,
    \end{aligned}
\end{equation}
which span the whole ground state space of $H^{\rm toric}$. In this form, $\ket{\psi_g}$ with $g = o,\ |,\ -,\ +$ corresponds to the logical qubits $\ket{00}$, $\ket{10}$, $\ket{01}$, and $\ket{11}$, respectively.



\begin{figure}[bthp]
    \centering
    \includegraphics[width=0.5\textwidth]{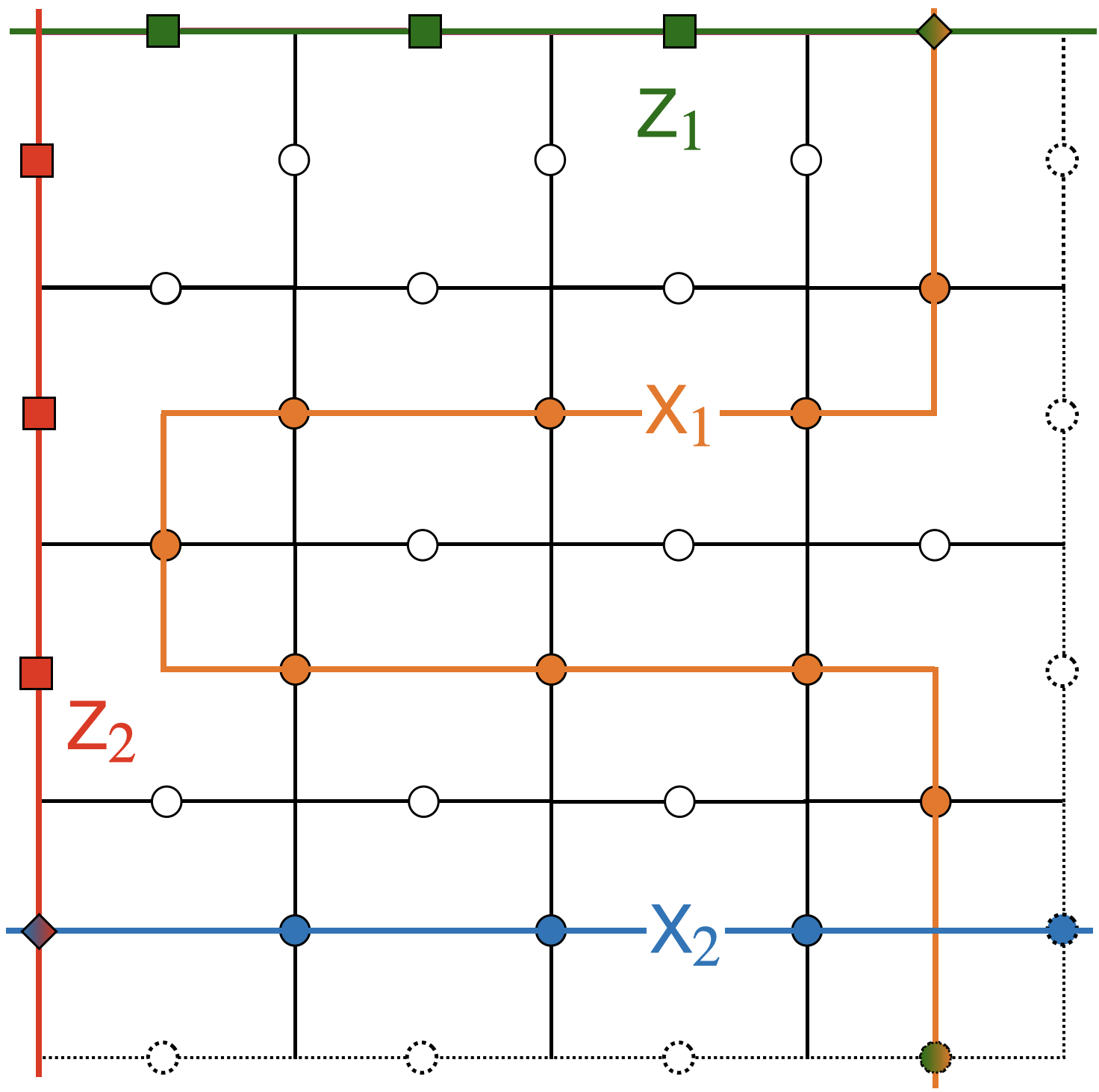}
    \caption{Four global logic operators in 2D toric code. Orange: $\xx_1 = \prod_{j \in \text{orange dots}} \sigma^x_j$. Green: $\zz_1 = \prod_{j \in \text{green squares}} \sigma^z_j$. Blue: $\xx_2 = \prod_{j \in \text{blue dots}} \sigma^x_j$. Red: $\zz_2 = \prod_{j \in \text{red squares}} \sigma^z_j$.
    Squares represent the qubits on $\zz_1, \zz_2$ logic operators, while dots represent the qubits on $\xx_1, \xx_2$ logic operators.}
    \label{fig:2d_global_X_Z}
\end{figure}

Next, to formulate the observable algebra, we first consider how the bit flip will influence the energy, namely, the excitation of $H^{\rm toric}$. For this, we delicately construct a snake path (as a subset of $N = 2L^2$ sites) passing through the centers of all the cells and a comb path passing through all the nodes of the toroidal lattice (see \cref{fig:2d_snake_comb}) such that
\begin{itemize}
    \item $\{\si_j^x\}_{j \in {\rm snake}}$ generates all the $X$-type  excitations (also called ``magnetic'' excitations):
    for any two plaquettes ($p_1, p_2$), we can find a path $l$ on the snake connecting them (blue path in \cref{fig:2d_snake_comb}).
    Then we define
\begin{equation}\label{eqn:magnetic_operator}
    W_l^m = \prod_{j \in l} \sigma^x_j\,,
\end{equation}
and then have the excited state $\ket{\psi_{p_1, p_2}}=W_l^m\ket{\psi}$ satisfying
\begin{equation*}
   \REV{\bz_{p_1} \ket{\psi_{p_1,p_2}} = \bz_{p_2} \ket{\psi_{p_1,p_2}}=- \ket{\psi_{p_1,p_2}},\quad H^{\rm toric} \ket{\psi_{p_1,p_2}} = (E_{\rm ground} + 2)\ket{\psi_{p_1,p_2}}\,,}
\end{equation*}
where $\ket{\psi}$ is a ground state of $H^{\rm toric}$.
\item $\{\si_j^z\}_{j \in {\rm comb}}$ generates all the $Z$-type excitations (also called ``electric''  excitations): for any two stars ($s_1, s_2$), we can find a path $l$ on the comb  connecting them (red path in \cref{fig:2d_snake_comb}). We then define
    \begin{equation}\label{eqn:electric_operator}
        W_l^e = \prod_{j \in l} \sigma^z_j\,.
    \end{equation}
    For a ground state $\ket{\psi}$, we define the excited state $\ket{\psi_{s_1,s_2}} =  W_l^e \ket{\psi}$ satisfying 
    \REV{\begin{equation*}
        \bx_{s_1} \ket{\psi_{s_1,s_2}} =
        \bx_{s_2} \ket{\psi_{s_1,s_2}}=
        - \ket{\psi_{s_1,s_2}},\quad H^{\rm toric} \ket{\psi_{s_1,s_2}} = (E_{\rm ground} + 2)\ket{\psi_{s_1,s_2}}\,.
    \end{equation*}}
    thanks to \REV{$\bx_{s_1(\text{or}\ s_2)} W_l^e = - W_l^e \bx_{s_1(\text{or}\ s_2)}$}.
    \item The snake and comb form a partition of all but two spins.
    In addition, the snake does not intersect with $\zz_1,\zz_2$, and the comb does not intersect with $\xx_1,\xx_2$. This ensures that the excitation operators $W_l^{m/e}$ constructed above commute with global observables: for a ground state $\ket{\psi}$,
    \begin{equation*}
        \mf{O} W_l^{m/e} \ket{\psi} = W_l^{m/e}  \mf{O} \ket{\psi}\,,
    \end{equation*}
    where $\mf{O} = \xx_1$, $\xx_1$, $\zz_1$, and $\zz_2$.
\end{itemize}
These elementary excited states $\ket{\psi_{p_1, p_2}}, \ket{\psi_{s_1,s_2}}$ are often called quasi-particle pairs (or quasi-particles for short). All excited states of $H^{\rm toric}$ can be expressed using these quasi-particles.




Following the above discussions, we decompose the space $\mathcal{H}$ according to the magnetic excitations observed by $\bz_p$ and the electric excitations observed by $\bx_s$:
\begin{equation} \label{eq:space_decom}
    \mc{H} = \C^2 \otimes \C^2 \otimes \mc{H}_{\rm b} \cong \C^{2^{N = 2L^2}} \q \text{with}\q  \mc{H}_{\rm b} = \mc{H}_{\rm b}^{\rm m}\otimes \mc{H}_{\rm b}^{\rm e}\,,
\end{equation}
where $\mc{H}_{\rm b}^{\rm m/e}$ is the space of electric/magnetic excited states spanned by
\begin{equation} \label{eq:excitedstate}
\Big\{\prod_l W_{l}^{\rm m/e}\ket{\psi_g}\,:\  W_{l}^{\rm m/e}\ \text{given in \eqref{eqn:magnetic_operator} and \eqref{eqn:electric_operator}}, g=o,|,-,+ \Big\}\,.
\end{equation}
Moreover, a basis vector in $\mc{H}_{\rm b}$ can be identified as
\begin{equation} \label{eq:basis2d}
    \ket{m}\ket{e} = \ket{m_1,\ldots m_{L^2}}\ket{e_1,\ldots e_{L^2}}\in \left\{+1,-1\right\}^{2L^2}\,,
\end{equation}
such that $\#\left\{m_j=-1\right\}, \#\left\{e_j=-1\right\} \in 2\ZZ$, due to the periodic boundary condition, with $m_j = \pm 1$ (resp., $e_j = \pm 1$) denoting the observation under $\bz_p$ (resp., $\bx_s$). Here and in what follows, we sort $\{\bz_p\}_p$ and $\{\bx_s\}_s$ along the snake and comb such that they can be indexed by $j \in \{1,\cdots, L^2\}$ (see \cref{fig:2d_snake_comb}). We emphasize that $\mathrm{dim}(\mc{H}_b)=2^{2L^2-2}$ follows from the parity constraint \eqref{eqn:periodic_boundary}, while each $\ket{m}\ket{e}$ is a $2L^2$-dimensional vector.

Now, we are ready to decompose the observable algebra $\mc{B}(\mc{H})$. Let $\mc{Q}_1$ and $\mc{Q}_2$ be the observable algebras over two logical qubits, generated by $\xx_1$ and $\zz_1$ and by $\xx_2$ and $\zz_2$, respectively. We denote by $\mc{A}_{\rm m/e}^{\rm full}$ the linear operator spaces on $\mc{H}_{\rm b}^{\rm m/e}$. Then we have the following decomposition of the observable algebra for the 2D toric code. The proof is straightforward and given in \cref{subsec:proof} for completeness.

\begin{lemma}\label{lem:2D_decomposition}
    The algebra of observables for 2D toric code can be decomposed into
  \begin{equation} \label{eq:opdecom}
    \mc{B}(\mc{H}) \cong \mc{Q}_1 \otimes \mc{Q}_2 \otimes \mc{A}_{\rm m}^{\rm full} \otimes \mc{A}_{\rm e}^{\rm full}\,,
\end{equation}
associated with the decomposition \eqref{eq:space_decom}, where the algebras $\mc{A}^{\rm full}_{\rm m}$ and $\mc{A}^{\rm full}_{\rm e}$ are generated by $\{\mathbf{Z}_p\}_p\cup \{\sigma^x_j\}_{j\in \text{\rm snake}}$ and $\{\mathbf{X}_s\}_s\cup\{\sigma^z_j\}_{j\in \text{\rm comb}}$. In addition, the four subalgebras $\mc{Q}_1$, $\mc{Q}_2$, $\mc{A}_{\rm m}^{\rm full}$, and $\mc{A}_{\rm e}^{\rm full}$ are commutative with each other, namely, the operators in these subalgebras acting on the entire Hilbert space $\mathcal{H}$ mutually commute with each other.
\end{lemma}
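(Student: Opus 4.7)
The plan is to prove the decomposition in three stages: (i) verify that the four subalgebras $\mc{Q}_1,\mc{Q}_2,\mc{A}^{\rm full}_{\rm m},\mc{A}^{\rm full}_{\rm e}$ mutually commute, (ii) identify each as a full matrix algebra acting on a distinct tensor factor of \eqref{eq:space_decom}, and (iii) conclude by dimension count that their product fills all of $\mc{B}(\mc{H})$. The underlying principle is the standard fact that if $\mc{A}_1,\ldots,\mc{A}_k\subset M_n(\C)$ are mutually commuting full matrix subalgebras with $\prod_i\dim\mc{A}_i=n^2$, then $\bigotimes_i\mc{A}_i\cong M_n(\C)$ canonically and the product subalgebra is all of $M_n(\C)$.

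For commutativity I would check all cross-pairs at the level of generators via Pauli arithmetic. Between $\mc{Q}_1$ and $\mc{Q}_2$: the supports of $\{\xx_1,\zz_1\}$ and $\{\xx_2,\zz_2\}$ are pairwise disjoint in Figure~\ref{fig:2d_global_X_Z}. Between $\mc{Q}_i$ and $\mc{A}^{\rm full}_{\rm m}$: logical operators commute with every stabilizer $\bz_p$ by the defining property of the logical qubits, while $\sigma^x_j$ for $j\in\text{snake}$ commutes with $\xx_i$ (both $X$-type) and with $\zz_i$ because the snake is chosen disjoint from the support of $\zz_1,\zz_2$. The case $\mc{Q}_i$ versus $\mc{A}^{\rm full}_{\rm e}$ is symmetric, using that the comb avoids the support of $\xx_1,\xx_2$. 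Finally, between $\mc{A}^{\rm full}_{\rm m}$ and $\mc{A}^{\rm full}_{\rm e}$: any $\bx_s$ and $\bz_p$ overlap on $0$ or $2$ sites and hence commute, while snake $\sigma^x_j$ and comb $\sigma^z_{j'}$ have disjoint supports by construction.

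Next I would realize the tensor factorization explicitly. The four orthonormal ground states $\ket{\psi_g}$ from \eqref{eqn:2D_basis} identify the ground space with $\C^2\otimes\C^2$ so that $\xx_i$ and $\zz_i$ act as the standard Paulis on the $i$-th logical factor; hence $\mc{Q}_i\cong M_2(\C)$ acts nontrivially only on that factor. Applying the excitation operators $W_l^m$ and $W_l^e$ from \eqref{eqn:magnetic_operator}--\eqref{eqn:electric_operator} to the $\ket{\psi_g}$ generates an orthonormal basis of $\mc{H}$ labeled by logical sector $g$ and syndrome pattern $(m,e)$ as in \eqref{eq:basis2d}, yielding the Hilbert-space isomorphism \eqref{eq:space_decom}. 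In this basis, every $\bz_p$ acts as a diagonal $(-1)^{m_p}$ on $\mc{H}_{\rm b}^{\rm m}$ and as identity on the other factors, so together they generate all diagonal operators on $\mc{H}_{\rm b}^{\rm m}$ subject to the parity constraint from \eqref{eqn:periodic_boundary}; each $\sigma^x_j$ with $j\in\text{snake}$ flips $m_p$ at the two plaquettes adjacent to $j$ and acts trivially on the remaining factors by the commutation checks of the previous paragraph, so these flip operators together with the diagonal generators produce all of $\mc{B}(\mc{H}_{\rm b}^{\rm m})$. The argument for $\mc{A}^{\rm full}_{\rm e}\cong\mc{B}(\mc{H}_{\rm b}^{\rm e})$ is parallel.

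Finally, the dimension count $4\cdot 4\cdot 2^{2(L^2-1)}\cdot 2^{2(L^2-1)}=2^{4L^2}=\dim\mc{B}(\mc{H})$, combined with mutual commutativity and the identification of each subalgebra with a full matrix algebra on its own tensor factor, yields \eqref{eq:opdecom}. The main technical point I expect to need care with is showing that the flips $\{\sigma^x_j\}_{j\in\text{snake}}$, acting only on the magnetic syndrome factor, are transitive enough on the set of admissible syndrome patterns to generate every off-diagonal matrix unit of $\mc{B}(\mc{H}_{\rm b}^{\rm m})$; this is where the key property that the snake visits every plaquette along a connected path becomes essential, and an analogous remark applies to the comb and $\mc{A}^{\rm full}_{\rm e}$.
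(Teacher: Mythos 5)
Your approach coincides with the paper's: the paper treats the tensor factorization as following directly from the construction of the basis $\{\ket{\psi_g}\}$ and the excitation operators, and devotes its entire proof to the one nontrivial claim you correctly flag — that $\mc{A}^{\rm full}_{\rm m}=\mc{B}(\mc{H}^{\rm m}_{\rm b})$ is generated by $\{\bz_p\}_p\cup\{\sigma^x_j\}_{j\in\rm snake}$ — via an explicit formula expressing each matrix unit $\ket{m'}\bra{m}$ (for $m,m'$ differing by one snake flip) as a product of the projectors $\tfrac{1}{2}(I\pm\bz_j)$ and a single $\sigma^x_{j_1}$. Your surrounding scaffolding (generator-level commutativity checks and the dimension count $4\cdot4\cdot 2^{2(L^2-1)}\cdot 2^{2(L^2-1)}=2^{4L^2}$) is correct and simply makes explicit what the paper compresses into "straightforward by the construction."
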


To give further interpretation on the decomposition \eqref{eq:opdecom}, we define four subspaces $\mc{H}_g$ with $g = o,\ |,\ -,\ +$ spanned by the ground state $\ket{\psi_g}$ in \eqref{eqn:2D_basis} and its excited states $W_l^{m/e} \ket{\psi_g}$ via \eqref{eqn:magnetic_operator} and \eqref{eqn:electric_operator}. Then it is easy to see $\dim(\mc{H}_g) = 2^{N-2}$ with $N = 2 L^2$. The algebra $\mathcal{Q}_1\otimes \mathcal{Q}_2$ gives all the linear transformation between
 subspaces $\mc{H}_g$ with $g = o,\ |,\ -,\ +$, while on each subspace $\mc{H}_g$, the linear maps are characterized by $\mc{A}^{\rm full}_{\rm m}\otimes \mc{A}^{\rm full}_{\rm e}$. In particular, recalling the basis \eqref{eq:basis2d}, $\mc{A}^{\rm full}_{\rm m}$ (resp., $\mc{A}^{\rm full}_{\rm e}$) acts only on  $\ket{m}$ (resp., $\ket{e}$), for example,
 \begin{equation*}
     \mc{A}^{\rm full}_{\rm m} (\ket{m}\ket{e}) = \left(\mc{A}^{\rm full}_{\rm m} \ket{m}\right)\ket{e}\,.
 \end{equation*}

\begin{figure}[bthp]
    \centering
    \includegraphics[width=0.5\textwidth]{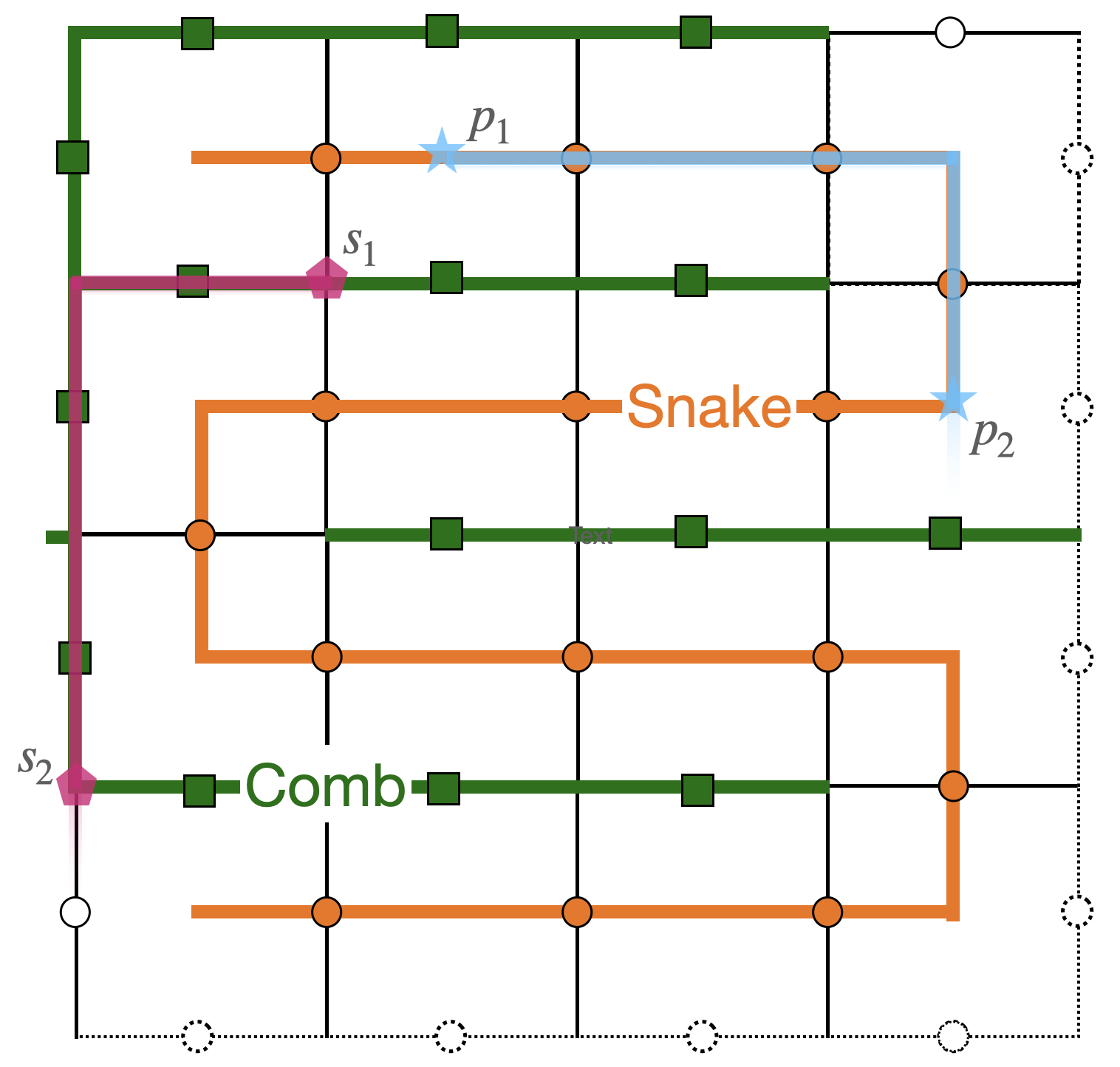}
    \caption{Orange dot Snake and green square comb for 2D toric code. The blue magnetic path operators $W_l^m$
    \eqref{eqn:magnetic_operator} and red electric path operators $W_l^e$ \eqref{eqn:electric_operator}
    act along the snake and the comb, respectively. We index $\{\bz_p\}_p$ as $\{\bz_j\}_{j = 1}^{L^2}$ along the snake from left to right and up to down, and index the spins from $2$ to $L^2$ on the snake in the same way so that each spin $j$ corresponds a pair of plaquettes $j - 1$ and $j$. One can similarly index the stars and spins on comb from left to right and up to down.
    }
    \label{fig:2d_snake_comb}
\end{figure}

\subsection{Spectral gap of the Gibbs sampler} \label{sec:gibbs_2d}

In this section, we will focus on steps \eqref{b} and \eqref{c} of the roadmap in \cref{sec:technical_overview}. For this, we decompose the generator \eqref{eqn:fast_mixing_Davies} as follows:
\begin{equation} \label{eq:sampler2d}
\mc{L}_\beta = \mc{L}_{\text{\rm local full}} + \mc{L}_{\rm global} = \mc{L}^{\rm gapped}+\mc{L}^{\rm rest}\,,
\end{equation}
where
\begin{equation}\label{eqn:L_gapped}
\begin{aligned}
    \mc{L}^{\rm gapped}:=\underbrace{\sum_{j\in \rm snake}\mathcal{L}_{\sigma^x_j}+\sum_{j\in \rm comb}\mathcal{L}_{\sigma^z_j}}_{:=\mathcal{L}_{\rm local}}+\underbrace{\mathcal{L}_{\xx_1}+\mathcal{L}_{\xx_2}+\mathcal{L}_{\zz_1}+\mathcal{L}_{\zz_2}}_{:=\mathcal{L}_{\rm global}}\,.
\end{aligned}
\end{equation}
Here $\mc{L}^{\rm rest}:= \mc{L}_{\text{\rm local full}} - \mc{L}_{\rm local}$  is a local Lindbladian with other Pauli couplings not included in $\mc{L}_{\rm local}$.
From \cref{lem1} (item 2), it suffices to limit our discussion to the part $\mc{L}^{\rm gapped}$ and show that it is primitive with the desired spectral gap lower bound $\max\left\{\exp(-\Or(\beta)), \Omega(N^{-2})\right\}$.


Before we proceed, we prepare the explicit formulations of the Lindbladians involved in $\mc{L}^{\rm gapped}$ for subsequent analysis. For $\si_j^x$ with $j \in {\rm snake}$, we have
\begin{equation*}
     e^{i t H^{\rm toric}} \si_j^{x} e^{- i t H^{\rm toric}} =  e^{- i t \left(\bz_{p'} + \bz_p \right)} \si_j^{x} e^{i t \left(\bz_{p'} + \bz_p\right)}\,,
\end{equation*}
where $p$ and $p'$ are the two plaquettes with $j = p \cap p'$ being the intersection site. The bond observable $-(\bz_{p'} + \bz_p)$ has eigenvalues $-2, 0, 2$ with the associated eigenprojections denoted by $P^-_j$, $P_j^0$, $P^+_j$, respectively, which can be represented as follows:
\begin{equation} \label{2eq:exp_proj}
    P_j^0 = \frac{1}{2} (I - \bz_{p'}\bz_p)\,,\q P_j^\pm = \frac{1}{4}(I \mp \bz_{p'})(I \mp \bz_{p})\,.
\end{equation}
We then compute the Fourier components of $\si_j^{x}$ ($j \in {\rm snake}$):
\begin{equation} \label{2eqn:a_j_def}
    \begin{aligned}
        e^{i t H^{\rm toric}} \si_j^{x} e^{- i t H^{\rm toric}}
         = e^{- 4 i  t} \underbrace{P_j^- \si_j^x  P_j^+}_{a_{j}} + e^{4 i t} \underbrace{P_j^+ \si_j^{x}  P_j^-}_{a_{j}^\dag} + \underbrace{P_j^0 \si_j^{x} P_j^0}_{a_{j}^0}\,,
    \end{aligned}
\end{equation}
due to $P_j^\pm \si_j^{x} P_j^0 = P^+_j \si_j^{x} P^+_j = P^-_j \si_j^{x} P^-_j = 0$, where $a_{j}$, $a_{j}^\dag$, $a_{j}^0$ correspond to the Bohr frequencies $-4$, $4$, $0$  of the Hamiltonian \eqref{2d:toric_ham}, respectively. Then, we can write $\mc{L}_{\si_j^{x}}$ by \cref{eqq:davies,eq:davies2} with Glauber-type transition \eqref{eqn:Glauber}:
\begin{multline}  \label{2eq:local_lind}
    \mc{L}_{\si_j^{x}}(A) = - \frac{1}{2} [a_{j}^0, [a_{j}^0, A]] + \frac{1}{2} \left\{ h_+ a_{j}^\dag [A, a_{j}] + h_- [a_{j}, A] a_{j}^\dag \right\} \\ + \frac{1}{2} \left\{h_+ [a_{j}^\dag ,A] a_{j} + h_- a_{j} [A, a_{j}^\dag] \right\}\,,
\end{multline}
where the constants $h_\pm$ are given by
\begin{equation} \label{eq:consthpm}
    h_+ =  \gamma(-4) = \frac{2}{e^{- 4 \beta} + 1}\,,\q h_- = \gamma(4) = \frac{2}{e^{4 \beta} + 1}\,.
\end{equation}

The computation for $\mc{L}_{\si_j^z}$ with $j \in {\rm comb}$ is quite similar to $\mc{L}_{\si_j^x}$ with $j\in {\rm snake}$, so we only sketch it below. For $\si_j^z$ with $j \in {\rm comb}$, we have
\begin{align*}
     e^{i t H^{\rm toric}} \si_j^{x} e^{- i t H^{\rm toric}} & =  e^{- i t \left(\bx_{s'} + \bx_s \right)} \si_j^z e^{i t \left(\bx_{s'} + \bx_s\right)} \\
     & = e^{- 4 i  t} \underbrace{Q_j^- \si_j^z  Q_j^+}_{b_{j}} + e^{4 i t} \underbrace{Q_j^+ \si_j^{z}  Q_j^-}_{b_{j}^\dag} + \underbrace{Q_j^0 \si_j^z Q_j^0}_{b_{j}^0}\,,
\end{align*}
with $s$ and $s'$ being two starts uniquely determined by $j = s \cap s'$. Here, the projections $Q^-_{j}$, $Q_{j}^0$, $Q^+_{j}$ are the eigenprojections of the bond observable $-(\bx_{s'} + \bx_s)$ for eigenvalues $-2, 0, 2$, which can be similarly formulated as \eqref{2eq:exp_proj} by replacing $\bz_p$ by $\bx_s$. With the Glauber-type transition rate \eqref{eqn:Glauber}, the generator $\mc{L}_{\si_j^{x}}$ is given by
\begin{multline}  \label{2eq:local_lindlz}
    \mc{L}_{\si_j^{z}}(A) = - \frac{1}{2} [b_{j}^0, [b_{j}^0, A]] + \frac{1}{2} \left\{ h_+ b_{j}^\dag [A, b_{j}] + h_- [b_{j}, A] b_{j}^\dag \right\} \\  + \frac{1}{2} \left\{h_+ [b_{j}^\dag ,A] b_{j} + h_- b_{j} [A, b_{j}^\dag] \right\}\,,
\end{multline}
where the constants $h_\pm$ is the same as \eqref{eq:consthpm}. According to our construction, we have
\begin{equation*}
\big\|\mc{L}_{\si_j^{z}}\big\|_{\si_\beta \to \si_\beta},\ \,  \big\|\mc{L}_{\si_j^{x}}\big\|_{\si_\beta \to \si_\beta} = \Theta(1)\,, \q \text{uniformly in $\beta$}\,.
\end{equation*}

We proceed to compute the Lindbladian with global couplings. For $\mf{O} = \xx_1,\xx_2,\zz_1,\zz_2$, thanks to $e^{i t H^{\rm toric}} \mf{O} e^{- i t H^{\rm toric}} = \mf{O}$ by $[\bz_p, \mf{O}] = [\bx_s, \mf{O}] = 0$, we readily have
\begin{equation} \label{2eq:lx}
    \mc{L}_{\mf{O}} (A) = - \frac{1}{2} [\mf{O}, [\mf{O}, A]]\,,
\end{equation}
since $\mf{O}$ has only the component with Bohr frequency zero. \REV{According to~\eqref{eq:supernorm} and \eqref{2eq:lx}, it is straightforward to see that
\begin{equation}  \label{2eq:lxnorm}
    \norm{\mc{L}_{\mf{O}}}_{\si_\beta \to \si_\beta} = \Theta(\|\mf{O}\|_{\rm op}^2)=\Theta(1)\,, \q \text{uniformly in $\beta$}\,,
\end{equation}
where $\norm{\dd}_{\rm op}$ denotes the standard operator norm.}


Now, for the step \eqref{b} of the roadmap in \cref{sec:technical_overview}, we first give the following lemma. Its proof is postponed to \cref{subsec:proof} for ease of reading.

\begin{lemma}\label{lem:L_gapped_property}
The generators $\mathcal{L}_{\rm global}$, $\sum_{j\in \rm snake}\mathcal{L}_{\sigma^x_j}$, and $\sum_{j\in \rm comb}\mathcal{L}_{\sigma^z_j}$ defined in \eqref{eqn:L_gapped}
only nontrivially act on a subalgebra of $\mc{B}(\mc{H})$. Specifically, we have
\begin{equation}\label{eqn:LXbar_commute}
\mathcal{L}_{\rm global}\left(\mathcal{Q}_1\otimes \mathcal{Q}_2\otimes \mathcal{A}^{\rm full}_{\rm m}\otimes \mathcal{A}^{\rm full}_{\rm e}\right)=\mathcal{L}_{\rm global}\left(\mathcal{Q}_1\otimes \mathcal{Q}_2\right)\otimes \mathcal{A}^{\rm full}_{\rm m}\otimes \mathcal{A}^{\rm full}_{\rm e}\,,
\end{equation}
\begin{equation}\label{eqn:Lxj_commute}
    \sum_{j\in \rm snake}\mathcal{L}_{\sigma^x_j}\left(\mathcal{Q}_1\otimes \mathcal{Q}_2\otimes \mathcal{A}^{\rm full}_{\rm m}\otimes \mathcal{A}^{\rm full}_{\rm e}\right)=\mathcal{Q}_1\otimes \mathcal{Q}_2\otimes \sum_{j\in \rm snake}\mathcal{L}_{\sigma^x_j}\left(\mathcal{A}^{\rm full}_{\rm m}\right)\otimes \mathcal{A}^{\rm full}_{\rm e}\,,
\end{equation}
and
\begin{equation}\label{eqn:Lzj_commute}
\sum_{j\in \rm comb}\mathcal{L}_{\sigma^z_j}\left(\mathcal{Q}_1\otimes \mathcal{Q}_2\otimes \mathcal{A}^{\rm full}_{\rm m}\otimes \mathcal{A}^{\rm full}_{\rm e}\right)=\mathcal{Q}_1\otimes \mathcal{Q}_2\otimes \mathcal{A}^{\rm full}_{\rm m}\otimes \sum_{j\in \rm comb}\mathcal{L}_{\sigma^z_j}\left(\mathcal{A}^{\rm full}_{\rm e}\right)\,.
\end{equation}
In particular, it holds that
\begin{equation}\label{eqn:kernel}
\begin{aligned}
\mathrm{Ker}\Big(\sum_{j\in \rm snake}\mathcal{L}_{\sigma^x_j}\Big)=&~\mathcal{Q}_1\otimes \mathcal{Q}_2\otimes {\rm \mi}\otimes \mathcal{A}^{\rm full}_{\rm e}\,,\\
\mathrm{Ker}\Big(\sum_{j\in \rm comb}\mathcal{L}_{\sigma^z_j}\Big)=&~\mathcal{Q}_1\otimes \mathcal{Q}_2\otimes \mathcal{A}^{\rm full}_{\rm m}\otimes{\rm \mi}\,,\\
\mathrm{Ker}\left(\mathcal{L}_{\rm global}\right)=&~{\rm \mi}\otimes {\rm \mi}\otimes \mathcal{A}^{\rm full}_{\rm m}\otimes \mathcal{A}^{\rm full}_{\rm e}\,.
\end{aligned}
\end{equation}
\end{lemma}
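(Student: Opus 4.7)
The proof splits naturally into two tasks: first I would establish the three factorization identities \eqref{eqn:LXbar_commute}--\eqref{eqn:Lzj_commute}, and then use them to derive the three kernel formulas in \eqref{eqn:kernel}. I carry them out in that order, since each kernel identity reduces, via its associated factorization, to a primitivity statement on a single tensor factor of the decomposition \eqref{eq:opdecom}.

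\textbf{Step 1: Factorizations.} For \eqref{eqn:LXbar_commute} I exploit that each logical generator $\mathfrak{O} \in \{\xx_1, \xx_2, \zz_1, \zz_2\}$ commutes with every generator of $\mathcal{A}^{\rm full}_{\rm m} \otimes \mathcal{A}^{\rm full}_{\rm e}$: since $\xx_i$ is a product of $\sigma^x$'s on sites disjoint from the comb and $\zz_i$ is a product of $\sigma^z$'s on sites disjoint from the snake (cf.\ the third bullet of \cref{sec:2D_ground_state}), commutation of $\mathfrak{O}$ with $\sigma^x_j$ ($j \in \mathrm{snake}$), $\sigma^z_j$ ($j \in \mathrm{comb}$), and the stabilizers $\bz_p, \bx_s$ can be verified site by site. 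Plugging this into the double commutator \eqref{2eq:lx} immediately gives $\mathcal{L}_{\mathfrak{O}}(Q \otimes A_m \otimes A_e) = \mathcal{L}_{\mathfrak{O}}(Q) \otimes A_m \otimes A_e$. For \eqref{eqn:Lxj_commute}, the crucial observation is that every jump operator $a_j, a_j^\dag, a_j^0$ appearing in \eqref{2eqn:a_j_def} already lies in $\mathcal{A}^{\rm full}_{\rm m}$: $\sigma^x_j$ with $j \in \mathrm{snake}$ is a generator of this subalgebra, and the bond projectors $P_j^\pm, P_j^0$ in \eqref{2eq:exp_proj} are polynomials in $\bz_p, \bz_{p'}$. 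Because the four subalgebras in \eqref{eq:opdecom} mutually commute, I can slide $Q$ and $A_e$ through every term in the explicit formula \eqref{2eq:local_lind} for $\mathcal{L}_{\sigma^x_j}$ to obtain the factorization; the proof of \eqref{eqn:Lzj_commute} is then symmetric under $\mathrm{snake} \leftrightarrow \mathrm{comb}$ and $X \leftrightarrow Z$.

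\textbf{Step 2: Kernels.} With the factorizations in hand, $\ker \mathcal{L}_{\rm global}$ reduces to computing $\ker \mathcal{L}_{\rm global}|_{\mathcal{Q}_1 \otimes \mathcal{Q}_2}$. Since $\xx_1, \zz_1$ commute with $\mathcal{Q}_2$ and vice versa, this splits into two decoupled one-qubit problems: on $\mathcal{Q}_i \cong M_2(\mathbb{C})$ a double commutator $[\mathfrak{O}, [\mathfrak{O}, \cdot]]$ has kernel $\mathrm{span}\{\mi, \mathfrak{O}\}$, and the anticommutation $\{\xx_i, \zz_i\} = 0$ forces $\mathrm{span}\{\mi, \xx_i\} \cap \mathrm{span}\{\mi, \zz_i\} = \mathbb{C}\mi$, yielding the third line of \eqref{eqn:kernel}. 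For the snake Lindbladian I would invoke the primitivity criterion recalled after \eqref{eq:supernorm}: a GNS detailed-balanced Davies generator has trivial kernel on an invariant $*$-subalgebra iff the $*$-algebra generated by its jump operators exhausts that subalgebra. Concretely I need to show that $\{a_j, a_j^\dag, a_j^0\}_{j \in \mathrm{snake}}$ generates all of $\mathcal{A}^{\rm full}_{\rm m}$. A natural route is to first recover each $\sigma^x_j$ from the identity $a_j + a_j^\dag + a_j^0 = \sigma^x_j$ (obtained by setting $t = 0$ in \eqref{2eqn:a_j_def}), and then to recover each $\bz_p$ from the bond observables encoded in $P_j^\pm, P_j^0$ (e.g.\ $P_j^+ - P_j^- = -\tfrac{1}{2}(\bz_p + \bz_{p'})$ and $\mi - 2 P_j^0 = \bz_p \bz_{p'}$), closing the system via the global parity constraint \eqref{eqn:periodic_boundary}. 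The comb case is symmetric. I expect this final generation step --- tracking how the three Bohr-frequency components interact across the snake topology and its periodic closure --- to be the most delicate ingredient, but it is essentially a combinatorial check that can be done by a direct algebraic unfolding.
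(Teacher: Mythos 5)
Your Step 1 is essentially identical to the paper's argument: the paper likewise observes that the global jumps commute with $\mc{A}^{\rm full}_{\rm m/e}$ (so the double commutator in \eqref{2eq:lx} factors through), and that the snake jump operators $a_j, a_j^\dag, a_j^0$ together with the bond projections \eqref{2eq:exp_proj} already live inside $\mc{A}^{\rm full}_{\rm m}$ and commute with the other three tensor factors. Your argument for $\ker\mc{L}_{\rm global}$ is also the paper's: note that $\xx_1,\zz_1,\xx_2,\zz_2$ generate all of $\mc{Q}_1\otimes\mc{Q}_2$, so the intersection of the commutants of the individual global jumps is $\C\mi$. So far the two routes coincide.

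The only place where you genuinely deviate is the first two kernel identities in \eqref{eqn:kernel}. The paper simply defers to \cite[Lemma 6]{Alicki_2009} (the analogous 1D Ising claim), whereas you propose to run the primitivity criterion directly by showing that $\{a_j, a_j^\dag, a_j^0\}_{j\in{\rm snake}}$ generates $\mc{A}^{\rm full}_{\rm m}$. That criterion is the right tool and the route can be made to work, but the specific recipe you sketch for recovering the individual $\bz_p$ is not quite sufficient. From $a_j^\dag a_j = P_j^+$, $a_j a_j^\dag = P_j^-$, and $(a_j^0)^2 = P_j^0$ you obtain $P_j^+ - P_j^- = -\tfrac12(\bz_{j-1}+\bz_j)$ and $\mi - 2P_j^0 = \bz_{j-1}\bz_j$; but the commutative algebra generated by these symmetric combinations alone (together with the global parity constraint $\prod_p\bz_p=\mi$) does \emph{not} in general contain the individual $\bz_p$. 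For instance, with four bonds you only get $\bz_1 + \bz_4$ and $\bz_1\bz_4$ from telescoping, which span only a three-dimensional subalgebra of ${\rm span}\{\mi, \bz_1, \bz_4, \bz_1\bz_4\}$: the antisymmetric part $\bz_1 - \bz_4$ is missing, and parity does not produce it. What closes the gap is that your algebra also contains $\sigma^x_j = a_j + a_j^\dag + a_j^0$, which anticommutes with each of $\bz_{j-1}$ and $\bz_j$ individually; conjugating the symmetric combination by a neighboring $\sigma^x_k$ (one that anticommutes with exactly one of the two observables) yields the antisymmetric combination, e.g.\ $\sigma^x_3(\bz_1+\bz_2)\sigma^x_3 = \bz_1 - \bz_2$, after which each $\bz_p$ is obtained by averaging. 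This conjugation step is the missing ingredient; "closing via parity" alone will not do it. Once you add that observation your generation argument goes through, and the comb case is indeed symmetric.

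You should also make explicit that in the finite-dimensional setting, triviality of the commutant of the generated $*$-algebra inside $\mc{A}^{\rm full}_{\rm m}$ is equivalent to that algebra being all of $\mc{A}^{\rm full}_{\rm m}$ — you implicitly pass between the two, and it is worth a line.
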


Note from \eqref{2eq:lx} that for any $B_1,B_2\in \{I, \xx_i, \yy_i, \zz_i\}$, the operator $B_1 B_2$ is an eigenvector of $\mathcal{L}_{\rm global}$: $\mathcal{L}_{\rm global}\left(B_1 B_2\right) = c \cdot B_1 B_2$ for some constant $c$. Combining this with \cref{lem:L_gapped_property}, we obtain the following properties of $\mc{L}^{\rm gapped}$.

\begin{cor}\label{lem:L_gapped_property2}
It holds that
\begin{itemize}
\item $\mc{L}^{\rm gapped}$ is block diagonal for the following orthogonal decomposition \eqref{eqn:A_full_decomposition} in both HS inner product and GNS inner product:
\begin{equation}\label{eqn:A_full_decomposition}
\mathcal{Q}_1\otimes \mathcal{Q}_2\otimes \mathcal{A}^{\rm full}_{\rm m}\otimes \mathcal{A}^{\rm full}_{\rm e}=\bigoplus_{B_i\in \{I, \xx_i, \yy_i, \zz_i\},~ i=1,2} \mc{B}_{B_1,B_2}\,,
\end{equation}
with
\begin{equation*}
\mc{B}_{B_1,B_2} = B_1\otimes B_2\otimes \mathcal{A}^{\rm full}_{\rm m}\otimes \mathcal{A}^{\rm full}_{\rm e}\,.
\end{equation*}

\item $\mc{L}^{\rm gapped}$ is primitive: $\mathrm{Ker}(\mc{L}^{\rm gapped})= {\rm Span}\{\mi\}$.

\end{itemize}
\end{cor}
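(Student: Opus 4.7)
The corollary has two assertions to prove: (1) block-diagonality of $\mc{L}^{\rm gapped}$ with respect to the decomposition \eqref{eqn:A_full_decomposition} in both inner products; and (2) primitivity, i.e., $\mathrm{Ker}(\mc{L}^{\rm gapped})={\rm Span}\{\mi\}$. Both are essentially immediate consequences of Lemma \ref{lem:L_gapped_property} once one observes that each basis element $B_1\otimes B_2$ is an eigenvector of every summand of $\mc{L}_{\rm global}$.

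For the block structure, I would handle the three summands of $\mc{L}^{\rm gapped}$ separately. The snake and comb Lindbladians $\sum_{j\in\rm snake}\mc{L}_{\sigma_j^x}$ and $\sum_{j\in\rm comb}\mc{L}_{\sigma_j^z}$ act trivially on the $\mc{Q}_1\otimes\mc{Q}_2$ factor by \eqref{eqn:Lxj_commute}--\eqref{eqn:Lzj_commute}, so they automatically preserve each block $\mc{B}_{B_1,B_2}$. For $\mc{L}_{\rm global}=\sum_{\mf{O}}\mc{L}_{\mf{O}}$ with $\mf{O}\in\{\xx_1,\xx_2,\zz_1,\zz_2\}$, I would invoke \eqref{2eq:lx}, $\mc{L}_{\mf{O}}(X)=-\tfrac12[\mf{O},[\mf{O},X]]$. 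Since each $B_1\otimes B_2$ with $B_i\in\{I,\xx_i,\yy_i,\zz_i\}$ either commutes or anticommutes with $\mf{O}$, a direct computation using $\mf{O}^2=\mi$ shows $B_1\otimes B_2$ is an eigenvector of $\mc{L}_{\mf{O}}$ with eigenvalue $0$ (commuting case) or $-2$ (anticommuting case); combined with \eqref{eqn:LXbar_commute} this delivers block-diagonality of $\mc{L}_{\rm global}$ as well. Orthogonality of distinct blocks in both inner products then reduces to the factorization $\sigma_\beta=\tfrac14\,\mi_{\rm logic}\otimes\sigma_{\rm m}\otimes\sigma_{\rm e}$, which is valid because $H^{\rm toric}$ acts trivially on the logical sector and splits into independent $X$- and $Z$-type syndromes, together with the HS-orthogonality of the logical Paulis $\{I,\xx_i,\yy_i,\zz_i\}$ on the two-qubit logical factor.

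For primitivity, each of the three summands of $\mc{L}^{\rm gapped}$ is a Davies generator, and hence GNS self-adjoint and negative semidefinite. Therefore, for any $X\in\mathrm{Ker}(\mc{L}^{\rm gapped})$, the identity
\[
0=\l X,-\mc{L}^{\rm gapped}(X)\r_{\sigma_\beta}
\]
decomposes as a sum of three non-negative quantities, each forced to vanish. This places $X$ simultaneously in the kernel of each summand. Intersecting the three kernels listed in \eqref{eqn:kernel} gives
\[
\bigl(\mc{Q}_1\otimes\mc{Q}_2\otimes\mi\otimes\mc{A}^{\rm full}_{\rm e}\bigr)\cap\bigl(\mc{Q}_1\otimes\mc{Q}_2\otimes\mc{A}^{\rm full}_{\rm m}\otimes\mi\bigr)\cap\bigl(\mi\otimes\mi\otimes\mc{A}^{\rm full}_{\rm m}\otimes\mc{A}^{\rm full}_{\rm e}\bigr)={\rm Span}\{\mi\},
\]
which completes the argument.

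The only mildly nontrivial step is the eigenvector claim for $\mc{L}_{\rm global}$, which relies on knowing the precise commutation pattern of each $B_1\otimes B_2$ with each $\mf{O}\in\{\xx_1,\xx_2,\zz_1,\zz_2\}$. This reduces to counting $\mathbb{Z}_2$-intersections of the four logical strings in \eqref{eqn:2D_global_X_Z} (see Figure \ref{fig:2d_global_X_Z}) and confirming that they realize the standard two-qubit Pauli algebra; the remaining steps are direct bookkeeping using Lemma \ref{lem:L_gapped_property} and standard properties of GNS-symmetric Lindbladians.
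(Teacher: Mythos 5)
Your proposal is correct and follows the same route the paper intends: invoke the eigenvector identity for $\mc{L}_{\rm global}$ derived from \eqref{2eq:lx} together with the three identities in \cref{lem:L_gapped_property} to get block-diagonality, and intersect the three kernels in \eqref{eqn:kernel} to get primitivity. The only thing you make more explicit than the paper is the eigenvalue computation ($0$ or $-2$ depending on whether $B_1B_2$ commutes or anticommutes with $\mf{O}$, using $\mf{O}^2=\mi$) and the orthogonality of the blocks via the factorization of $\si_\beta$; both are routine and consistent with the paper's sketch.
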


By \eqref{eq2} in \cref{lem1} with the primitivity of $\mathcal{L}^{\rm gapped}$, we have
\begin{equation*}
    \mathrm{Gap}(-\mc{L}_\beta)=\mathrm{Gap}(-\mc{L}^{\rm rest}-\mc{L}^{\rm gapped})\geq \mathrm{Gap}(-\mc{L}^{\rm gapped})\,.
\end{equation*}
Further, from the block diagonal form of  $\mc{L}^{\rm gapped}$ for \eqref{eqn:A_full_decomposition}, we obtain \eqref{eqn:Gap_inequality}:
\begin{multline} \label{eqn:Gap_inequality_later}
\mathrm{Gap}\left(-\mc{L}_\beta\right)\geq \mathrm{Gap}\left(-\mc{L}^{\rm gapped}\right) \\ \geq \min\left\{\mathrm{Gap}\left(-\mc{L}^{\rm gapped}\middle|_{\mc{B}_{\mi,\mi}}\right),\lambda_{\min}\left(-\mc{L}^{\rm gapped}|_{\mc{B}_{B_1,B_2}}\right)~\text{for}~B_1\neq \mi\ \text{or}\ B_2\neq \mi\right\}\,.
\end{multline}

For the first term in \eqref{eqn:Gap_inequality_later}, we note that $\mc{L}_{\rm global}|_{\mc{B}_{\mi,\mi}} = 0$ from \eqref{eqn:kernel}, and then have
\begin{equation}  \label{eqn:gapfirst}
    \mathrm{Gap}\left(-\mc{L}^{\rm gapped}\middle|_{\mc{B}_{\mi,\mi}}\right) = \mathrm{Gap}\left(-\mc{L}_{\rm local}\middle|_{\mc{B}_{\mi,\mi}}\right)\,.
\end{equation}

We next consider the lower bound estimation of $\lambda_{\min}\left(-\mathcal{L}^{\rm gapped}|{\mathcal{B}_{B_1,B_2}}\right)$. For any $B_i \in \{I, \xx_i, \yy_i, \zz_i\}$ with $i = 1,2$  such that $B_1 \neq \mathbf{1}$ or $B_2 \neq \mathbf{1}$, the operators $\xx_1, \xx_2, \zz_1, \zz_2$ either commute or anti-commute with $B_1B_2$, and there always exists one, say $\mf{O}$, among them that anti-commutes with $B_1 B_2$. Then, by \eqref{2eq:lx}, we have $\mathcal{L}_{\mf{O}}(A) = -2A$, which implies
\begin{equation}\label{eqn:L_global_nonvanish}
\left\langle B_1B_2,-\mathcal{L}_{\rm global}(B_1B_2)\right\rangle_{\sigma_\beta} \ge \left\langle B_1B_2,-\mathcal{L}_{\mf{O}}(B_1B_2)\right\rangle_{\sigma_\beta} = 2\,.
\end{equation}
Furthermore, according to \cref{eqn:Lxj_commute,eqn:Lzj_commute,eqn:kernel} in \cref{lem:L_gapped_property}, we find
\[
\mathrm{Ker}\left(\mc{L}_{\rm local}\middle|_{\mc{B}_{B_1,B_2}}\right)=B_1B_2\,.
\]
Finally, by \cref{lem1} (item 4), there holds
 \begin{equation}\label{eqn:before_prop}
\begin{aligned}
        - \mc{L}^{\rm gapped}\Big|_{\mc{B}_{B_1,B_2}} &\succeq \frac{\gap\left(\mc{L}_{\rm local}\middle|_{\mc{B}_{B_1,B_2}}\right) \left\langle B_1B_2,-\mathcal{L}_{\rm global}(B_1B_2)\right\rangle_{\sigma_\beta}}{\gap\left(\mc{L}_{\rm local}\middle|_{\mc{B}_{B_1,B_2}}\right) + \norm{\mc{L}_{\rm global}}_{\sigma_\beta\rightarrow \sigma_\beta}} \\
        &= \Theta\left(\frac{\gap\left(\mc{L}_{\rm local}\middle|_{\mc{B}_{B_1,B_2}}\right)}{\gap\left(\mc{L}_{\rm local}\middle|_{\mc{B}_{B_1,B_2}}\right) + 1}\right)\,,
      \end{aligned}
\end{equation}
where we have used \eqref{2eq:lxnorm}.
Plugging this into \eqref{eqn:Gap_inequality_later}, we obtain the following proposition, concluding step \eqref{b} of the proof.
\begin{prop}\label{prop:second_step}
Let $\mc{L}_\beta$, $\mc{L}^{\rm gapped}$, $\mc{L}_{\rm global}$ and $\mc{L}_{\rm local}$ be defined in \eqref{eq:sampler2d}-\eqref{eqn:L_gapped}. Suppose $\mathrm{Gap}\left(\mc{L}_{\rm local}\middle|_{\mc{B}_{\mi,\mi}}\right) = \Or(1)$. Then, for any $B_i \in \{I, \xx_i, \yy_i, \zz_i\}$ with $i = 1,2$  such that $B_1 \neq \mathbf{1}$ or $B_2 \neq \mathbf{1}$, we have
\begin{equation}\label{eqn:Gap_inequality_later_2}
  -\mc{L}^{\rm gapped}|_{\mc{B}_{B_1,B_2}}\succeq \Omega\left(\mathrm{Gap}\left(\mc{L}_{\rm local}\middle|_{\mc{B}_{\mi,\mi}}\right)\right)\,,
\end{equation}
Moreover, it holds that
\begin{equation}\label{eqn:Gap_inequality_later_3}
\mathrm{Gap}\left(\mc{L}_\beta\right)=\Omega\left(\mathrm{Gap}\left(-\mc{L}_{\rm local}\middle|_{\mc{B}_{\mi,\mi}}\right)\right)\,.
\end{equation}
\end{prop}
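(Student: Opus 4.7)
The plan is to reduce both claims to the tensor decomposition in \cref{lem:2D_decomposition}, the block-diagonality of $\mc{L}^{\rm gapped}$ from \cref{lem:L_gapped_property2}, and the positive-operator toolkit of \cref{lem1} (in particular items 2 and 4). For the first claim, I fix $(B_1, B_2) \neq (\mi, \mi)$ and write $-\mc{L}^{\rm gapped}|_{\mc{B}_{B_1,B_2}} = A + B$ with $A := -\mc{L}_{\rm local}|_{\mc{B}_{B_1,B_2}}$ and $B := -\mc{L}_{\rm global}|_{\mc{B}_{B_1,B_2}}$, and then assemble the three ingredients needed by item 4 of \cref{lem1}: a spectral gap lower bound for $A$, a Dirichlet-form lower bound for $B$ on $\ker(A)$, and an operator-norm upper bound for $B$.

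For the gap of $A$, the identities \eqref{eqn:Lxj_commute}--\eqref{eqn:Lzj_commute} show that $\mc{L}_{\rm local}$ acts as the identity on the logical factor $\mc{Q}_1 \otimes \mc{Q}_2$, so left-multiplication by the unitary $B_1 B_2$ furnishes a spectrum-preserving map between $\mc{L}_{\rm local}|_{\mc{B}_{\mi,\mi}}$ and $\mc{L}_{\rm local}|_{\mc{B}_{B_1,B_2}}$. Combined with \eqref{eqn:kernel}, this yields $\gap(A) = \gap(\mc{L}_{\rm local}|_{\mc{B}_{\mi,\mi}})$ and $\ker(A) = \mathrm{Span}\{B_1 B_2\}$. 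For the Dirichlet form of $B$ on $\ker(A)$, I note that at least one of $\xx_1, \xx_2, \zz_1, \zz_2$ anti-commutes with any nontrivial $B_1 B_2$, so \eqref{2eq:lx} gives $\langle B_1 B_2, -\mc{L}_{\rm global}(B_1 B_2)\rangle_{\sigma_\beta} \ge 2$, as recorded in \eqref{eqn:L_global_nonvanish}. The operator-norm bound is $\|B\|_{\sigma_\beta \to \sigma_\beta} = \Theta(1)$ by \eqref{2eq:lxnorm}. Feeding these into \cref{lem1} item 4 yields $-\mc{L}^{\rm gapped}|_{\mc{B}_{B_1,B_2}} \succeq \Theta\bigl(\gap(\mc{L}_{\rm local}|_{\mc{B}_{\mi,\mi}})/(\gap(\mc{L}_{\rm local}|_{\mc{B}_{\mi,\mi}}) + 1)\bigr)$, which under the hypothesis $\gap(\mc{L}_{\rm local}|_{\mc{B}_{\mi,\mi}}) = \Or(1)$ simplifies to $\Omega(\gap(\mc{L}_{\rm local}|_{\mc{B}_{\mi,\mi}}))$, establishing \eqref{eqn:Gap_inequality_later_2}.

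For the second claim, I apply \cref{lem1} item 2 to $\mc{L}_\beta = \mc{L}^{\rm gapped} + \mc{L}^{\rm rest}$, which gives $\gap(-\mc{L}_\beta) \ge \gap(-\mc{L}^{\rm gapped})$ since $\mc{L}^{\rm gapped}$ is primitive by \cref{lem:L_gapped_property2} and both summands share the Gibbs invariant state. Using the block-diagonality across \eqref{eqn:A_full_decomposition}, $\gap(-\mc{L}^{\rm gapped})$ reduces to the minimum over blocks: the $(\mi,\mi)$ block contributes $\gap(-\mc{L}_{\rm local}|_{\mc{B}_{\mi,\mi}})$ via \eqref{eqn:gapfirst} (because $\mc{L}_{\rm global}$ vanishes there by \eqref{eqn:kernel}), and the other blocks are controlled by the first claim just proved. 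The main subtlety I anticipate lies in rigorously justifying the identification $\gap(\mc{L}_{\rm local}|_{\mc{B}_{B_1,B_2}}) = \gap(\mc{L}_{\rm local}|_{\mc{B}_{\mi,\mi}})$, which hinges on $B_1 B_2$ being unitary and commuting with every snake/comb jump operator entering $\mc{L}_{\rm local}$ — precisely the structural property engineered into the snake/comb construction of \cref{lem:2D_decomposition}, and the reason why restricting $\mc{L}_{\text{\rm local full}}$ to $\mc{L}_{\rm local}$ (rather than using the full local Davies generator) is essential in this block-diagonal argument.
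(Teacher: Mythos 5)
Your proposal is correct and follows essentially the same route as the paper: decompose $-\mc{L}^{\rm gapped}|_{\mc{B}_{B_1,B_2}}$ into the local and global pieces, apply \cref{lem1} item~4 with \eqref{eqn:L_global_nonvanish} and the operator-norm bound \eqref{2eq:lxnorm}, and identify the gap on $\mc{B}_{B_1,B_2}$ with that on $\mc{B}_{\mi,\mi}$ via the tensor structure of $\mc{L}_{\rm local}$ from \cref{lem:L_gapped_property}. Your extra remark that left-multiplication by the unitary $B_1B_2$ is a GNS isometry (since $B_1B_2$ commutes with $\si_\beta$) makes explicit what the paper states more tersely as ``the action of $\mc{L}_{\rm local}$ on $\mc{B}_{B_1,B_2}$ is independent of $B_1,B_2$,'' but it is the same argument.
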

\begin{proof}
Since the action of $\mc{L}_{\rm local}$ on $\mc{B}_{B_1, B_2}$ is independent of $B_1,B_2$ by \eqref{eqn:Lxj_commute} and \eqref{eqn:Lzj_commute}, we obtain $\mathrm{Gap} (\mc{L}_{\rm local}|_{\mc{B}_{B_1,B_2}})=\mathrm{Gap}(\mc{L}_{\rm local}|_{\mc{B}_{\mi,\mi}})$,
which, along with \eqref{eqn:before_prop}, implies \eqref{eqn:Gap_inequality_later_2}. Then, the estimate \eqref{eqn:Gap_inequality_later_3} directly follows from \eqref{eqn:Gap_inequality_later} and \eqref{eqn:gapfirst}.
\end{proof}

Thanks to \cref{prop:second_step} above, to finish the proof of \cref{thm:fast_mixing_2D}, it suffices to study the spectral gap of $\mc{L}_{\rm local}$ on the syndrome space $\mc{B}_{\mi,\mi}$, i.e., step \eqref{c} outlined in \cref{sec:technical_overview}. This is the goal of the next section.



\subsubsection{Analysis of the quasi-1D structure}\label{sec:gap_quasi_1D}

In this section, we will analyze the spectral gap of the local Davies generator $\mc{L}_{\rm local}$ in the syndrome space. The main result is stated as follows.

\begin{prop}\label{prop:gap_quasi_1D}
Let $\mc{L}_{\rm local}$ be defined in \eqref{eqn:L_gapped} and $\mc{B}_{\mi,\mi}$ be defined in \cref{lem:L_gapped_property2}, we have
\[
\mathrm{Gap}\left(-\mc{L}_{\rm local}\middle|_{\mc{B}_{\mi,\mi}}\right) = \max\left\{e^{-\Or(\beta)}, \Omega(N^{-2})\right\}\,.
\]
\end{prop}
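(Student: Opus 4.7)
The plan is to decouple $-\mc{L}_{\mathrm{local}}|_{\mc{B}_{\mi,\mi}}$ into snake and comb pieces, reduce each to a classical Markov chain on the magnetic/electric syndrome configurations, and bound the classical gap by combining a standard canonical-path argument with a novel iterative scheme based on a perturbed stair-graph Laplacian estimate. For the first reduction, \eqref{eqn:Lxj_commute}--\eqref{eqn:Lzj_commute} imply that $\mc{L}_{\mathrm{snake}}:=\sum_{j\in\mathrm{snake}}\mc{L}_{\sigma^x_j}$ acts nontrivially only on the $\mc{A}^{\mathrm{full}}_{\mathrm{m}}$ tensor factor while $\mc{L}_{\mathrm{comb}}:=\sum_{j\in\mathrm{comb}}\mc{L}_{\sigma^z_j}$ acts only on $\mc{A}^{\mathrm{full}}_{\mathrm{e}}$; after passing to the master Hamiltonian \eqref{eq:master_hami} they commute as positive self-adjoint operators, and \eqref{eqn:kernel} identifies their common restricted kernel on $\mc{B}_{\mi,\mi}$ as $\mathrm{span}\{\mi\}$. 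Invoking \eqref{eq3} in \cref{lem1} then gives
\[
\gap\bigl(-\mc{L}_{\mathrm{local}}|_{\mc{B}_{\mi,\mi}}\bigr)\ \ge\ \min\bigl\{\gap\bigl(-\mc{L}_{\mathrm{snake}}|_{\mc{A}^{\mathrm{full}}_{\mathrm{m}}}\bigr),\ \gap\bigl(-\mc{L}_{\mathrm{comb}}|_{\mc{A}^{\mathrm{full}}_{\mathrm{e}}}\bigr)\bigr\}\,.
\]
I focus on the snake piece; the comb piece follows by the same scheme adapted to its branched quasi-1D geometry.

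Working in the $\bz_p$-eigenbasis $\{\ket{m}\}$ with $m\in\{\pm1\}^{L^2}$ and $\prod_j m_j=1$, the diagonal subalgebra $\mathrm{span}\{\ketbra{m}{m}\}$ is invariant under $\mc{L}_{\mathrm{snake}}$, and its restricted action is the classical continuous-time Markov chain in which $\sigma^x_j$ simultaneously flips $m_{j-1},m_j$ with Glauber rates $h_\pm,1$ from \eqref{eq:consthpm}; the invariant distribution is the parity-constrained product measure $\pi(m)\propto e^{\beta\sum_j m_j}$. Decomposing $\mc{A}^{\mathrm{full}}_{\mathrm{m}}$ into coherence sectors indexed by the set of plaquettes where the ``bra'' and ``ket'' indices differ (as in \cite{Alicki_2009}), one checks that each off-diagonal sector inherits the same transition rates as the diagonal one plus an additional nonnegative dephasing term, so the spectral gap of $-\mc{L}_{\mathrm{snake}}|_{\mc{A}^{\mathrm{full}}_{\mathrm{m}}}$ equals the spectral gap of the classical chain. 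It then suffices to lower bound the classical gap by $\max\{e^{-\Or(\beta)},\Omega(N^{-2})\}$.

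The $e^{-\Or(\beta)}$ piece comes from a standard canonical-path argument: since the classical Hamiltonian $-\sum_j m_j$ is non-interacting, any two parity-even configurations are connected by a pair-flip path along which the energy rises by at most $\Or(1)$ above the starting energy, so the worst-case conductance is at least $e^{-\Or(\beta)}$. For the novel $\Omega(N^{-2})$ bound, the plan is to stratify the state space by the number $2k$ of $-1$'s (``particles'') and proceed iteratively. Within each $2k$-particle sector, the neutral swap moves (rate $1$) realize a symmetric exclusion process on the snake ring; its transition graph---the \emph{stair graph}---has a perturbed Laplacian whose smallest nonzero eigenvalue is $\Omega((L^2)^{-2})=\Omega(N^{-2})$, uniformly in $\beta$, via an explicit variational ansatz on the ordered particle positions. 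To assemble the sector gaps into a global gap, I would use the variance decomposition (total variance equals expected within-sector variance plus variance of the sector averages) together with iterative applications of \cref{lem1} (items 3 and 4) to the inter-sector generator; the essential balancing observation is that the exponentially suppressed creation rates $\gamma(+4)\sim e^{-\Or(\beta)}$ are compensated by the exponentially suppressed equilibrium weights of the higher-particle sectors, so the two factors cancel and leave a $\beta$-independent polynomial bound.

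The main obstacle is precisely this balancing step: a naive conductance or Poincar\'e-type estimate on the inter-sector birth-death chain would retain an $e^{-\Or(\beta)}$ factor from the creation moves and spoil the polynomial bound. Overcoming it requires a temperature-uniform eigenvalue estimate for the perturbed stair-graph Laplacian that remains $\Omega(N^{-2})$ even as certain edge weights degenerate in the $\beta\to\infty$ limit; this is the technical heart of the argument and the reason for the new iterative scheme. Once the snake estimate is in place, the comb piece follows by adapting the stair-graph analysis to the comb geometry (which contains the snake as a subgraph and inherits the same polynomial bound), completing the proof of the proposition.
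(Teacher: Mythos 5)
Your high-level strategy matches the paper's: decouple $\mathcal{L}_{\rm local}$ into $\mathcal{L}_1=\sum_{j\in\rm snake}\mathcal{L}_{\sigma^x_j}$ and $\mathcal{L}_2=\sum_{j\in\rm comb}\mathcal{L}_{\sigma^z_j}$ acting on $\mathcal{A}^{\rm full}_{\rm m}$ and $\mathcal{A}^{\rm full}_{\rm e}$ respectively, apply the commuting-positives item of \cref{lem1}, decompose the observable algebra into coherence (``$\Lambda$'') sectors, isolate the diagonal/abelian sector as the bottleneck, and estimate that bottleneck via a stair-graph Laplacian and an iterative reduction. But three steps of your plan either misidentify the mechanism or would fail as stated.

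First, you write that ``within each $2k$-particle sector \ldots its transition graph---the stair graph---'' governs the dynamics. The stair graph in the paper describes \emph{only} the two-particle sector $\mathcal{S}_{L^2,2}$: it is the grid of ordered positions $(i,j)$ folded along the diagonal, with the $K_D$ perturbation sitting on the near-diagonal vertices $(i,i+1)$. For $k\geq 3$ the transition graph is a higher-dimensional simplex, not a stair graph, and the paper does not analyze it directly; instead it proves the single-step contraction $\lambda_{L^2,k}\geq\min\{\lambda_{L^2-1,k},\lambda_{L^2-1,k-1}\}$ (Eq.~\eqref{eqn:iteration}) which, iterated, funnels every $k\geq 3$ down to the $k=2$ stair-graph estimate. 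Your ``iterative scheme'' alludes to this but conflates it with the stair-graph estimate itself.

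Second, and more importantly, your proposed assembly of sector gaps through a ``variance decomposition'' and a hand-tuned cancellation of $\gamma(+4)\sim e^{-\Or(\beta)}$ creation rates against equilibrium weights is not how the paper avoids the $e^{-\Or(\beta)}$ factor, and I don't think it would go through cleanly. The paper's mechanism is simpler: pass to the master Hamiltonian $\varphi\circ\mathcal{L}_1\circ\varphi^{-1}$ (Eq.~\eqref{eq:master_hami}), where conjugation by $\sqrt{\sigma_\beta}$ symmetrizes the rates so that both the creation and the annihilation off-diagonals become $O(e^{-2\beta})$ (the $\eta h_+$ and $h_-/\eta$ entries in \eqref{matrix_ab}), while the $K_D$ \emph{diagonal} contribution $h_+\to 2$ survives at $\beta=\infty$. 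Thus the sectors $\mathcal{S}_{L^2,k}$ decouple exactly at zero temperature, the finite-$\beta$ inter-sector coupling is a norm-$O(L^2 e^{-2\beta})$ perturbation that is negligible once $\beta\gtrsim\ln L$, and no conductance or variance-decomposition argument is needed. Your framing of ``edge weights degenerating'' in the stair graph is also off: the hop weights $k_T$ are constant in $\beta$; what vanishes are the off-block couplings, and what remains is exactly the stair-graph Laplacian plus the fixed diagonal $D_n$ treated in \cref{thm:graph_spectral}.

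Third, the comb estimate is not a matter of the comb ``containing the snake as a subgraph.'' The comb is a tree with degree-three vertices and no single covering path. The paper's \cref{lem:gap_abelian_comb} instead covers the comb by the $\Or(L^2)$ shortest paths $\mathbf{P}_{\rm comb}$ between pairs of leaves, uses an averaging argument (\cref{fac:comb_line_cover}) to show any unit vector puts mass $\Omega(L^{-2})$ on one such path, restricts $K_{L^2}$ to that path (again a 1D stair-graph situation of length $\Or(L)$), and then pays the two $\Omega(L^{-2})$ factors to obtain $\Omega(L^{-4})=\Omega(N^{-2})$. The extra $L^{-2}$ from the path-covering overhead is not captured by your sketch. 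In short: the architecture of your proof is right, but the inter-sector assembly and the comb reduction need to be replaced by the master-Hamiltonian perturbation and path-covering arguments, and the stair-graph estimate should be confined to $k=2$ with the iterative reduction supplying the rest.
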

Then, \cref{thm:fast_mixing_2D} is a corollary of \cref{prop:second_step} and \cref{prop:gap_quasi_1D}. To prove the above proposition, we consider
\begin{equation}\label{eq:locallind_2}
\mc{L}_1:=\sum_{j\in \rm snake}\mathcal{L}_{\sigma^x_j},\quad \mc{L}_2:=\sum_{j\in \rm comb}\mathcal{L}_{\sigma^z_j}\,.
\end{equation}
From \cref{lem:L_gapped_property}, it is straightforward to see that $\mc{L}_1$ and $\mc{L}_2$ commutes and only act nontrivally on $\mc{A}^{\rm full}_{\rm m}$ and $\mc{A}^{\rm full}_{\rm e}$, respectively. Thus, we can analyze the spectral gap of $\mc{L}_1$ and $\mc{L}_2$ separately. \cref{prop:gap_quasi_1D} directly follows from the following two lemmas.

\begin{lemma}\label{lem:L_1}
Let $\mc{A}^{\rm full}_{\rm m}$ be defined as in \cref{lem:2D_decomposition}, we have
\begin{equation}\label{eqn:Gap_L_1}
  \mathrm{Gap}\left(-\mc{L}_1\middle|_{\mc{A}^{\rm full}_{\rm m}}\right) = \max\left\{e^{-\Or(\beta)}, \Omega(N^{ - 2})\right\}\,.
\end{equation}
\end{lemma}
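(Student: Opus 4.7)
The plan is to reduce the spectral gap of $\mc{L}_1 = \sum_{j \in \mathrm{snake}} \mc{L}_{\sigma^x_j}$ on the syndrome algebra $\mc{A}^{\rm full}_{\rm m}$ to the spectral gap of a classical Glauber-type Markov chain along the snake, and then to prove the two-sided bound by treating the finite-temperature and the low-temperature regimes by separate arguments.

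First, I would exploit the structure of $\mc{A}^{\rm full}_{\rm m}$, which is generated by $\{\bz_p\}_{p}$ and $\{\sigma^x_j\}_{j \in \mathrm{snake}}$. Using the commutation relations, I would show that $\mc{L}_1$ admits an invariant block decomposition with respect to the joint eigenspaces of the $\bz_p$'s. In the ``diagonal'' block (operators diagonal in the $\{\bz_p\}$-basis), the dynamics reduces to a classical continuous-time Markov chain on configurations $m \in \{\pm 1\}^{L^2}$ with the parity constraint $\prod_p m_p = 1$, where a single move toggles two adjacent values $(m_{p_{j-1}}, m_{p_j})$ along the snake. Using the formulas \eqref{2eqn:a_j_def}--\eqref{2eq:local_lind} and the Glauber rates \eqref{eqn:Glauber}--\eqref{eq:consthpm}, the Dirichlet form acquires three types of moves indexed by Bohr frequencies $\omega \in \{-4, 0, 4\}$ — pair creation, quasi-particle hopping, and pair annihilation. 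A standard comparison argument reduces the ``off-diagonal'' blocks to the diagonal one at the cost of at most a constant factor, so it suffices to analyze the classical chain.

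Second, the classical chain is a reversible Markov chain on quasi-particle configurations on a 1D chain of length $L^2 = N/2$ (the snake), with stationary measure proportional to $\exp(\beta \sum_p m_p)$. I would prove the bound $\mathrm{Gap} = \max\{e^{-\Or(\beta)}, \Omega(N^{-2})\}$ by treating the two alternatives separately:
\begin{itemize}
    \item For the $e^{-\Or(\beta)}$ bound, I would adapt the canonical-path / energy-barrier argument of Alicki--Fannes--Horodecki~\cite{Alicki_2009}: pair creation costs energy $+4$ and hopping/annihilation cost at most $0$, so every transition between configurations can be realized along a path of energy barrier $\Or(1)$, yielding the uniform lower bound $e^{-\Or(\beta)}$ independent of $N$.
    \item For the $\Omega(N^{-2})$ bound, which is the novel ingredient, I would observe that the hopping moves at Bohr frequency $\omega = 0$ retain constant rate even as $\beta \to \infty$. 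Thus, restricted to any sector of fixed quasi-particle number $2k$, the chain contains a non-degenerate random walk of $k$ hard-core annihilating particles on a 1D line of length $L^2$. I would then reduce the spectral gap of this walk to the smallest nonzero eigenvalue of a graph Laplacian on a ``stair graph'' encoding pair positions, and bound that eigenvalue by $\Omega(L^{-2}) = \Omega(N^{-2})$.
\end{itemize}

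The main obstacle will be the rigorous $\Omega(N^{-2})$ estimate, because naïve canonical-path bounds for multi-particle annihilating walks typically lose additional polynomial factors from congestion. To avoid this, I would proceed by an iterative/decomposition argument: conditioning on the positions of all but one quasi-particle pair, apply a one-dimensional Poincaré inequality to the remaining pair to get an $\Omega(L^{-2})$ gap, then sum over sectors and recurse on the particle number, tracking the inter-sector transitions (pair creation/annihilation) via the pair-creation rates $e^{-\Or(\beta)}$ at finite $\beta$ or the annihilation rates at infinite $\beta$. Ensuring the recursion closes uniformly in the number of quasi-particles, and patching together the diagonal bound with the off-diagonal blocks so that the final constants do not degrade, is the delicate technical point. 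Once both alternatives are established, taking their maximum gives exactly \eqref{eqn:Gap_L_1}.
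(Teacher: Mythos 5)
Your proposal correctly identifies the three central ingredients of the paper's argument: a block decomposition of $\mc{L}_1$ on $\mc{A}^{\rm full}_{\rm m}$, a reduction of the hard ``Abelian'' block to a classical chain on the snake, and the stair-graph Laplacian as the key base-case estimate. However, the way you propose to close the $\Omega(N^{-2})$ bound diverges from the paper, and that step, as written, has a gap that the paper's route is specifically engineered to avoid. You suggest conditioning on the positions of all but one quasi-particle pair, applying a one-pair Poincar\'e inequality, and then recursing on the particle number while ``tracking the inter-sector transitions (pair creation/annihilation).'' At finite $\beta$ the master Hamiltonian genuinely couples particle-number sectors $k$ and $k\pm 2$, so a recursion on $k$ must handle those cross terms, and you acknowledge this without resolving it. The paper instead first kills the $\beta$-dependence: for $\beta = \Omega(\ln N)$, $\|K^{\beta}_{L^2} - K_{L^2}\| = \Theta(L^2 e^{-2\beta}) = \Or(N^{-3})$, so it suffices to estimate the gap of the zero-temperature operator $K_{L^2} = K_T + K_D$, which \emph{is} block diagonal in particle number (only hopping and a diagonal annihilation-cost term survive, no creation/annihilation hopping). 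This eliminates the inter-sector transitions you were worried about. Then, rather than conditioning and recursing on $k$, the paper recurses on the chain length: writing $K_{L^2} = K_{L^2-1}\otimes I + I\otimes (k_T+k_D)$ with $k_T+k_D \succeq 0$ gives $\lambda_{L^2,k}\ge \min\{\lambda_{L^2-1,k},\lambda_{L^2-1,k-1}\}$, bottoming out at $k=2$, which is exactly the stair graph. Your conditioning route would instead require a Poincar\'e inequality for a single pair on a chain with the other particles' positions removed, which is a messier object and not obviously comparable to the clean stair graph.

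Two smaller points. First, your ``diagonal vs.\ off-diagonal'' split is a coarse version of the paper's $\Lambda$-decomposition $\mc{A}^{\rm full}_{\rm m} = \bigoplus_{\Lambda}\mc{A}^{\rm full}_{\rm m}(\Lambda)$; the paper further splits the off-diagonal part into $\Gamma_{\rm int}\neq\emptyset$ and $\Lambda=\emptyset$ cases and establishes an explicit $\Omega(1)$ gap for each directly from the local $4\times 4$ master-Hamiltonian matrices, rather than by a comparison to the diagonal block. Your ``comparison at constant factor'' would give the right conclusion, but you should be aware that for the interface sector the paper obtains an unconditional constant gap, not merely one inherited from the Abelian block. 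Second, your final $\Omega(L^{-2})=\Omega(N^{-2})$ is a dimensional slip: the stair graph has side parameter $n = L^2 - 1$ and minimum eigenvalue $\Theta(n^{-2}) = \Theta(L^{-4}) = \Theta(N^{-2})$ since $N = 2L^2$.
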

\begin{lemma}\label{lem:L_2} Let $\mc{A}^{\rm full}_{\rm e}$ be defined as in \cref{lem:2D_decomposition}, we have
  \[
    \mathrm{Gap}\left(-\mc{L}_2\middle|_{\mc{A}^{\rm full}_{\rm e}}\right) = \max\left\{e^{-\Or(\beta)}, \Omega(N^{-2})\right\}\,.
  \]
\end{lemma}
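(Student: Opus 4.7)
The plan is to mirror the proof of \cref{lem:L_1} and exploit the algebraic and geometric duality between the magnetic and electric sectors. Formulas~\eqref{2eq:local_lindlz} and~\eqref{2eq:local_lind} show that $\mc{L}_{\sigma^z_j}$ for $j\in{\rm comb}$ is identical in form to $\mc{L}_{\sigma^x_j}$ for $j\in{\rm snake}$, with $(\bz_p,a_j)\mapsto(\bx_s,b_j)$ and the same Glauber weights $h_\pm$ from~\eqref{eq:consthpm}. Moreover, $\mc{A}^{\rm full}_{\rm e}$ plays the same role as $\mc{A}^{\rm full}_{\rm m}$: both are generated by one family of stabilizers together with the single-qubit couplings living on the corresponding path. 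Therefore, each step of the argument for \cref{lem:L_1} should transfer to $\mc{L}_2$ almost verbatim, with the snake graph replaced by the comb graph in the final combinatorial estimate.

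First, I would identify the commutative invariant subalgebra of $\mc{A}^{\rm full}_{\rm e}$ consisting of observables diagonal in the joint eigenbasis of $\{\bx_s\}_s$, namely the span of the electric-syndrome projectors $\ketbra{e}{e}$ of~\eqref{eq:basis2d}. Restricted to this subalgebra, $\mc{L}_2$ is a reversible classical Markov generator: a jump at comb site $j$ toggles the pair $(e_s,e_{s'})$ of adjacent star syndromes with heat-bath Glauber rates, so the restricted chain is precisely the Glauber dynamics of a classical Ising-type Hamiltonian on the comb graph. Following the ``classical subalgebra is extremal'' reduction used in the proof of \cref{lem:L_1}, the spectral gap of $\mc{L}_2|_{\mc{A}^{\rm full}_{\rm e}}$ is then bounded below by the spectral gap of this classical chain.

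Second, I would adapt the iterative argument and the tight perturbed-graph-Laplacian estimate of the stair graph alluded to in \cref{sec:technical_overview}. The high-temperature branch $e^{-\Or(\beta)}$ is immediate from the local-couplings estimate of~\cite{Alicki_2009} applied to the comb chain. For the low-temperature branch $\Omega(N^{-2})$, I would run the same site-by-site induction used on the snake: at each step, one peels off a comb site, produces an effective Glauber chain on the remaining sites, and controls the loss using the perturbed stair-graph Laplacian bound.

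The main obstacle is that the comb has a different graph topology from the snake, so one must verify that the induction is still valid when the excised site is a ``junction'' (where a tooth meets the spine) rather than a regular chain site. In the worst case, this requires a separate base case and careful bookkeeping of the effective single-site perturbations produced by excisions at branching points; as long as each such perturbation stays uniformly bounded in $\beta$, the perturbed-Laplacian estimate still delivers $\Omega(N^{-2})$. Combining the resulting classical gap bound with the extremality reduction of step one, and then with \cref{lem:L_1} and \cref{prop:second_step}, completes the chain of reductions proving \cref{thm:fast_mixing_2D}.
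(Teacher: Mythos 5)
Your high-level plan---reuse the $\Lambda$-block decomposition and three-case split from \cref{lem:L_1}, with comb replacing snake---is the right frame, and the cases $\Gamma_{\rm int}\neq\emptyset$ and $\Gamma_{\rm flip}=[L^2]$ do transfer verbatim, since they only use the local $4\times 4$ matrix representations of $\mc{L}_{\sigma^z_j}$, which are identical to those of $\mc{L}_{\sigma^x_j}$ in \eqref{matrix_flip} and \eqref{matrix_int}. But your plan for the abelian case $\Gamma_{\rm ab}=[L^2]$ has a real gap, and this is exactly where the paper's argument departs from that of \cref{lem:L_1}.

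The site-by-site peeling \eqref{eqn:iteration} reduces $\lambda_{L^2,k}$ down to the base case $\lambda_{L',2}$, and for the snake the $k=2$ configuration space (positions of the two $``-"$ signs on a line) is precisely the stair graph $G_{L'-1}$, so the perturbed-Laplacian bound of \cref{thm:graph_spectral} gives $\Omega(N^{-2})$. For the comb, the $k=2$ configuration space is the analogous pair graph of a \emph{tree with degree-3 junction vertices}, which is \emph{not} a stair graph, and \cref{thm:graph_spectral} says nothing about it. You flag the junctions as a bookkeeping concern, but the problem is structural: the reduction terminates on the wrong graph, and you would need a new spectral-gap estimate for the tree's pair graph, which your proposal does not supply. (The peeling step itself is also delicate at a junction: removing a degree-3 vertex changes the tensor structure of $K_{L^2}$ in a way the snake decomposition never encounters.)

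The paper avoids the induction altogether and instead proves \cref{lem:gap_abelian_comb} via a path-covering argument (\cref{fac:comb_line_cover}). It defines ${\bf P}_{\rm comb}$ as the set of shortest paths between the $L$ degree-one end-points of the comb; this set has $\mc{O}(L^2)$ elements, each of length $\mc{O}(L)$. For any unit $\alpha\in\mc{S}_{L^2,k}$, an averaging argument produces a path $\tilde{l}\in {\bf P}_{\rm comb}$ carrying $\Omega(L^{-2})$ of the mass of $\alpha$ with at least two $``-"$ signs on $\tilde{l}$. Restricting $K_{L^2}$ to $\tilde{l}$ yields a genuine 1D chain of length $\mc{O}(L)$, where the stair-graph bound $\lambda_{\ell,2}=\Omega(L^{-2})$ from \cref{lem:gap_abelian} does apply (together with the peeling recursion $\lambda_{\ell,q}\geq\lambda_{\ell,2}$). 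The averaging costs an extra $L^{-2}$ factor, giving $\alpha^* K_{L^2}\alpha \geq \Omega(L^{-2})\cdot\Omega(L^{-2}) = \Omega(L^{-4})=\Omega(N^{-2})$. This covering-and-restriction step is the genuinely new ingredient that handles the comb topology, and it is what your proposal is missing.
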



The $\exp(- \Or(\beta))$ spectral gap of $\mathrm{Gap}(-\mc{L}_1|_{\mc{A}^{\rm full}_{\rm m}})$ and $\mathrm{Gap}(-\mc{L}_2|_{\mc{A}^{\rm full}_{\rm e}})$ has been proved in \cite[Section 7]{Alicki_2009}. In what follows, we focus on the gap estimate of order ${\rm poly}(N^{-1})$. We first prove \cref{lem:L_1}.

\begin{proof}[Proof of \cref{lem:L_1}]
Recall that $\mc{A}^{\rm full}_{\rm m}=\mc{B}\left(\mc{H}^{\rm m}_{\rm b}\right)$ is spanned by the basis matrices $\ket{m}\bra{m'}$, where $m,m'\in \{-1,1\}^{L^2}$ and $\#\left\{m_i=-1\right\}, \#\left\{m'_i=-1\right\} \in 2\ZZ$. We note that this set of basis matrices $\left\{\ket{m}\bra{m'}\right\}$ also forms an orthogonal basis for $\mc{A}^{\rm full}_{\rm m}$ with respect to the GNS inner product defined as in \eqref{eqn:GNS_inner_product} by the reduced Gibbs state $\tr_{\mc{A}^{\rm full}_{\rm e}}(\si_\beta) \propto \exp(- \beta \sum_p \bz_p)$
\footnote{Note from \cref{lem:2D_decomposition} that $\si_\beta \in \mathcal{A}^{\rm full}_{\rm m} \otimes \mathcal{A}^{\rm full}_{\rm e}$.}.


We next decompose the space $\mathcal{A}^{\rm full}_{\rm m}$ such that $\mc{L}_1$ presents a block diagonal form.  Note that $\mathcal{A}^{\rm full}_{\rm m}$ is spanned by $\ket{m'}\bra{m}$ with $\ket{m},\ket{m'} \in \mc{H}^{\rm m}_{\rm b}$ having an even number of $-$ signs.
For any $\Lambda \subset \{1,2,\ldots,L^2\}$ with even $L^2 - |\Lambda|$, we introduce
the subspace $\mathcal{A}^{\rm full}_{\rm m}(\Lambda)$ of $\mathcal{A}^{\rm full}_{\rm m}$
spanned by $\ket{m'}\bra{m}$, where
$m = m'$ on $\Lambda$ and $m =  - m'$ on the complement $\Lambda^c := \{1,2,\ldots,L^2\} \backslash \Lambda$. It is straightforward to check that the subspaces $\mathcal{A}^{\rm full}_{\rm m}(\Lambda)$ are
orthogonal with respect to both the GNS and HS inner products, and thereby
\begin{align} \label{eq:decom}
  \mc{A}^{\rm full}_{\rm m} =\bigoplus_{\substack{\Lambda\subset \{1,\dots,L^2\}:\\L^2-\Lambda~\text{even}}} \mathcal{A}^{\rm full}_{\rm m}(\Lambda)\,.
\end{align}
In addition, by writing $\ket{m'}\bra{m} = (\ket{m}\bra{m})_{\Lambda} \otimes (\ket{-m}\bra{m})_{\Lambda^c} \in \mathcal{A}^{\rm full}_{\rm m}(\Lambda)$,
we can further decompose
\begin{align*}
   \mathcal{A}^{\rm full}_{\rm m}(\Lambda) = \mc{A}^{\rm ab}(\Lambda) \otimes \mc{F}(\Lambda)\,,
\end{align*}
where $\mc{A}^{\rm ab}(\Lambda)$ is the Abelian algebra generated by the projections on bonds restricted to $\Lambda$ and $\mc{F}(\Lambda)$ is the space spanned by flips of bonds restricted to $\Lambda^c$.
We observe that any partition of $\{1,2,\ldots, L^2\}$ into $\Lambda \cup \Lambda^c$ induces a partition of the spins $j$ on the snake,  equivalently, pairs of neighboring plaquettes/bonds $\{\bz_{j-1},\bz_{j}\}$
into three sets: $\Gamma_{\rm flip}$ for both bonds in $\Lambda^c$, $\Gamma_{\rm ab}$ for both bonds in $\Lambda$, and $\Gamma_{\rm int}$ for one bond in $\Lambda$ and the other in $\Lambda^c$.

The following lemma extends \cite[Lemma 5]{Alicki_2009}, which was originally stated for the 1D ferromagnetic Ising model. We provide a self-contained proof in \cref{subsec:proof} with explicit computations of the local matrix representations of the master Hamiltonian of $\mc{L}_1$.

\begin{lemma} \label{lem:blockdecom}
    $\mc{L}_1$, defined in \eqref{eq:locallind_2}, on $\mathcal{A}^{\rm full}_{\rm m}$ is block diagonal for the decomposition \eqref{eq:decom}.
\end{lemma}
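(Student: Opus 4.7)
The plan is to exhibit a commuting family of superoperators whose joint eigenspaces are exactly the summands $\mc{A}_{\rm m}^{\rm full}(\Lambda)$, and then check that each $\mc{L}_{\sigma^x_j}$ commutes with every member of that family. Because the decomposition \eqref{eq:decom} is orthogonal in both the HS and GNS inner products (which follows from $\sigma_\beta$ being diagonal in the basis $\{\ket{m}\}$ of $\mc{H}^{\rm m}_{\rm b}$), the desired block-diagonality of $\mc{L}_1$ on $\mc{A}^{\rm full}_{\rm m}$ is equivalent to showing that each $\mc{L}_{\sigma^x_j}$ preserves every $\mc{A}_{\rm m}^{\rm full}(\Lambda)$ individually.

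For the first step, I would define, for each plaquette $p$, the superoperator $\mc{D}_p(A) := \bz_p A \bz_p$. Since the $\{\bz_p\}$ mutually commute, so do the $\{\mc{D}_p\}$. Using $\bz_p \ket{m} = m_p \ket{m}$, a direct computation gives $\mc{D}_p(\ket{m'}\bra{m}) = m'_p m_p \ket{m'}\bra{m}$, which equals $+\ket{m'}\bra{m}$ precisely when $p \in \Lambda$ and $-\ket{m'}\bra{m}$ precisely when $p \in \Lambda^c$. Hence $\mc{A}_{\rm m}^{\rm full}(\Lambda)$ is exactly the joint eigenspace of $\{\mc{D}_p\}_p$ with the corresponding sign pattern $\chi_\Lambda$, and it suffices to prove $[\mc{L}_{\sigma^x_j}, \mc{D}_p] = 0$ for every snake spin $j$ and every plaquette $p$.

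To verify this commutation I would split into two cases. If $p$ is not one of the two plaquettes $p-1, p$ adjacent to $j$ on the snake, then $\sigma^x_j$ commutes with $\bz_p$; since the bond spectral projections $P_j^{0,\pm}$ in \eqref{2eq:exp_proj} are polynomials in $\bz_{p-1}, \bz_p$, they also commute with $\bz_p$, so all three Fourier components $a_j, a_j^\dag, a_j^0$ from \eqref{2eqn:a_j_def} commute with $\bz_p$ and commutation with $\mc{D}_p$ is immediate. When $p \in \{p-1, p\}$, instead, $\sigma^x_j$ anti-commutes with $\bz_p$ while the projections still commute, so every $L \in \{a_j, a_j^\dag, a_j^0\}$ satisfies $L \bz_p = -\bz_p L$. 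The key observation, and the main obstacle in the proof, is that any canonical Lindblad piece $L^\dag A L - \tfrac{1}{2}\{L^\dag L, A\}$ is then invariant under $A \mapsto \bz_p A \bz_p$: the two anti-commutation signs picked up when pushing $\bz_p$ past $L$ on the right and past $L^\dag$ on the left cancel, while $L^\dag L$ commutes with $\bz_p$ (being a product of two anti-commuting self-adjoint operators). Writing $\mc{L}_{\sigma^x_j}$ as the sum of such pieces over the Bohr frequencies $\omega \in \{0, \pm 4\}$ with the weights $h_\pm$ in \eqref{eq:consthpm} (which is exactly the content of \eqref{2eq:local_lind} after expanding the commutators), one concludes $\mc{L}_{\sigma^x_j}(\bz_p A \bz_p) = \bz_p \, \mc{L}_{\sigma^x_j}(A) \, \bz_p$. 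Summing $[\mc{L}_{\sigma^x_j}, \mc{D}_p] = 0$ over $j \in \mathrm{snake}$ then yields the lemma, and the same argument applies verbatim to $\mc{L}_2$ on $\mc{A}^{\rm full}_{\rm e}$ via the $X \leftrightarrow Z$ duality.
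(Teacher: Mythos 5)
Your proof is correct and takes a genuinely different route from the paper's. The paper proves block-diagonality by brute force: it computes the jump operators $a_j, a_j^\dagger, a_j^0$ explicitly as local $4\times 4$ operations on the bond pair $(j-1,j)$, and then writes down the matrix representation of the master Hamiltonian of $\mc{L}_{\sigma^x_j}$ in each of the three cases $j\in\Gamma_{\rm ab},\Gamma_{\rm flip},\Gamma_{\rm int}$ (giving \eqref{matrix_ab}, \eqref{matrix_flip}, \eqref{matrix_int}), from which block-diagonality is immediate. You instead exhibit the summands $\mc{A}^{\rm full}_{\rm m}(\Lambda)$ as the joint eigenspaces of the commuting, idempotent, self-adjoint conjugation superoperators $\mc{D}_p(A)=\bz_p A\bz_p$, and reduce the lemma to the commutation $[\mc{L}_{\sigma^x_j},\mc{D}_p]=0$, which you verify by tracking anti-commutation signs through each canonical Lindblad piece $L^\dag A L - \tfrac12\{L^\dag L, A\}$. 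This is a cleaner, more conceptual argument, and it scales better (it never needs local basis choices or case splits over $\Gamma_{\rm ab/flip/int}$, and extends verbatim to $\mc{L}_2$). What the paper's explicit route buys, and yours does not, is the matrix representations themselves: \eqref{matrix_ab}, \eqref{matrix_flip}, \eqref{matrix_int} are reused downstream in the three-case analysis of \cref{lem:L_1} (e.g.\ \eqref{matrix_int_main}, \eqref{matrix_flip_main}) and in \cref{lem:gap_abelian}. So in the structure of the full proof the explicit computation is not wasted, but as a proof of \emph{this} lemma your argument is the tighter one.

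Two small points of precision. First, the parenthetical ``being a product of two anti-commuting self-adjoint operators'' is off: $L=a_j$ and $L^\dag=a_j^\dag$ are not self-adjoint. The correct justification is simply that $L$ and $L^\dag$ each pick up a factor $-1$ when pushed past $\bz_p$, so the product $L^\dag L$ picks up $(-1)^2=+1$ and commutes with $\bz_p$. Second, it is cleaner to apply your cancellation argument directly to the canonical form \eqref{eqq:davies2}, which is literally a sum of pieces $\gamma(\omega)\big(S(\omega)^\dag A S(\omega) - \tfrac12\{S(\omega)^\dag S(\omega),A\}\big)$ with $S(\omega)\in\{a_j,a_j^\dag,a_j^0\}$; citing \eqref{2eq:local_lind} and ``expanding the commutators'' is an unnecessary detour.
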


Now, we are ready to estimate the gap of $\mc{L}_1$ on each $\mc{A}^{\rm full}_{\rm m}(\Lambda)$ by the following three cases.


\begin{itemize}
    \item If $\Gamma_{\rm int} \neq \emptyset$, by \cref{matrix_int} in the    proof of \cref{lem:blockdecom}, under an orthonormal basis with respect to the GNS inner product, the master Hamiltonian of $- \mc{L}_{\sigma^x_j}$ on $\mc{A}^{\rm full}_{\rm m}(\Lambda)$ takes the form:
    \begin{equation} \label{matrix_int_main}
    I_{j-2}\otimes \left[\begin{matrix}
     \frac{h_- + 1}{2} & 0 & 0 & 0 \\
    0 &  \frac{h_+ + 1}{2} & 0 & 0 \\
    0 & 0 &  \frac{h_- + 1}{2} & 0   \\
    0 & 0 & 0 &  \frac{h_+ + 1}{2}\\
    \end{matrix}\right] \otimes  I_{L^2 - j}\succeq \frac{1}{2}\,.
    \end{equation}
    This implies
    \begin{align*}
        - \left\l A, \mc{L}_1|_{\mc{A}^{\rm full}_{\rm m}(\Lambda)} A \right \r_{\si_\beta} \ge \sum_{j \in \Gamma_{\rm int}} - \left\l A, \mc{L}_{\sigma^x_j}|_{\mc{A}^{\rm full}_{\rm m}(\Lambda)} A \right \r_{\si_\beta} \ge \left|\Gamma_{\rm int}\right|\cdot \frac{1}{2} \l A, A \r_{\si_\beta}\,.
    \end{align*}
    for any $A\in \mc{A}^{\rm full}_{\rm m}(\Lambda)$, where the first step is by $-\mc{L}_{\sigma_j^x} \succeq 0$. It follows that
    \begin{equation}\label{eqn:gap_int}
        - \mc{L}_1|_{\mc{A}^{\rm full}_{\rm m}(\Lambda)} \succeq  \frac{\left|\Gamma_{\rm int}\right|}{2}\,.
    \end{equation}
    \item If $\Gamma_{\rm flip} = \{1,2,\ldots,L^2\}$ (i.e., $\Lambda=\emptyset$), by \eqref{matrix_flip}, the master Hamiltonian of $- \mc{L}_{\sigma^x_j}$ on $\mc{A}^{\rm full}_{\rm m}(\emptyset)$ is
     \begin{equation} \label{matrix_flip_main}
        I_{j-2}\otimes \left[\begin{matrix}
           - \frac{h_+ + h_-}{2} & 0 & 0 & 0 \\
            0 & -1 & 1 & 0 \\
            0 & 1 & -1 & 0   \\
            0 & 0 & 0 & - \frac{h_+ + h_-}{2}\\
        \end{matrix}\right] \otimes I_{L^2 - j}\,.
    \end{equation}
    This case has already been discussed in~\cite{Alicki_2009} (see after Proposition 2), from which we have
    \begin{equation}\label{eqn:gap_flip}
        -\mc{L}_1|_{\mc{A}^{\rm full}_{\rm m}(\emptyset)}\succeq \frac{1}{2}\,.
    \end{equation}

    \item  If $\Gamma_{\rm ab} = \{1,2,\ldots,L^2\}$ (i.e., $\Lambda=[L^2]$),
    \cite[Proposition 2]{Alicki_2009} has given a spectral gap lower bound that exponentially decays in $\beta$. However, this lower bound is far from sharp at low temperatures. Next, we shall derive a lower bound of gap polynomially decaying in $N$ but independent of $\beta$, which is summarized in the following lemma.
    \begin{lemma}\label{lem:gap_abelian} Given the notation above, we have,  when $\beta = \Omega(\ln N)$,
    \[
    \mathrm{Gap}\left(-\mc{L}_1\middle|_{\mc{A}^{\rm full}_{\rm m}\left([L^2]\right)}\right) = \Omega(N^{-2})\,.
    \]
    \end{lemma}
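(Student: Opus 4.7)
My plan is to reduce the spectral-gap problem on the Abelian sector $\mc{A}^{\rm full}_{\rm m}([L^2]) = \mc{A}^{\rm ab}([L^2])$ to a classical Glauber-type Markov chain on bond configurations, and then lower bound its Poincar\'e constant by an iterative stratification argument that exploits an exact cancellation between the Gibbs weight ratio and the small creation rate.

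First, combining \cref{lem:blockdecom} with the explicit generator \eqref{2eq:local_lind} specialized to $\Gamma_{\rm ab} = \{1,\ldots,L^2\}$, the master Hamiltonian of $-\mc{L}_1$ in the GNS-orthonormal diagonal basis $\{\ketbra{m}{m}\}_m$ generates a classical reversible Markov chain on the even-parity bond configurations $m \in \{\pm 1\}^{L^2}$ with stationary measure $\pi(m) \propto \prod_j e^{\beta m_j}$. Each snake spin $\sigma^x_j$ acts by simultaneously flipping the pair $(m_{j-1}, m_j)$, with Glauber rate $h_+ = \Theta(1)$ on energy-decreasing moves, $h_- = \Theta(e^{-4\beta})$ on energy-increasing moves, and $(h_+ + h_-)/2 = \Theta(1)$ on isoenergetic moves. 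In the regime $\beta = \Omega(\ln N)$ this gives $h_- = O(N^{-4})$. It therefore suffices to show that the Poincar\'e constant of this pair-flip chain is $\Omega(N^{-2})$.

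Next, I would pass to the domain-wall representation: let $d_j = \mathbf{1}\{m_j \ne m_{j+1}\}$ and note that $|d|$ is a well-defined even integer $2k$. A pair flip at position $j$ toggles $d_{j-2}$ and $d_j$ while leaving $d_{j-1}$ untouched, so the induced process on $d$ is a system of $2k$ hard-core particles on a 1D chain that hop to adjacent empty sites with rate $\Theta(1)$, annihilate pairwise with rate $h_+ = \Theta(1)$, and are created in adjacent pairs with rate $h_- = O(N^{-4})$. I would then stratify the state space by the particle number $2k$ and analyze the chain layer by layer: within a fixed-$2k$ stratum the dynamics reduces to an exclusion random walk on $L^2$ sites whose spectral gap is $\Omega(N^{-2})$ by canonical paths; to glue the strata I invoke detailed balance, which gives the weight ratio $\pi(\text{$2k$ walls})/\pi(\text{$2(k+1)$ walls}) = \Theta(e^{4\beta})$ and exactly cancels the exponentially small creation rate $h_- = \Theta(e^{-4\beta})$, producing a $\Theta(1)$ effective conductance between adjacent strata. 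Encoding this layered structure gives the ``stair graph'' referenced in \cref{sec:technical_overview}: strata index the vertices, intra-stratum hops are horizontal edges, and cross-stratum creations/annihilations are vertical edges weighted by the true Gibbs flux. A decomposition-of-variance inequality then bounds the full gap below by the smaller of the intra-stratum $\Omega(N^{-2})$ gap and the smallest nonzero eigenvalue of the associated perturbed weighted Laplacian.

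The main obstacle, and the reason the $\exp(-\Or(\beta))$ estimate of \cite[Proposition 2]{Alicki_2009} is loose at low temperature, is that the naive canonical-path argument routes mass between configurations one wall-pair at a time and accumulates a factor of $e^{-4\beta}$ at every creation step, instead of realizing the full cancellation against the Gibbs weight ratio at once. The decisive step is therefore a tight, $\beta$-uniform lower bound on the smallest nonzero eigenvalue of the stair-graph Laplacian, which I would obtain by an induction on the stratum index $k$ implemented through a weighted Hardy-type inequality that extracts the exact Gibbs weight at each level. Controlling the perturbation caused by intra-stratum hopping and ensuring that the induction loses at most a polynomial factor in $N$ is the heart of the argument and yields the claimed bound $\mathrm{Gap}(-\mc{L}_1|_{\mc{A}^{\rm ab}([L^2])}) = \Omega(N^{-2})$.
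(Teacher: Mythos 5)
The high-level observation that you are analyzing a classical reversible Markov chain on bond configurations, and that it should be stratified by quasi-particle number, is the right direction. However, your proposal deviates from the paper in several places, and at least one deviation is a genuine misunderstanding that would derail the proof.

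\paragraph{Misidentification of the stair graph.} You write that the stair graph encodes the inter-stratum structure, with strata indexing its vertices and cross-stratum creations/annihilations forming vertical edges. This is not what the stair graph is. In the paper, the stair graph with parameter $n = L^2 - 1$ \emph{is} the configuration space of the $k=2$ stratum: the vertex $(i,j)$ records the positions of the two ``$-$'' signs, the edges are single pair-flip hops within that sector, and the extra diagonal weight $D_n$ comes from the annihilation term $k_D$ acting on adjacent pairs. Its $\binom{n+1}{2} = \Theta(L^4)$ vertices cannot index the $\Or(L^2)$ strata. This matters because the paper's $\Omega(N^{-2})$ bound comes from the minimum eigenvalue of this perturbed Laplacian, a bound you would never reach if you tried to read it as a Laplacian over strata.

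\paragraph{Missing the zero-temperature reduction.} The decisive simplification in the paper is that once $\beta = \Omega(\ln N)$, one can replace $K^\beta_{L^2}$ by its zero-temperature version $K_{L^2}$ up to an operator-norm error $\norm{K^\beta_{L^2} - K_{L^2}} = \Theta(L^2 e^{-2\beta}) = \Or(L^{-6})$. At zero temperature, $h_- = 0$ and the generator \emph{exactly} preserves particle number, so the strata are genuinely invariant subspaces and one simply bounds $\lad_{\min}$ on each block; no gluing between strata is needed at all. Your proposed detailed-balance cancellation between $\pi$-ratios and $h_-$, followed by a decomposition-of-variance inequality to patch strata together, is an alternative route, but it is substantially harder (you would have to analyze the projected birth--death chain on strata as well as the killed intra-stratum dynamics) and it is not what the ``decomposition trick'' in the technical overview refers to.

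\paragraph{The intra-stratum bound.} For $k \geq 3$ the paper does not use canonical paths; it uses the tensor recursion $K_{L^2} = K_{L^2-1} \otimes I_1 + I_{L^2-2} \otimes (k_T + k_D)$ and positivity of $k_T + k_D$ to reduce $\lad_{L^2,k}$ to $\min\{\lad_{L^2-1,k}, \lad_{L^2-1,k-1}\}$, eventually landing on $\lad_{\bullet,2}$, which is then handled by the explicit stair-graph Laplacian estimate. Your canonical-paths sketch also ignores the diagonal killing term from annihilation, which is exactly what makes $K_{L^2}|_{\mc{S}_{L^2,k}}$ positive definite rather than a mere graph Laplacian with a zero mode. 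Finally, the passage to the further-derived variables $d_j = \mathbf{1}\{m_j \neq m_{j+1}\}$ is unnecessary and appears conflated with the actual particle variables $m_j$ later in your sketch; in the paper the particles are the $m_j = -1$ plaquettes themselves, and the pair-flip on $(m_{j-1},m_j)$ already gives the hop/annihilate/create process directly.

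In short: the stratification insight and the intuition that detailed balance should neutralize $h_-$ are both sound, but the proof as proposed has a wrong picture of the stair graph, omits the zero-temperature operator-norm perturbation step that makes the block structure exact, and replaces the paper's simple iterative reduction with an undeveloped canonical-paths/variance-decomposition argument whose correctness is not established.
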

\end{itemize}
The proof of \cref{lem:gap_abelian} is postponed to \cref{subsec:proof} for readability, \REV{which relies on a novel estimate of the minimal eigenvalue of a perturbed graph Laplacian.}
Once \cref{lem:gap_abelian} is proved, we can combine it with \cref{lem:blockdecom}, as well as
\eqref{eqn:gap_int} and \eqref{eqn:gap_flip}, to obtain the desired \eqref{eqn:Gap_L_1}:
\begin{align*}
\mathrm{Gap}\left(-\mc{L}_1|_{\mc{A}_{\rm m}^{\rm full}}\right)\geq \min\left\{\frac{1}{2},\mathrm{Gap}\left(-\mc{L}_1\middle|_{\mc{A}^{\rm full}_{\rm m}\left([L^2]\right)}\right)\right\}= \Omega(N^{-2})\,.
\end{align*}
\end{proof}

We next prove \cref{lem:L_2}, whose basic ideas are similar to that of \cref{lem:L_1} but require some more technical arguments due to the comb structure. To be specific, recall the formula \eqref{2eq:local_lindlz}, the local master Hamiltonian representation of $\mc{L}_{\si_j^{z}}$ is the same as that of $\mc{L}_{\si_j^{x}}$. However, here $\mc{L}_1$ acts on a 1D straight line (snake), where each observable $\bz_p$ connects only two neighboring qubits along the chain, while $\mc{L}_2$ acts on a comb-like 1D structure, where some observables $\bx_s$ can connect three neighboring qubits on the comb. This difference prevents us from directly applying the proof of \cref{lem:L_1}.


\begin{proof}[Proof of \cref{lem:L_2}]

The starting point of the proof of \cref{lem:L_2} follows from that of \cref{lem:L_1}. We recall that $\mc{A}^{\rm full}_{\rm e}$ is spanned by the orthogonal basis $\ket{e'}\bra{e}$ for the GNS inner product induced by the reduced Gibbs state $\tr_{\mc{A}^{\rm full}_{\rm m}}(\si_\beta) \propto \exp(- \beta \sum_s \bx_s)$,
where $e,e'\in \{+,-\}^{L^2}$ and $\#\left\{e_i=-1\right\}, \#\left\{e'_i=-1\right\} \in 2 \ZZ$. Moreover, we decompose $\mc{A}^{\rm full}_{\rm e} =\bigoplus_{\Lambda} \mathcal{A}^{\rm full}_{\rm e}(\Lambda)$, where $\Lambda \subset \{1,2,\ldots,L^2\}$ with even $L^2 - |\Lambda|$ and the subspace $\mathcal{A}^{\rm full}_{\rm e}(\Lambda)$ is spanned by $\ket{e'}\bra{e}$, where
$e = e'$ on $\Lambda$ and $e =  - e'$ on $\{1,2,\ldots,L^2\} \backslash \Lambda$. We also partition pairs of neighboring bonds into three sets: $\Gamma_{\rm flip}$ for both bonds in $\Lambda^c$, $\Gamma_{\rm ab}$ for both bonds in $\Lambda$, and $\Gamma_{\rm int}$ for one bond in $\Lambda$ and the other in $\Lambda^c$.

Then, a similar lemma as \cref{lem:blockdecom} holds for $\mc{L}_2$, since its proof only needs the properties of local Lindbladians $\mc{L}_{\si_j^z}$ that are the same as those of $\mc{L}_{\si_j^x}$.

\begin{lemma}
    $\mc{L}_2$, defined in \eqref{eq:locallind_2}, on $\mathcal{A}^{\rm full}_{\rm e}$ is block diagonal for the decomposition:
    \begin{equation*}
         \mc{A}^{\rm full}_{\rm e} =\bigoplus_{\substack{\Lambda\subset \{1,\dots,L^2\}:\\L^2-\Lambda~\text{even}}} \mathcal{A}^{\rm full}_{\rm e}(\Lambda)\,.
    \end{equation*}
\end{lemma}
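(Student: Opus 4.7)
The plan is to transcribe the proof of \cref{lem:blockdecom} almost verbatim, exploiting the fact that the local Davies generators $\mc{L}_{\sigma^z_j}$ for $j \in \text{comb}$ have exactly the same operator-algebraic structure as the $\mc{L}_{\sigma^x_j}$ for $j \in \text{snake}$, with the roles of the magnetic observables $\bz_p$ and the electric observables $\bx_s$ interchanged. In particular, the Fourier components $b_j, b_j^\dag, b_j^0$ in \eqref{2eq:local_lindlz} are built from the eigenprojections of $-(\bx_{s'} + \bx_s)$ in exactly the same way that $a_j, a_j^\dag, a_j^0$ in \eqref{2eqn:a_j_def} are built from the eigenprojections of $-(\bz_{p'} + \bz_p)$; the only combinatorial input is that the spin $j$ belongs to exactly two star operators $\bx_s, \bx_{s'}$, a property enforced by the construction of the comb, so the local support of $\mc{L}_{\sigma^z_j}$ on $\mathcal{A}^{\rm full}_{\rm e}$ is just the two bonds $e_s$ and $e_{s'}$.

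Concretely, I would first decompose $\mathcal{A}^{\rm full}_{\rm e}(\Lambda) = \mc{A}^{\rm ab}_{\rm e}(\Lambda) \otimes \mc{F}_{\rm e}(\Lambda)$ in analogy with the snake case, where $\mc{A}^{\rm ab}_{\rm e}(\Lambda)$ is the Abelian algebra generated by the star projections restricted to $\Lambda$ and $\mc{F}_{\rm e}(\Lambda)$ is spanned by the electric-bond flips restricted to $\Lambda^c$. Then, for each $j \in \text{comb}$ with $j = s \cap s'$, I would classify the pair $\{s,s'\}$ into one of three types according to whether both, neither, or exactly one of the bonds lies in $\Lambda$, and compute the matrix representation of the master Hamiltonian of $-\mc{L}_{\sigma^z_j}$ on the four-dimensional local subalgebra generated by $\bx_s$ and $\bx_{s'}$. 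By the parallel algebraic structure of \eqref{2eq:local_lindlz} and \eqref{2eq:local_lind}, these local matrices coincide with those computed in the snake case, namely \eqref{matrix_int_main} and \eqref{matrix_flip_main}, together with the analogous diagonal matrix for the $\Gamma_{\rm ab}$ case. Hence each local term preserves $\mathcal{A}^{\rm full}_{\rm e}(\Lambda)$.

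The only subtlety I anticipate is that the comb has branching nodes, so a single star may share spins with more than two bonds on the comb, unlike the strictly linear snake. However, block diagonality is a strictly local statement: its proof only uses the fact that every spin on the comb lies on exactly two stars, a property that continues to hold independently of the branching geometry. The non-linear adjacency of the comb affects the shape of the induced quasi-1D Ising-type graph that governs the eventual spectral gap estimate (the analogue of \cref{lem:gap_abelian} for $\mc{L}_2$), but not the block-diagonal structure of $\mc{L}_2$ itself. Summing over $j \in \text{comb}$ then gives the claim for $\mc{L}_2 = \sum_{j \in \text{comb}} \mc{L}_{\sigma^z_j}$, so the only real work is bookkeeping: verifying that every formula from the snake analysis transfers unchanged under the dictionary $\bz_p \leftrightarrow \bx_s$ and $\sigma^x_j \leftrightarrow \sigma^z_j$.
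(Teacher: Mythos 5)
Your proposal is correct and follows exactly the route the paper itself takes: the paper's proof of this lemma is a one-line observation that the argument for $\mc{L}_1$ in \cref{lem:blockdecom} transfers verbatim because the local Lindbladians $\mc{L}_{\sigma^z_j}$ share the algebraic form of $\mc{L}_{\sigma^x_j}$, with each spin $j$ on the comb lying on exactly two stars so that the local matrix computations for the three cases $\Gamma_{\rm ab},\Gamma_{\rm flip},\Gamma_{\rm int}$ carry over unchanged. You also correctly isolate the only genuine subtlety — that the comb's branching affects the quasi-1D graph geometry relevant for the later spectral gap estimate (\cref{lem:gap_abelian_comb}), but not the purely local block-diagonality statement here.
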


Next, we consider three cases: 1. $\Gamma_{\rm int}\neq \emptyset$; 2. $\Gamma_{\rm flip}= \{1,\ldots,L^2\}:=[L^2]$; 3. $\Gamma_{\rm ab} = [L^2]$. Noting that again, the arguments of \eqref{eqn:gap_int} and \eqref{eqn:gap_flip} only uses
local Lindbladians $\mc{L}_{\si_j^x}$, we have similar estimates for $\mc{L}_2$:
\[
-\mathcal{L}_2|_{\mc{A}_{\rm e}^{\rm full}(\Lambda),\,\Gamma_{\rm int}\neq \emptyset} \succeq \frac{1}{2}\quad \text{and}\quad -\mathcal{L}_2|_{\mc{A}_{\rm e}^{\rm full}(\emptyset)} \succeq \frac{1}{2}\,.
\]
We now consider the third case $\Gamma_{\rm ab} = [L^2]$ and prove the following result, which finishes the proof of \cref{lem:L_2}.

\begin{lemma}\label{lem:gap_abelian_comb} Given the notation above, we have, when $\beta = \Omega(\ln N)$,
    \[
    \mathrm{Gap}\left(-\mc{L}_2\middle|_{\mc{A}^{\rm full}_{\rm m}\left([L^2]\right)}\right)= \Omega(N^{-2})\,.
    \]
\end{lemma}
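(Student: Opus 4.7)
The strategy mirrors the proof of \cref{lem:gap_abelian}, adapted from the snake (a simple cycle) to the comb graph. On the abelian subspace $\mc{A}^{\rm full}_{\rm e}([L^2])$, each basis element $\ket{e}\bra{e}$ (with $e\in\{\pm1\}^{L^2}$ and $\#\{e_i=-1\}\in 2\ZZ$) is simultaneously diagonal in $\{\bx_s\}_s$, so $\mc{L}_{\sigma^z_j}$ collapses on this subspace to a single-site spin flip on star $j$ with Glauber rates determined by the local energy change in $-\sum_s \bx_s$. The restriction of $\mc{L}_2$ is therefore the generator of classical Glauber dynamics for an Ising-type Hamiltonian on the \emph{comb graph} $G_{\mathrm{comb}}$ whose vertices are the stars and whose edges are the comb spins. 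Equivalently, the master Hamiltonian is a weighted perturbed graph Laplacian on $G_{\mathrm{comb}}$, directly analogous to the stair-graph Laplacian that arose in the snake case, but with the comb geometry replacing the cycle.

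Next, I would carry over the iterative peeling argument. The comb decomposes into a spine of length $L$ together with $L$ teeth of length roughly $L-1$ attached at each spine vertex. For each tooth, holding its root spine-vertex frozen, the restricted dynamics is exactly a 1D Ising chain with a boundary field, so the proof of \cref{lem:gap_abelian} applies verbatim and yields a gap of order $\Omega(L^{-2})=\Omega(N^{-1})$ on each tooth subspace. After peeling all teeth, the effective generator on the spine is again a 1D Glauber dynamics (with local ``fields'' inherited from the integrated-out teeth), and the same stair-graph Laplacian estimate used in \cref{lem:gap_abelian} gives an $\Omega(L^{-2})$ gap on the spine. Combining the spine and teeth via \cref{lem1} (item 4) in the iterative fashion of the snake proof then yields the overall $\Omega(N^{-2})$ bound, using $N=2L^2$.

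The main obstacle I expect is the branch-point analysis. At a spine vertex, $G_{\mathrm{comb}}$ has degree $3$ rather than $2$, so the local block of the master Hamiltonian at such a site couples three neighboring $\bx_s$'s instead of two, enlarging the matrices appearing in the analog of \eqref{matrix_int_main}--\eqref{matrix_flip_main} and modifying the associated perturbed Laplacian. I expect the tight minimal-eigenvalue estimate for the stair Laplacian to extend to this setting, since the extra terms contribute only nonnegative diagonal ``field'' corrections to the Laplacian that should preserve the $\Omega(N^{-2})$ scaling; however, verifying this rigorously, and correctly tracking the global parity constraint on $e$ through the successive tooth-then-spine peelings, constitute the bulk of the technical work. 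A cleaner alternative, which I would pursue in parallel, is a Dirichlet-form comparison: show that the comb Dirichlet form dominates a sum of one spine-cycle Dirichlet form and $L$ independent tooth Dirichlet forms (up to an $\Or(1)$ constant), so that the gap of $\mc{L}_2$ is bounded below by the minimum over these components, which is $\Omega(L^{-2})$ by \cref{lem:gap_abelian}.
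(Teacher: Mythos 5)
Your reduction of $\mc{L}_2\big|_{\mc{A}^{\rm full}_{\rm e}([L^2])}$ to a perturbed graph Laplacian on the comb is correct and matches the paper's setup, but neither of your two completion strategies is the one the paper uses, and as stated both have a gap that I do not see how to repair cheaply.

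First, the Dirichlet-form decomposition into one spine form plus $L$ tooth forms does not yield the claimed conclusion. Each tooth generator $K_i$, acting only on the $\Or(L)$ sites of that tooth, has an enormous kernel on the full configuration space $\mc{S}_{L^2,k}$ (namely all vectors supported on configurations that are constant on that tooth), so $\lambda_{\min}(K_i\big|_{\mc{S}_{L^2,k}})=0$; the statement ``the gap of $\mc{L}_2$ is bounded below by the minimum over these components'' is therefore vacuous. One would instead need a Poincar\'e-type \emph{decomposition of variance}, which is exactly the step you have not supplied. Second, the peeling strategy has two problems: (i) iterating \cref{lem1} item 4 over $L$ teeth multiplies a factor of order $\frac{g}{g+\Or(1)}$ each time, which for $g=\Or(L^{-2})$ gives a bound decaying much faster than $N^{-2}$, not $\Omega(N^{-2})$; and (ii) after all peelings you still need a base-case minimal-eigenvalue estimate for the $k=2$ subspace \emph{on the comb itself}, which is an independent graph-Laplacian problem not covered by the stair-graph computation of \cref{thm:graph_spectral}. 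The branch-point issue you flag is real but secondary to these.

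The paper's route is different and avoids both pitfalls. It defines $\mathbf{P}_{\rm comb}$, the $\Theta(L^2)$ shortest paths between pairs of degree-one vertices of the comb, each of length $\Or(L)$. The averaging argument in \cref{fac:comb_line_cover} shows that for any unit $\alpha\in\mc{S}_{L^2,k}$ with $k\ge 2$, some path $\tilde l\in\mathbf{P}_{\rm comb}$ captures a fraction $\Omega(L^{-2})$ of $\alpha$'s mass on configurations with at least two ``$-$'' signs along $\tilde l$. The quadratic form is then lower-bounded by its restriction $(K_{L^2})_{\tilde l}$ to that single path, which is a 1D chain of length $\Or(L)$; combining the $\Omega(L^{-2})$ mass with the $\Omega(L^{-2})$ stair-graph gap of \cref{thm:graph_spectral} (via the iterative reduction from $\lambda_{\ell,q}$ to $\lambda_{\ell,2}$ already established in \cref{lem:gap_abelian}) gives $\alpha^*K_{L^2}\alpha=\Omega(L^{-4})=\Omega(N^{-2})$, losing only a constant factor — not a product over $L$ stages. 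This path-covering plus pigeonhole idea is the essential ingredient missing from your proposal; once you adopt it, the rest of your setup (Glauber on the comb, reduction to 1D chains, using the existing stair-graph estimate) plugs in directly.
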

The proof of \cref{lem:gap_abelian_comb} is deferred to \cref{subsec:proof}.
\end{proof}

\begin{figure}[bthp]
\includegraphics[width=\textwidth]{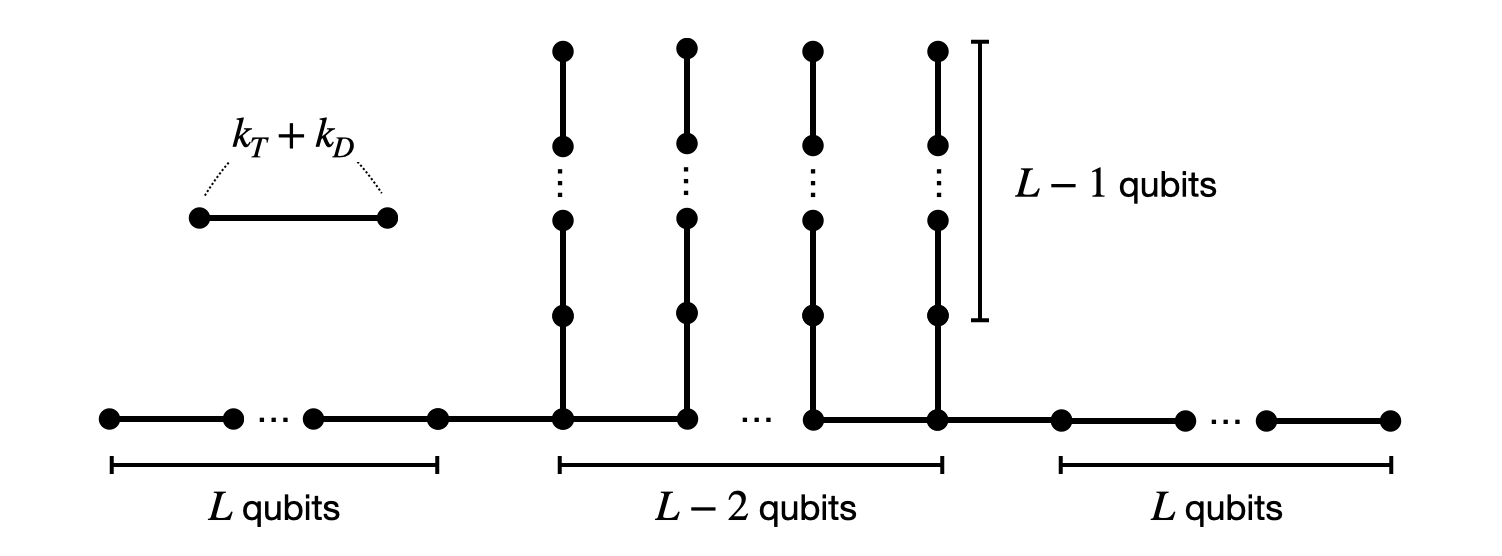}
\caption{The underlying graph of $\sum_{j\in \rm comb}\mathcal{L}_{\sigma^z_j}$ acting on $\mathcal{A}^{\rm full}_{\rm e}$. Here, each black dot represents a qubit corresponding to a bond observation of $\bx_{s_j}$. The transition matrix $k_T+k_D$ induced by $\mc{L}_{\sigma^z_j}$ acts on each neighboring qubits.}
\label{fig:2d1}
\end{figure}
\begin{figure}[bthp]
\includegraphics[width=\textwidth]{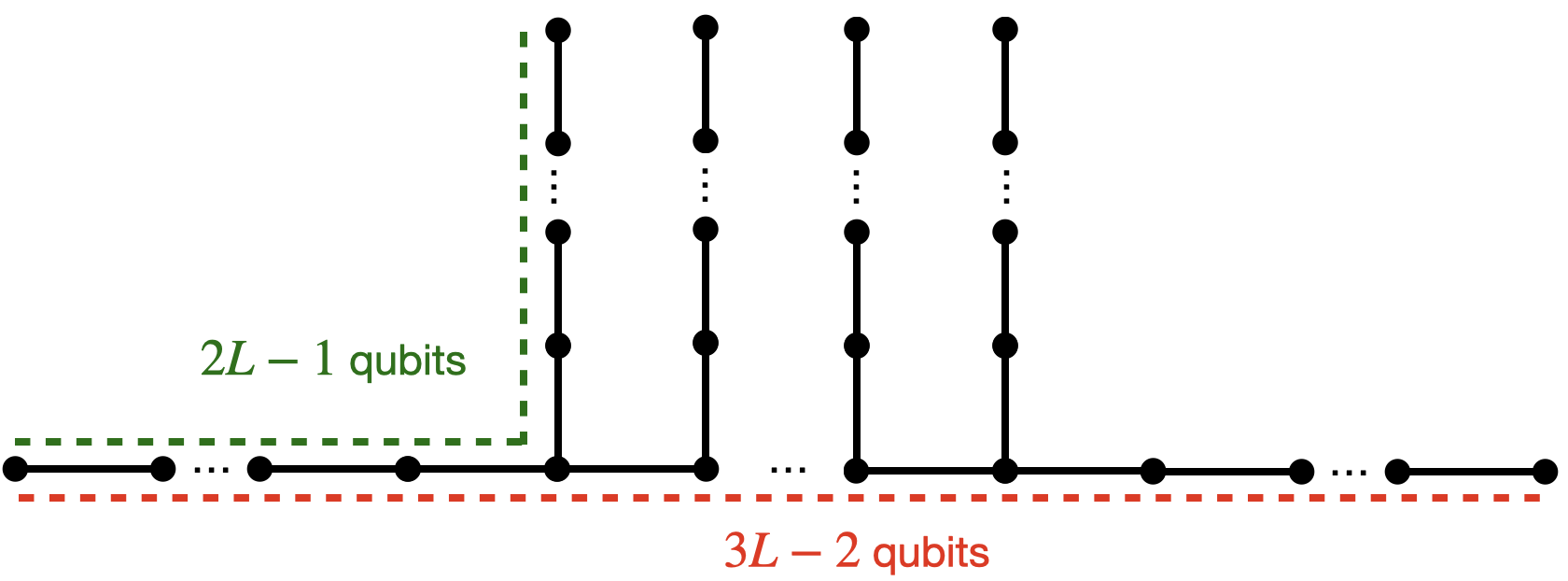}
\caption{Examples of lines on the graph. Both lines are the shortest path between two
degree-one vertices (i.e., end-points) on the comb-like graph. The detailed application of these lines can be found in the proof of \cref{lem:gap_abelian_comb}.
}
\label{fig:2d2}
\end{figure}

\subsection{Proof of Lemmas} \label{subsec:proof}

We collect proofs of some technical lemmas for the spectral gap analysis.

\begin{proof}[Proof of \cref{lem:2D_decomposition}]
The decomposition \eqref{eq:opdecom} is straightforward by the construction. Here we only prove the fact that the algebra $\mc{A}^{\rm full}_{\rm m} = \mc{B}(\mc{H}_{\rm b}^{\rm m})$ can be generated by $\{\mathbf{Z}_p\}_p\cup \{\sigma^x_j\}_{j\in \text{\rm snake}}$. The claim for $\mc{A}^{\rm full}_{\rm e}$ can be similarly proved. Indeed, each basis (admissible bond) $\ket{m}$ in $\mathcal{H}_{\rm b}^{\rm m}$ can be identified as a Pauli string $\sigma^x_{j_1}\cdots \sigma^x_{j_k}$ with $j_i$ on the snake. There are different ways to index the plaquette: and spins: One example is to index $\{\bz_p\}_p$ as $\{\bz_j\}_{j = 1}^{L^2}$ along the snake from left to right and up to down, and index the spins from $2$ to $L^2$ on the snake in the same way so that each spin $j$ corresponds a pair of plaquettes $j - 1$ and $j$, see~\cref{fig:2d_snake_comb} for detail. Each operator on $\mathcal{H}_{\rm b}^{\rm m}$
can be written as a linear combination of $\ket{m'}\bra{m}$, where $\ket{m}$ and $\ket{m'}$ are admissible bonds. Meanwhile, each $\ket{m'}\bra{m}$ with $\ket{m'} \neq \ket{m}$ is a composition of flips of neighboring states. Thus, it suffices to consider the case
$\ket{m'}\bra{m}$ that maps the state $\ket{m} = \sigma^x_{j_1}\cdots \sigma^x_{j_k} \ket{+1^{L^2}}$
 to a neighboring one $\ket{m'} = \sigma^x_{j_2}\cdots \sigma^x_{j_k} \ket{+1^{L^2}}$,
which can be constructed using $\{\bz_p\}_p$ and $\{\sigma^x_j\}_{j \in {\rm snake}}$. Here $\ket{m} = \sigma^x_{j_1}\cdots \sigma^x_{j_k} \ket{+1^{L^2}}$ is defined as the bond configuration for $\{\bz_p\}_p$ associated with the spin configuration $\sigma^x_{j_1}\cdots \sigma^x_{j_k} \ket{\psi_o}$, where $\ket{\psi_o}$ is a ground state \eqref{eqn:2D_basis}.
This gives the alternative representation for $\ket{m'}\bra{m}$:
\[
\ket{m'}\bra{m} =   \frac{I - m_{j_1}\bz_{j_1}}{2} \cdot \underbrace{\Big(\prod_{j \neq j_1 - 1, j_1}^{L^2} \frac{I + m_j \bz_j}{2}\Big)  \si^x_{j_1}}_{=\ket{m'}\bra{m} + \ket{m}\bra{m'}}\,.
\]
Here, $\frac{I - m_{j_1}\bz_{j_1}}{2}\ket{m'}=\ket{m'}$ and $ \frac{I - m_{j_1}\bz_{j_1}}{2}\ket{m}=0$. In the case when $j_1=1$, we ignore the condition $j\neq j_1-1$. The case of $\ket{m}\bra{m}$ can be similarly done.
\end{proof}

\begin{proof}[Proof of \cref{lem:L_gapped_property}] The formula \eqref{eqn:LXbar_commute} follows from the representation of $\mc{L}_{\ms{O}}$ with $\mf{O} = \xx_1,\xx_2,\zz_1,\zz_2$ and the fact from \eqref{lem:2D_decomposition} that these global jumps commute with the algebras $\mc{A}_{\rm m/e}^{\rm full}$. For the formula \eqref{eqn:Lxj_commute}, it suffices to note that $\{\si_j^x\}_{j \in {\rm snake}}$ and the projections \eqref{2eq:exp_proj} belong to $\mc{A}_{\rm m}^{\rm full}$ and commute with $\mc{Q}_1 \otimes \mc{Q}_2 \otimes \mc{A}_{\rm e}^{\rm full}$. The formula \eqref{eqn:Lzj_commute} follows from the same reason by the computation \eqref{2eq:local_lindlz}. The first two statements of \eqref{eqn:kernel} can be proved by a very similar argument as \cite[Lemma 6]{Alicki_2009}. To show $\mathrm{Ker}\left(\mathcal{L}_{\rm global}\right)=\mi\otimes \mi\otimes \mathcal{A}^{\rm full}_{\rm m}\otimes \mathcal{A}^{\rm full}_{\rm e}$, we only need to note that operators $\xx_1, \xx_2, \zz_1, \zz_2$ span the whole algebra $\mc{Q}_1 \otimes \mc{Q}_2$.
\end{proof}

\begin{proof}[Proof of \cref{lem:blockdecom}]
We first recall that we order the plaquette observables $\{\bz_p\}_p$ as $\{\bz_j\}_{j = 1}^{L^2}$ and index $\{\si_j^x\}_{j = 2}^{L^2}$
along the snake. We note from \eqref{2eq:exp_proj} that the projections, as elements in $\mc{A}^{\rm full}_{\rm m}$, can be represented as
\begin{equation*}
  P_j^0 = \left(\ket{+-}\bra{+-}+\ket{-+}\bra{-+}\right)_{j-1,j},\ P_j^\pm = \left(\ket{\mp}\bra{\mp}\right)_{j-1,j}\,,
\end{equation*}
where the subscript means that the operator only nontrivially acts on bonds $j - 1$ and $j$ on the snake associated with observables $\{\bz_p\}_p$.
This enables us to compute the jumps $a_{j}, a_{j}^\dag, a_{j}^0 \in \mc{A}^{\rm full}_{\rm m}$ defined in  \eqref{2eqn:a_j_def} for the local Lindbladian
$\mc{L}_1$ in \eqref{eq:locallind_2}:
\begin{align*}
    a_{j} = \left(\ket{++}\bra{--}\right)_{j-1,j}\,, \q  a_{j}^\dag = \left(\ket{--}\bra{++}\right)_{j-1,j}\,,\,,
\end{align*}
and
\begin{equation*}
    a_{j}^0 =  \left(\ket{+-}\bra{-+} + \ket{-+}\bra{+-}\right)_{j-1,j}\,.
\end{equation*}
Without loss of generality, we consider the Lindbladian $\mc{L}_{\sigma^x_j}$ in \eqref{2eq:local_lind} for a fixed $j \in \Gamma_{\rm flip}, \Gamma_{\rm ab}, \Gamma_{\rm int}$ on $\mc{A}^{\rm full}_{\rm m}(\Lambda)$ for each $\Lambda$. Due to the locality of the jump operators, $\mc{L}_{\sigma^x_j}$ only changes the pair of bonds associated with $j$.
For simplicity, we shall omit the subscripts $j-1,j$.

\begin{itemize}
\item For $j \in  \Gamma_{\rm ab}$, we consider the local basis
\begin{align*}
    A_1 = \ket{++}\bra{++}\,,\q A_2 = \ket{+ - }\bra{+ -}\,,\q A_3 = \ket{- +}\bra{- +},\q  A_4 = \ket{--}\bra{--}\,,
\end{align*}
that are orthogonal in both HS and GNS inner products for the reduced Gibbs state $\w{\si}_\beta = \tr_{\mc{A}^{\rm full}_{\rm e}}(\si_\beta) \propto \exp(- \beta \sum_p \bz_p)$. Moreover, we compute
\begin{align*}
    & e^{- \beta \sum_p \bz_p} A_1 = \eta^{-1}  A_1\,, \q e^{- \beta \sum_p \bz_p} A_2 = A_2\,, \\
    & e^{- \beta \sum_p \bz_p} A_3 = A_3\,, \q e^{- \beta \sum_p \bz_p} A_4 = \eta A_4\,,
\end{align*}
where $\eta = e^{- 2 \beta}$,
which implies that
\begin{align*}
    \|A_1\|_{\w{\sigma}_\beta} = \mc{Z}_\beta^{-1/2}\eta^{-1/2}\,,\quad \|A_2\|_{\w{\sigma}_\beta}=\|A_3\|_{\w{\sigma}_\beta} = \mc{Z}_\beta^{-1/2}\,,\quad \|A_4\|_{\w{\sigma}_\beta} = \mc{Z}_\beta^{-1/2}\eta^{1/2}\,,
\end{align*}
with $\mc{Z}_\beta$ being the partition function of $\sigma_\beta$.
    Then, we find, by using \eqref{2eq:local_lind},
\begin{align*}
    \mc{L}_{\sigma^x_j} (A_1)
    & = h_+ \ket{--}\bra{--} - h_- \ket{++}\bra{++} = h_+ A_4 - h_- A_1
\end{align*}
due to
\begin{align*}
    & [A_1, a_j] = [ \ket{++}\bra{++},   \ket{++}\bra{--}] = \ket{++}\bra{--}\,, \\
    & [A_1, a_j^\dag] = [ \ket{++}\bra{++},   \ket{--}\bra{++}] = - \ket{--}\bra{++}\,.
\end{align*}
Similarly, we have $\mc{L}_{\sigma^x_j} (A_4) = - h_+ A_4 + h_- A_1$, by
  \begin{align*}
        & [A_4, a_j] = [ \ket{--}\bra{--},   \ket{++}\bra{--}] = - \ket{++}\bra{--}\,, \\
        & [A_4, a_j^\dag] = [ \ket{--}\bra{--},   \ket{--}\bra{++}] = \ket{--}\bra{++}\,.
    \end{align*}
    In the same way, we can also compute
    \begin{equation*}
        \mc{L}_{\sigma^x_j}(A_2) = A_3-A_2\,,\q \mc{L}_{\sigma^x_j}(A_3) = A_2 - A_3\,.
    \end{equation*}
This allows us to compute the local matrix representation of the master Hamiltonian $\vp \circ \mc{L}_1\circ \vp^{-1}$ via $
\frac{\l A_i, \mc{L}_1 A_j \r_{\w{\sigma}_\beta}}{\norm{A_i}_{\w{\sigma}_\beta} \norm{A_j}_{\w{\sigma}_\beta}} $:
    \begin{equation} \label{matrix_ab}
        \left[\begin{matrix}
           - h_- = - \frac{2 \eta^2}{\eta^2 + 1} & 0 & 0 & h_-/\eta = \frac{2 \eta}{\eta^2 + 1} \\
0 & -1 & 1 & 0 \\
0 & 1 & -1 & 0   \\
\eta h_+ = \frac{2 \eta}{\eta^2 + 1} & 0 & 0 & -h_+ = - \frac{2}{\eta^2 + 1}\\
        \end{matrix}\right]\,.
\end{equation}

\item For $j \in  \Gamma_{\rm flip}$, we let
    \begin{align*}
        & A_1 = \ket{++}\bra{--}\,,\q A_2 = \ket{+ - }\bra{- +}\,,\q A_3 = \ket{- +}\bra{+ -}\,,\q  A_4 = \ket{--}\bra{++}\,.
    \end{align*}
        For $A_1,A_4$, noting that
    \begin{align*}
         & [A_1, a_j] = [ \ket{++}\bra{--},   \ket{++}\bra{--}] = 0\,, \\
        & [A_1, a_j^\dag] = [ \ket{++}\bra{--},   \ket{--}\bra{++}] = \ket{++}\bra{++} - \ket{--}\bra{--}\,,
    \end{align*}
    and
    \begin{align*}
          & [A_4, a_j] = [ \ket{--}\bra{++},   \ket{++}\bra{--}] =  \ket{--}\bra{--}- \ket{++}\bra{++}\,, \\
        & [A_4, a_j^\dag] = [ \ket{--}\bra{++},   \ket{--}\bra{++}] = 0\,,
    \end{align*}
    we have
    \begin{align*}
        \mc{L}_{\sigma^x_j} (A_1)   & = \frac{1}{2} \left\{h_+ [a_j^\dag ,A_1] a_j + h_- a_j [A_1, a_j^\dag] \right\}
        = -\frac{1}{2} (h_+ + h_-) A_1\,,
    \end{align*}
    and
    \begin{align*}
         &  \mc{L}_{\sigma^x_j} (A_4) = \frac{1}{2} \left\{ h_+ a_j^\dag [A_4, a_j] + h_- [a_j, A_4] a_j^\dag  \right\} = - \frac{1}{2} (h_+ + h_-) A_4\,.
    \end{align*}
Similarly, a direct computation also gives
    \begin{equation*}
        \mc{L}_{\sigma^x_j}(A_2) = A_3 - A_2\,,\q \mc{L}_{\sigma^x_j}(A_3) = A_2 - A_3\,.
    \end{equation*}

The local matrix representation of $\vp \circ \mc{L}_1\circ \vp^{-1}$ via $
\frac{\l A_i, \mc{L}_1 A_j \r_{\w{\sigma}_\beta}}{\norm{A_i}_{\w{\sigma}_\beta} \norm{A_j}_{\w{\sigma}_\beta}} $ is given by
 \begin{equation} \label{matrix_flip}
        \left[\begin{matrix}
           - \frac{h_+ + h_-}{2} & 0 & 0 & 0 \\
            0 & -1 & 1 & 0 \\
            0 & 1 & -1 & 0   \\
            0 & 0 & 0 & - \frac{h_+ + h_-}{2}\\
        \end{matrix}\right]\,.
    \end{equation}

\item For $j \in \Gamma_{\rm int}$, we let\footnote{Without loss of generality, we place $j \in \Lambda$ in the second position. The other case of $j-1\in \Lambda$ is symmetric.}
    \begin{align*}
        A_1 = \ket{++}\bra{-+}\,, \q  A_2 = \ket{+-}\bra{--}\,, \q  A_3 = \ket{-+}\bra{++}\,, \q  A_4 = \ket{--}\bra{+-}\,.
    \end{align*}
    and find that they are eigenvectors of $\mc{L}_{\sigma^x_j}$:
    \begin{equation*}
        \mc{L}_{\sigma^x_j}(A_1) =  - \frac{h_- + 1}{2} A_1\,,\q \mc{L}_{\sigma^x_j}\left(A_2\right) = - \frac{h_+ + 1}{2} A_2\,,
    \end{equation*}
    and
   \begin{equation*}
        \mc{L}_{\sigma^x_j}(A_3) =  - \frac{h_- + 1}{2} A_3,\quad \mc{L}_{\sigma^x_j}(A_4) =  - \frac{h_+ + 1}{2} A_4\,.
   \end{equation*}
   In this case, the local matrix representation of $\vp \circ \mc{L}_1\circ \vp^{-1}$ via $
\frac{\l A_i, \mc{L}_1 A_j \r_{\w{\sigma}_\beta}}{\norm{A_i}_{\w{\sigma}_\beta} \norm{A_j}_{\w{\sigma}_\beta}} $ is
       \begin{equation} \label{matrix_int}
        \left[\begin{matrix}
           - \frac{h_- + 1}{2} & 0 & 0 & 0 \\
0 & - \frac{h_+ + 1}{2} & 0 & 0 \\
0 & 0 & - \frac{h_- + 1}{2} & 0   \\
0 & 0 & 0 & - \frac{h_+ + 1}{2}\\
        \end{matrix}\right]\,.
\end{equation}
 \end{itemize}
 The above calculation concludes the proof that $\mc{L}_1$ is block diagonal for the decomposition of $\mc{A}^{\rm full}_{\rm m}$ in \eqref{eq:decom}, namely, $\mc{L}_1: \mc{B}(\mc{H}_+)(\Lambda) \to \mc{B}(\mc{H}_+)(\Lambda)$ for any $\Lambda \subset \{1,2,\dots,L^2\}$ such that $L^2-|\Lambda|$ is even.
\end{proof}

\begin{proof}[Proof of \cref{lem:gap_abelian}]
 From \eqref{matrix_ab}, the matrix representation of the master Hamiltonian of $- \mc{L}_1$ can be written as:
\begin{equation*}
    K_{L^2}^\beta = K_{L^2} + (K_{L^2}^\beta - K_{L^2})\,,
\end{equation*}
where $K_{L^2}$ is the matrix representation at zero temperature (i.e., $\beta \to \infty$):
\begin{equation} \label{eq:matrixrepl2}
    K_{L^2}=\underbrace{\sum^{L^2-2}_{i=0}I_i\otimes k_T\otimes I_{L^2-2-i}}_{:=K_T}+\underbrace{\sum^{L^2-2}_{i=0}I_i\otimes k_D\otimes I_{L^2-2-i}}_{:=K_D}\,,
\end{equation}
with
\begin{equation}\label{eqn:small_k_t_d}
k_T= \left[\begin{matrix}
0 & 0 & 0 & 0 \\
0 & 1 & -1 & 0 \\
0 & -1 & 1 & 0   \\
0 & 0 & 0 & 0\\
\end{matrix}\right],\quad
k_D = \left[\begin{matrix}
0 & 0 & 0 & 0 \\
0 & 0 & 0 & 0 \\
0 & 0 & 0  & 0   \\
0 & 0 & 0 &  2\\
        \end{matrix}\right]\,.
\end{equation}
and $K_{L^2}^\beta - K_{L^2}$ is given by
\begin{equation*}
    K_{L^2}^\beta - K_{L^2} = \sum^{N-2}_{i=0}I_i\otimes \left[\begin{matrix}
\frac{2 \eta^2}{\eta^2 + 1} & 0 & 0 & - \frac{2 \eta}{\eta^2 + 1} \\
0 & 0 & 0 & 0 \\
0 & 0 & 0  & 0   \\
- \frac{2 \eta}{\eta^2 + 1} & 0 & 0 &  - \frac{2 \eta^2}{\eta^2 + 1}\\
        \end{matrix}\right] \otimes I_{L^2-2-i}\,,
\end{equation*}
where $\eta=\exp(-2\beta)$. The operator norm of the self-adjoint operator $K_{L^2}^\beta - K_{L^2}$ can be directly estimated as
\begin{equation*}
    \norm{K_{L^2}^\beta - K_{L^2}} = \Theta (L^2 e^{-2 \beta}) = \Or(L^{-6})\,, \quad \text{when $\beta \ge 4 \ln L$\,.}
\end{equation*}
Therefore, to obtain the desired gap estimate \eqref{eqn:gap_wt_L}, we only need to consider the gap at zero temperature and prove $\gap(K_{L^2}) = \Omega(L^{-4})$.

\begin{figure}[bthp]
\centering
\includegraphics[width=0.8\textwidth]{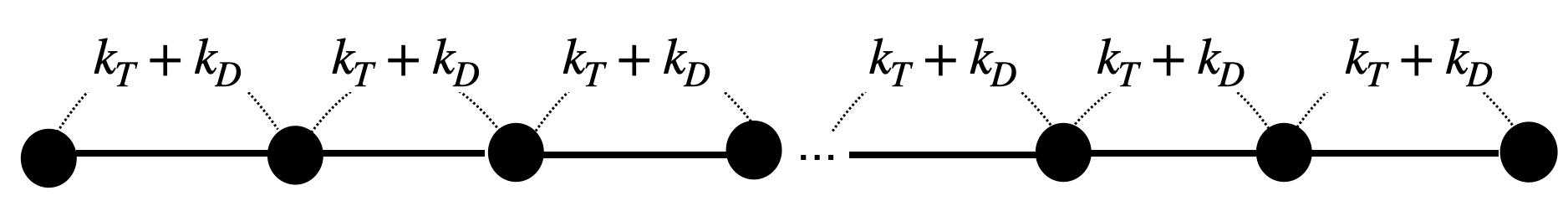}
\caption{Action of $\mc{L}_1$ ($\beta \to \infty$) on $\Gamma_{\rm ab}$. $k_T$ and $k_D$ are defined in \eqref{eqn:small_k_t_d}. Each black dot represents a bond corresponding to the plaquette observable $\bz_j$ ordered by the snake.}
\label{fig:1D_abelian}
\end{figure}

For this, we consider the following configuration space
\begin{equation} \label{eq:spacesn}
\mathcal{S}_{L^2}=\mathrm{Span}\left(\mathcal{A}_{L^2}\right) \cong  \C^{2^{L^2}},\quad \text{with $\mathcal{A}_{L^2} = \left\{+,-\right\}^{\otimes {L^2}}$}\,,
\end{equation}
with the space decomposition $\mathcal{S}_{L^2}=\bigoplus_k \mathcal{S}_{{L^2},k}$, where $\mathcal{S}_{{L^2},k}=\mathrm{Span}\left(\mathcal{A}_{{L^2},k}\right)$ with
\[
\mathcal{A}_{{L^2},k}:= \left\{a\in\left\{+,-\right\}^{\otimes {L^2}}~\middle|~\text{$a$ has $k$ $``-"$ signs}\right\}\,.
\]
It is clear from the construction that the action of $- \mc{L}_1$ on $\mc{A}^{\rm full}_{\rm m}(\Lambda)$ with $\Lambda = \Gamma_{\rm ab} = \{1,2,\ldots\}$ is the same as the action of $K^\beta_{L^2}$
on $\bigoplus_{k\,\text{even}} \mathcal{S}_{{L^2},k}$ under the identification $\ket{m}\bra{m}\rightarrow \ket{m}$. One can also readily check that
\begin{equation} \label{eq:invariant}
    K_{L^2}: \mathcal{S}_{L^2,k} \to \mathcal{S}_{L^2,k}\,,
\end{equation}
that is, $K_{L^2}$ is block diagonal for the decomposition $\mathcal{S}_{L^2}=\bigoplus_k \mathcal{S}_{L^2,k}$.


When $k=0$, we have $K_{L^2}|_{\mathcal{A}_{L^2,0}}=0$ and $\mathrm{dim}\left(\mc{A}_{L^2,0}\right)=1$. Thus,
\[
\mathrm{Gap}\left(-\mc{L}_1|_{\mc{A}^{\rm full}_{\rm m}([L^2])}\right)\geq \min_{k=2,3,4,\cdots}\lambda_{\min}\left(K_{L^2}|_{\mathcal{S}_{L^2,k}}\right)\,.
\]
In principle, we only need to consider the admissible configuration, that is, the subspaces $\mc{S}_{L^2,k}$ with even $k$. However, it is simpler to prove a result for all $k$'s via iterative reduction. We emphasize that such a relaxation does not produce any additional dependence on the system size $N = 2L^2$, and hence will not give a worse spectral gap lower bound.


We now consider the lower bound of $K_{L^2}$ on each $\mathcal{S}_{L^2,k}$ for $k>0$. Define
\[
\lambda_{L^2,k}:=\lambda_{\min}\left(K_{L^2}|_{\mathcal{S}_{L^2,k}}\right)\,.
\]

We first note that when $k=L^2$, it also holds that
$\dim\left(\mathcal{S}_{L^2,L^2}\right)=1$, and
\begin{equation*}
K_{L^2}|_{\mathcal{S}_{L^2,L^2}} = K_D|_{\mathcal{S}_{L^2,L^2}} \succeq 2 (L^2 - 1) \ge 1\,.
\end{equation*}

When $ k \geq 3 $ and $ L^2 \geq 3 $,  we use the following iteration to reduce the estimation of $\lad_{L^2,k}$ to $\lad_{L^2, 2}$. By the representation \eqref{eq:matrixrepl2} of $K_{L^2}$, we find
\begin{equation*}
     K_{L^2}=K_{L^2-1}\otimes I_1+I_{L^2-2}\otimes (k_T+k_D)\,.
\end{equation*}
In addition, there holds
\[
\mathcal{S}_{L^2,k}=\mathrm{Span}\left(\mathcal{A}_{L^2-1,k}\otimes \ket{+}\right)\oplus \mathrm{Span}\left(\mathcal{A}_{L^2-1,k-1}\otimes \ket{-}\right)\,.
\]
Thus, for any given
 $a=a_+\otimes \ket{+}+a_-\otimes \ket{-}\in \mathcal{S}_{L^2,k}$ with $|a_+|^2+|a_-|^2=1$, we can derive
 \begin{equation} \label{eqn:iteration}
     \begin{aligned}
          a^* K_{L^2} a &\geq a^* (K_{L^2-1}\otimes I_1) a \\ & =(a_+)^* K_{L^2-1}a_++(a_-)^* K_{L^2-1}a_- \\ &\geq \min\left\{\lad_{L^2-1,k}, \lad_{L^2-1,k-1}\right\}\,,
     \end{aligned}
 \end{equation}
where the first inequality follows from $k_T + k_D\succeq 0$, the second inequality follows from \eqref{eq:invariant} and $a_+\in \mc{S}_{L^2-1,k}$, $a_-\in \mc{S}_{L^2-1, k-1}$.

Therefore, it suffices to estimate $\lad_{L^2,2}$ to finish. To do so, we find, by \eqref{eq:matrixrepl2},
\begin{equation*}
K_{L^2}|_{\mathcal{S}_{L^2,2}}=(K_T)|_{\mathcal{S}_{L^2,2}}+(K_D)|_{\mathcal{S}_{L^2,2}}\,.
\end{equation*}

Note that the basis of the subspace ${\cal S}_{L^2, 2}$ can be denoted by:
\begin{align*}
    \left\{|i,j\rangle:1\leq i\leq L^2 - 1, i+1\leq j\leq L^2\right\}\,.
\end{align*}
This represents the positions of the two ``$-$" signs and defines a stair graph $G_{L^2-1}$ with parameter $L^2-1$ (see \cref{def:stair_graph} and \cref{fig:stair_graph} in \cref{sec:spectral_graph}).  From the representation \eqref{eq:matrixrepl2} and \eqref{eqn:small_k_t_d}, there holds
\begin{align*}
    K_T |i,j\rangle = \textbf{1}_{i>1} (|i,j\rangle-|i-1,j\rangle) + \textbf{1}_{i<j-1}(|i,j\rangle-|i+1, j\rangle) \\
    + \textbf{1}_{j>i+1}(|i,j\rangle -|i, j-1\rangle) + \textbf{1}_{j<L^2}(|i,j\rangle -|i,j+1\rangle)\,,
\end{align*}
where the first and third term represent the transition from $\ket{+-}$ to $\ket{-+}$ and the second and fourth term represent the transition from $\ket{-+}$ to $\ket{+-}$. It is straightforward to see that $(K_T)|_{{\cal S}_{L^2, 2}}$ is exactly the graph Laplacian associated with the stair graph $G_{L^2-1}$.  Moreover, $(K_D)|_{{\cal S}_{L^2, 2}}$ can be identified with the diagonal matrix $D_{L^2-1}$ in \cref{thm:graph_spectral}:
\begin{align*}
    K_D |i,j\rangle = \textbf{1}_{j=i+1}\cdot 2|i,i+1\rangle\,.
\end{align*}
Hence, by \cref{thm:graph_spectral} with $n = L^2  - 1$,
\begin{align}\label{eq:estladl2}
    \lambda_{L^2, 2} = \lambda_{\min}((K_T)|_{\mathcal{S}_{L^2,2}}+(K_D)|_{\mathcal{S}_{L^2,2}})= \lambda_{\min}(H_{L^2-1}) = \Omega\left(L^{-4}\right)\,.
\end{align}
Using \cref{eqn:iteration}, it follows that $\min_{3\leq k\leq L^2}\lambda_{L^2,k}=\Omega\left(L^{-4}\right)$.
Therefore, we have proved
\begin{align*}   \mathrm{Gap}\left(K_{L^2}\right)=\Omega\left(L^{-4}\right) = \Omega\left(N^{-2}\right).
\end{align*}
\end{proof}

\begin{proof}[Proof of \cref{lem:gap_abelian_comb}]
Following the notation in the proof of \cref{lem:gap_abelian}, we still consider the subspace
$\mathcal{S}_{L^2}$ in \eqref{eq:spacesn} and denote by $K^\beta_{L^2}$ the matrix
representation of the master Hamiltonian of $- \mc{L}_2$. Moreover, we similarly have
\begin{equation*}
    \norm{K_{L^2}^\beta - K_{L^2}} = \Theta (L^2 e^{-2 \beta}) = \Or(L^{-6})\,,\q \text{for $\beta\geq 4 \ln L$}\,,
\end{equation*}
where $K_{L^2} := K^\infty_{L^2}$. One can also see that each local term in $K_{L^2}$ has the same form as the one in \eqref{eqn:small_k_t_d}, but the tensor structure is different\footnote{Since the qubits and star observables on the comb cannot be ordered along a line, some local term in $K_{L^2}$ is of the form $I \otimes a \otimes I \otimes b \otimes I$ with $a, b$ being non-identity $2 \times 2$ matrix. Since some observable $\bx_s$ is altered by three $\mc{L}_{\si_j^z}$, there are some sites where we find three local terms in $K_{L^2}$ nontrivially acting on it.}; see \cref{fig:2d1}.  We then decompose $\mathcal{S}_{L^2}$ according to number of $``-"$ signs in the entry of basis:
\begin{equation}\label{eqn:S_decomposition}
\mathcal{S}_{L^2}=\bigoplus_{k} \mathcal{S}_{L^2,k}\,,\q \mathcal{S}_{{L^2},k}=\mathrm{Span}\left(\mathcal{A}_{{L^2},k}\right)\,,
\end{equation}
with
\[
\mathcal{A}_{{L^2},k}:=\left\{a\in\left\{+,-\right\}^{\otimes {L^2}}\middle|\text{$a$ has $k$ $``-"$ signs}\right\}\,.
\]
Then, $K_{L^2}$ is block diagonal for \eqref{eqn:S_decomposition} and
\[
\mathrm{Gap}\left(K_{L^2}\right)\geq \min_{k=2,3,4,\cdots}\lambda_{\min}\left(K_{L^2}|_{\mathcal{S}_{{L^2},k}}\right)\,.
\]


We first consider the subspace $\mathcal{S}_{{L^2},2}$, whose basis vectors contain only two $``-"$ signs. Our strategy for lower bounding $\lambda_{\min}\left(K_{L^2}|_{\mathcal{S}_{{L^2},2}}\right)$ is to reduce this problem to a straight line case as in \cref{fig:1D_abelian}. For this, we introduce a set of lines covering all vertices in the comb. More specifically, we may regard the comb as a connected graph (more precisely, a tree) with $L^2$ vertices of degree at most 3, each corresponding to a star $\bx_s$ that interacts with the comb. Let ${\bf D}_{\rm comb}$ be the set of degree-one vertices (i.e., end-points) in the comb, and $l_{u,v}$ be the shortest path between the vertices $u$ and $v$ in the comb. Define
\begin{align}\label{eqn:line_E}
    {\bf P}_{\rm comb}:=\{l_{u,v}:\forall~u\ne v\in {\bf D}_{\rm comb}\}\,.
\end{align}
Then, we know that ${\bf P}_{\rm comb}$ contains $L(L-1)/2 = \mc{O}(L^2)$ (simple) paths,
and the maximum length of the paths is $\ell_{\rm comb} = 3L-2$.
Two examples of these paths are given in \cref{fig:2d2}.

For an arbitrary unit vector $\alpha\in \mathcal{S}_{{L^2},2}$:
\begin{align*}
    \alpha=\sum_{a\in \mc{A}_{L^2,2}}p_a\ket{a}\,,
\end{align*}
by \cref{fac:comb_line_cover}, there exists a path $\tilde{l}$ in ${\bf P_{\rm comb}}$ such that for
\begin{align*}
\alpha_{\tilde{l}}:=\sum_{a\in \mc{A}_{L^2,k}:\#\{s\in \tilde{l}:a_s=``-"\}>1} p_a \ket{a}\,,
\end{align*}
it holds that
\begin{align}\label{eq:alpha_l*_2_norm}
    \|\alpha_{\tilde{l}}\|_2^2=\Omega(1/L^2)\,.
\end{align}
Let $(K_{L^2})_{\tilde{l}}$ be the restriction of $K_{L^2}$ to the path $\tilde{l}$. More specifically,
\begin{align*}
    (K_{L^2})_{\tilde{l}} :=\sum_{e=(u,v)\in \tilde{l}} (k_T+k_D)_{u,v}\,,
\end{align*}
where $(k_T+k_D)_{u,v}$ is a local term that applies $k_T+k_D$ to the ``qubits'' at $u$ and $v$.
Then, we have
\begin{equation} \label{eq:est1}
\begin{aligned}
    \alpha^* K_{L^2} \alpha \geq &~  \alpha^* (K_{L^2})_{\tilde{l}} \alpha\\
    \geq &~ \alpha^*_{\tilde{l}} (K_{L^2})_{\tilde{l}} \alpha_{\tilde{l}}\\
    \geq &~ \|\alpha_{\tilde{l}}\|_2^2\cdot \lambda_{\ell_{\rm comb}, 2}\\
    = &~ \mmg{\Omega(L^{-2})\cdot \Omega(L^{-2})=\Omega(L^{-4})}\,,
\end{aligned}
\end{equation}
where the first step follows from $K_{L^2}-(K_{L^2})_{\tilde{l}}$ is positive semi-definite, the second step follows from $\langle \alpha-\alpha_{\tilde{l}}, (K_{L^2})_{\tilde{l}} \alpha_{\tilde{l}}\rangle = 0$, the third step follows from $(K_{L^2})_{\tilde{l}}$ is equivalent to a 1D chain of length $|\tilde{l}|\leq \ell_{\rm comb}=\Or(L)$ and $a_{\tilde{l}}\in \mc{S}_{|\tilde{l}|,2}$, and the last step follows from \eqref{eq:alpha_l*_2_norm} and \eqref{eq:estladl2}.
Thus, we conclude
\begin{equation}\label{eq:comb_est}
\mmg{K_{L^2}|_{\mathcal{S}_{{L^2},2}}=\Omega(L^{-4})=\Omega(N^{-2})}\,.
\end{equation}

We proceed to estimate $\lambda_{\min}\big(K_{L^2}|_{\mathcal{S}_{{L^2},k}}\big)$ for $k\geq 3$. For an arbitrary unit vector $\alpha\in \mathcal{S}_{{L^2},k}$:
\begin{align*}
    \alpha=\sum_{a\in \mc{A}_{L^2,k}} p_a \ket{a}\,,
\end{align*}
by \cref{fac:comb_line_cover}, there exists a path $\tilde{l}$ in ${\bf P_{\rm comb}}$ such that for
\begin{align*}
    \alpha_{\tilde{l},q}:=\sum_{\substack{a\in \mc{A}_{L^2,k}:\\\#\{s\in \tilde{l}:a_s=``-"\}=q}} p_a \ket{a}~~~\forall 2\leq q\leq k\,,
\end{align*}
we have
\begin{align}\label{eq:alpha_l*_q_length}
    \sum_{q=2}^k \|\alpha_{\tilde{l},q}\|_2^2 = \Omega(1/L^2)\,.
\end{align}
Let $(K_{L^2})_{\tilde{l}}$ be the restriction of $K_{L^2}$ to the path $\tilde{l}$.  Then, we find, similar to \eqref{eq:est1},
\begin{align*}
\alpha^* K_{L^2} \alpha\geq &~ \alpha^* (K_{L^2})_{\tilde{l}} \alpha\\
\geq &~ \sum^k_{q=2} \alpha^*_{\tilde{l},q} (K_{L^2})_{\tilde{l}} \alpha_{\tilde{l},q}\\
\geq &~ \sum_{q=2}^k \|\alpha_{\tilde{l},q}\|_2^2\cdot \lambda_{\ell_{\rm comb},q}\\
\geq &~  \sum_{q=2}^k \|\alpha_{\tilde{l},q}\|_2^2 \cdot \lambda_{\ell_{\rm comb}, 2}\\
= &~ \mmg{\Omega(L^{-2})\cdot \Omega(L^{-2}) = \Omega(L^{-4})}\,,
\end{align*}
where the first step follows from $K_{L^2}-(K_{L^2})_{\tilde{l}}$ is positive semi-definite, the second step follows from $\alpha_{\tilde{l},q_1}^* (K_{L^2})_{\tilde{l}} \alpha_{\tilde{l},q_2}=0$ for $q_1\neq q_2$, the third step follows from $(K_{L^2})_{\tilde{l}}$ is equivalent to a 1D chain of length $|\tilde{l}|\leq \ell_{\rm comb}=\Or(L)$, and $\alpha_{\tilde{l},q}$ is in the subspace $\mc{S}_{|\tilde{l}|, q}$, the fourth step follows from the recursion relation \eqref{eqn:iteration} that $\lambda_{\ell_{\rm comb},q}\geq \lambda_{\ell_{\rm comb},2}$, an the last step follow from \eqref{eq:estladl2} and \eqref{eq:alpha_l*_q_length}. 
Thus, we have
\begin{align*}
K_{L^2}|_{\mathcal{S}_{{L^2},k}}=\mmg{\Omega(L^{-4})=\Omega(N^{-2})}\,.
\end{align*}

Combining them all together, we now can conclude
\[
\begin{aligned}
\mathrm{Gap}\left(-\mathcal{L}_2|_{\mc{A}_{\rm e}^{\rm full}([L^2])}\right) \ge \min_{2\le k\leq L^2}\lambda_{\min}\left(K_{L^2}|_{\mathcal{S}_{{L^2},k}}\right) \mmg{- \Or(N^{-3}) = \Omega(N^{-2})}\,.
\end{aligned}
\]
This concludes the proof of \cref{lem:gap_abelian_comb}.
\end{proof}

\begin{fact}\label{fac:comb_line_cover}
Let $2\leq k \leq L^2-1$. For any unit vector $\alpha\in \mc{S}_{L^2,k}$ of the form:
\begin{align*}
    \alpha = \sum_{a\in \mc{A}_{L^2,k}} p_a \ket{a}\,,
\end{align*}
there exists a path $\tilde{l}\in {\bf P_{\rm comb}}$ defined as \eqref{eqn:line_E} such that
\begin{align*}
    \sum_{\substack{a\in \mc{A}_{L^2,k}:\\\#\{s\in \tilde{l}:a_s=``-"\}>1}}|p_a|^2=\mmg{\Omega(1/L^2)}\,.
\end{align*}
\end{fact}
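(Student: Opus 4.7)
The plan is a short tree-covering plus pigeonhole argument, using only the combinatorics of the comb (a tree with $|{\bf D}_{\rm comb}| = L$ leaves and $|{\bf P}_{\rm comb}| = L(L-1)/2$ leaf-to-leaf paths) and not any spectral information.

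First, I would establish the following geometric lemma: for any two distinct sites $s, s'$ on the comb there exists a path in ${\bf P}_{\rm comb}$ that contains both. Since the comb is a finite tree, there is a unique path $\pi_{s,s'}$ joining $s$ and $s'$; extend $\pi_{s,s'}$ at the endpoint $s$ by repeatedly moving to any neighbor not yet on the extended path, continuing until a vertex of degree one is reached, and do the same from $s'$. Finiteness of the tree forces this walk to terminate at some $u \in {\bf D}_{\rm comb}$ (respectively $v \in {\bf D}_{\rm comb}$), and the resulting $u$-$v$ path is simple; because paths in a tree are unique and therefore automatically shortest, it equals $l_{u,v}$ and lies in ${\bf P}_{\rm comb}$. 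By construction it passes through both $s$ and $s'$.

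Next, I would use this lemma to assign, to each configuration $a \in \mc{A}_{L^2, k}$ with $p_a \neq 0$, a path $l(a) \in {\bf P}_{\rm comb}$ that covers at least two of its minus signs. Since $k \geq 2$ by hypothesis, we can pick any two sites $s_1(a), s_2(a)$ where $a$ has a minus sign and apply the lemma to define $l(a)$. By construction, $a$ belongs to $\{a': \#\{s \in l(a): a'_s = \text{``}{-}\text{''}\} > 1\}$.

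Finally, I would apply pigeonhole over ${\bf P}_{\rm comb}$. Partitioning the support of $\alpha$ according to the value of $l(a)$ and using $\sum_a |p_a|^2 = 1$ together with $|{\bf P}_{\rm comb}| = L(L-1)/2$, there exists some $\tilde l \in {\bf P}_{\rm comb}$ with
\begin{equation*}
\sum_{a:\, l(a) = \tilde l} |p_a|^2 \;\geq\; \frac{2}{L(L-1)} \;=\; \Omega(1/L^2)\,.
\end{equation*}
Since $\{a : l(a) = \tilde l\} \subseteq \{a : \#\{s \in \tilde l : a_s = \text{``}{-}\text{''}\} > 1\}$, the conclusion follows. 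The only step that requires any thought is the geometric lemma, and even there the argument is a two-line observation about extending a path in a tree to its leaves; no nontrivial obstacle is expected.
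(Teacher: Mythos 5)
Your proof is correct and takes essentially the same approach as the paper's: both reduce to the observation that for each configuration with $k\ge 2$ minus signs, some leaf-to-leaf path in ${\bf P}_{\rm comb}$ covers at least two of them, followed by an averaging argument over the $O(L^2)$ paths. You phrase the averaging as pigeonhole via an explicit assignment $a\mapsto l(a)$ (the paper instead double-counts $\sum_l\sum_a$ directly, which avoids choosing a representative path), and you supply a proof of the tree-covering lemma that the paper merely asserts ``by the construction of ${\bf P}_{\rm comb}$'' --- but the substance and the resulting $\Omega(1/L^2)$ bound are identical.
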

\begin{proof}
We first observe that for any $a\in \{+,-\}^{L^2}$ with $k$ $``-"$, by the construction of ${\bf P}_{\rm comb}$, there exists at least one path $l\in {\bf P}_{\rm comb}$ that contains at last two $``-"$ on it.
Then, we have
\begin{align*}
    \sum_{l\in {\bf P}_{\rm comb}}\sum_{\substack{a\in \mc{A}_{L^2, k}:\\\#\{s\in l:a_s=``-"\}>1}} |p_a|^2=&~ \sum_{a\in \mc{A}_{L^2,k}} |p_a|^2\cdot \#\{l\in {\bf P}_{\rm comb}:\#\{s\in l:a_s=``-"\}>1\}\\
    \geq &~ \sum_{a\in \mc{A}_{L^2,k}} |p_a|^2\cdot 1\\
    = &~ 1\,,
\end{align*}
where the first step follows from exchanging the summations, the second step follows from our observation, and the last step follows from $\alpha$ is a unit vector.
Since ${\bf P}_{\rm comb}$ contains $L^2$ paths, by the averaging argument, there must exists an $\tilde{l}\in {\bf P}_{\rm comb}$ such that
\begin{align*}
    \sum_{\substack{a\in \mc{A}_{L^2, k}:\\\#\{s\in l:a_s=``-"\}>1}} |p_a|^2 \geq \mmg{1/L^2}\,,
\end{align*}
which proves the proposition.
\end{proof}

\noindent \textbf{Data availability}. Data sharing not applicable to this article as no datasets were generated or analyzed during
the current study.

\noindent \textbf{Conflict of interest}. The authors have no conflicts of interest to declare that are relevant to the content of this article.

\bibliographystyle{alpha}
\bibliography{ref}

\appendix

\appendix

\section{Spectral graph theory problem}\label{sec:spectral_graph}

In this appendix, we study the spectral property of the stair graph, which is the grid graph folded along the diagonal.
Let $n$ be the side length of the stair. We consider the graph Laplacian of the stair graph of size $\binom{n+1}{2}$,
perturbed by adding positive weights to the vertex on the diagonal.
We shall prove that the minimum eigenvalue of such class of perturbed Laplacians is $\Theta(n^{-2})$. The result is of independent interest and is applied to the proofs of \cref{lem:gap_abelian,lem:gap_abelian_comb}, see~\eqref{eq:estladl2} and~\eqref{eq:comb_est} for detail.

We first introduce the definition of the stair graph, graph Laplacian, and their basic properties.


\begin{defn}[Stair graph]\label{def:stair_graph}
A stair graph $G=(V,E)$ with parameter $n\in \mathbb{N}_+$ is an undirected graph with $\binom{n+1}{2}=\frac{n(n+1)}{2}$ vertices. It can be viewed as an $n$-by-$n$ grid folded along the diagonal (see \cref{fig:stair_graph} for an example).

More specifically, let the rows be labeled by $1,2,\dots,n$ (from bottom to top), and the columns be labeled by $2,3,\dots,n+1$ (from right to left)\footnote{These labels are chosen to align with the notation used in the proof of~\cref{lem:gap_abelian,lem:gap_abelian_comb}.}. Then, each vertex gets a label $(i,j)$ based on its row and column indices.

For a vertex $(i,j)$, it is connected to its nearest-neighbors $(i,j\pm 1)$ and $(i\pm 1, j)$, with open boundary condition.
\end{defn}

\begin{figure}[!ht]
    \centering
    \includegraphics[width=0.8\linewidth]{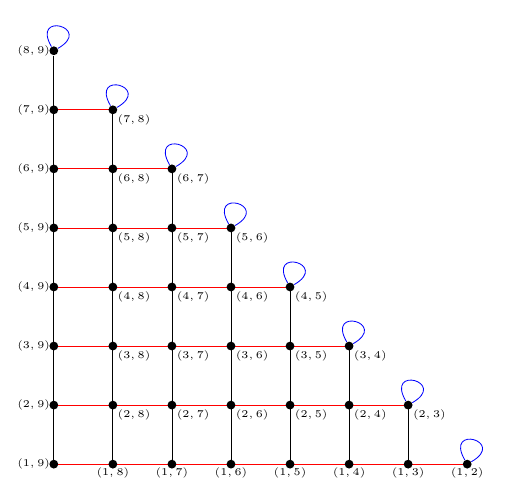}
    \caption{An example of the stair graph ($n=8$) with the vertex labels. The red edges are the horizontal edges $E_{8,h}$.  The blue self-loops are the entries in $D_8$ of value $2$.}
    \label{fig:stair_graph}
\end{figure}

\begin{defn}[Graph Laplacian]
For an undirected graph $G=(V,E)$, its graph Laplacian $\mathcal{L}_G$ is defined to be:
\begin{align*}
    {\cal L}_G:=D_G - A_G\,,
\end{align*}
where $D_G=\diag(\{\deg(v)\}_{v\in V})$ and $A_G$ is the adjacency matrix of $G$.
\end{defn}

\begin{prop}[Spectral properties of graph Laplacian]
For any undirected graph $G$, its Laplacian ${\cal L}_G$ satisfies the following properties:
\begin{itemize}
    \item ${\cal L}_G$ is a real-symmetric, positive semi-definite matrix with minimum eigenvalue $0$.
    \item The number of zero eigenvalues is equal to the number of connected components of $G$.
\end{itemize}
\end{prop}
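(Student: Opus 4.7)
The approach is entirely standard; both bullets reduce to the fundamental quadratic-form identity for the graph Laplacian. The plan is to first establish
\begin{equation*}
x^{\top} {\cal L}_G\, x \;=\; \sum_{\{u,v\}\in E} (x_u - x_v)^2
\end{equation*}
for every $x \in \mathbb{R}^{|V|}$, by expanding $\sum_v \deg(v)\, x_v^2 - 2 \sum_{\{u,v\}\in E} x_u x_v$ and regrouping contributions edge by edge. Symmetry of ${\cal L}_G$ is then immediate, since $D_G$ is diagonal and $A_G$ is the adjacency matrix of an \emph{undirected} graph; non-negativity of the right-hand side above yields positive semi-definiteness. The all-ones vector $\mathbf{1}$ lies in the kernel because every row of ${\cal L}_G$ has sum $\deg(v) - \deg(v) = 0$, so $0$ is attained as an eigenvalue and hence coincides with the minimum eigenvalue.

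For the second bullet, I would explicitly identify $\ker({\cal L}_G)$. Let $C_1,\dots,C_k$ denote the connected components of $G$. Their indicator vectors $\mathbf{1}_{C_1},\dots,\mathbf{1}_{C_k}$ have disjoint supports, so they are linearly independent, and each lies in $\ker({\cal L}_G)$ by the same row-sum argument applied within a single component. Conversely, if $x \in \ker({\cal L}_G)$, then $x^{\top} {\cal L}_G\, x = 0$, which by the quadratic-form identity forces $x_u = x_v$ on every edge $\{u,v\}\in E$; iterating along paths inside a component shows that $x$ is constant on each $C_i$ and therefore lies in the span of these indicator vectors. Hence $\dim \ker({\cal L}_G) = k$, matching the number of connected components.

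There is no genuine obstacle. The only step requiring mild care is the path-propagation argument, which uses the definition of connected components as equivalence classes under the edge-reachability relation; apart from that, the whole proof is a two-line consequence of the edge-sum identity displayed above.
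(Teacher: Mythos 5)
Your proof is correct and uses the canonical quadratic-form argument. The paper states this proposition as a standard fact without proof, but it relies on exactly the same identity $f^{\top}\mathcal{L}_G f = \sum_{\{u,v\}\in E}(f(u)-f(v))^2$ in its proof of the subsequent theorem, so your approach matches the implicit reasoning in the paper.
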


Now, we are ready to present the main theorem in this section:

\begin{thm}\label{thm:graph_spectral}
For any $n\in \mathbb{N}_+$, let $G_n=(V_n,E_n)$ be the stair graph with parameter $n$, and ${\cal L}_n$ be its graph Laplacian.
Let $D_n:V_n\times V_n\rightarrow \mathbb{R}$ be a diagonal matrix such that $D_n((i,i+1),(i,i+1))=2$ and other entries equal to 0. Then
\begin{align*}
    H_n:={\cal L}_n + D_n\succeq \Omega\left(\frac{1}{n^2}\right)\,.
\end{align*}
\end{thm}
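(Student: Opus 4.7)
The plan is to prove the Poincar\'e-type inequality
$$\|v\|_2^2 \leq C n^2 \, \langle v, H_n v\rangle \quad \text{for all } v \in \mathbb{R}^{V_n}$$
for some absolute constant $C > 0$, from which the Rayleigh quotient characterization immediately yields $H_n \succeq \Omega(n^{-2})$. The key conceptual observation is that $D_n$ acts as a Dirichlet-type anchor on the $n$ diagonal vertices $\{(i,i+1)\}_{i=1}^n$, and every non-diagonal vertex $(i,j)$ with $j > i+1$ lies within graph distance $j-i-1 \leq n-1$ of the diagonal anchor $(i,i+1)$ sitting in its own row.

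The first step is to exploit, for each row $i$, that $(i,i+1), (i,i+2), \ldots, (i,n+1)$ form a subpath of $G_n$. Telescoping along this subpath and applying Cauchy--Schwarz gives, for every $j>i+1$,
$$(v_{(i,j)} - v_{(i,i+1)})^2 \leq (j - i - 1) \sum_{k=i+1}^{j-1} (v_{(i,k+1)} - v_{(i,k)})^2.$$
Summing this bound over $j = i+2, \ldots, n+1$ costs an additional factor of $O(n)$ (by exchanging the order of summation), so that
$$\sum_{j>i+1} (v_{(i,j)} - v_{(i,i+1)})^2 \leq O(n^2) \sum_{k=i+1}^n (v_{(i,k+1)} - v_{(i,k)})^2.$$

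Next, I would split $v_{(i,j)}^2 \leq 2(v_{(i,j)} - v_{(i,i+1)})^2 + 2 v_{(i,i+1)}^2$ for each $j>i+1$ and sum over all vertices $(i,j)$. After summing the first term across rows, it is dominated by $O(n^2) \cdot E_{\rm horiz}(v)$, where $E_{\rm horiz}(v)$ collects only horizontal edge differences and is therefore bounded above by the full Dirichlet energy $\langle v, \mathcal{L}_n v\rangle$. The second term contributes $O(n) \sum_i v_{(i,i+1)}^2 = O(n) \cdot \tfrac{1}{2}\langle v, D_n v\rangle$. Absorbing everything into the $n^2$-scaling yields
$$\|v\|_2^2 \leq O(n^2)\,\langle v, \mathcal{L}_n v\rangle + O(n)\,\langle v, D_n v\rangle \leq O(n^2)\,\langle v, H_n v\rangle,$$
which is the desired bound.

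The argument is essentially computational; no individual step is difficult, and the main obstacle is simply the careful bookkeeping of the two factors of $n$ (one from Cauchy--Schwarz on the telescoping sum, one from summing over $j$) to avoid an avoidable $n^3$. An attractive feature of this strategy is that only the row edges and the diagonal penalty are used — vertical edges of $G_n$ are discarded entirely. Since the resulting rate matches the classical Poincar\'e constant $\Theta(n^{-2})$ for a discrete path of length $n$ with Dirichlet condition at one endpoint, the $\Omega(n^{-2})$ scaling should be tight.
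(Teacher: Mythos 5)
Your proposal is correct and follows essentially the same route as the paper's proof: telescoping along each row from a vertex $(i,j)$ to the diagonal anchor $(i,i+1)$, applying Cauchy--Schwarz, splitting off the anchor term, and summing to obtain $\|v\|_2^2 \leq O(n^2)\langle v, H_n v\rangle$ before invoking the Rayleigh quotient. The bookkeeping of the two $n$-factors and the observation that only horizontal edges and the diagonal penalty are needed also match the paper's argument.
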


\begin{proof}
For any vector $f\in\mathbb{R}^{V_n}$, we have
\begin{align*}
    f^\top {\cal L}_n f = \sum_{(u,v)\in E_n} (f(u)-f(v))^2\,.
\end{align*}
Thus,
\begin{align*}
    f^\top H_n f = \sum_{(u,v)\in E_n} (f(u)-f(v))^2 + \sum_{i=1}^n 2f(i,i+1)^2\,.
\end{align*}
In the second term, we abuse the notation and use $f(i,j)$ to denote $f(v)$ with $v=(i,j)$ for simplicity.

For any off-diagonal vertex $v=(i,j)$ with $i+2\leq j\leq n+1$, consider the horizontal path in the stair graph:
\begin{align*}
    (i,j)\rightarrow (i,j-1)\rightarrow (i,j-2)\rightarrow \cdots \rightarrow (i,i+1)\,.
\end{align*}
We can apply telescoping summation and Cauchy-Schwartz inequality along this path:
\begin{align*}
    f(i,j) = f(i,i+1) + \sum_{k=i+1}^{j-1} (f(i,k+1) - f(i,k))\,,
\end{align*}
which implies that
\begin{align*}
    f(i,j)^2 \leq &~ 2f(i,i+1)^2 + \REV{2} \left(\sum_{k=i+1}^{j-1} (f(i,k+1) - f(i,k))\right)^2\tag{$(a+b)^2\leq 2a^2 + 2b^2$}\\
    \leq &~ 2f(i,i+1)^2 + \REV{2} (j-i-1)\cdot \sum_{k=i+1}^{j-1} (f(i,k+1) - f(i,k))^2\,.\tag{Cauchy-Schwartz}
\end{align*}

Next, we sum over all vertices:
\begin{align*}
    \sum_{v\in V_n} f(v)^2=&~\sum_{i=1}^n f(i,i+1)^2 + \sum_{i=1}^{n-1} \sum_{j=i+2}^{n+1} f(i,j)^2\\
    \leq &~ \sum_{i=1}^n f(i,i+1)^2 + \sum_{i=1}^{n-1} \sum_{j=i+2}^{n+1}\left(2f(i,i+1)^2 + \REV{2} (j-i-1)\cdot \sum_{k=i+1}^{j-1} (f(i,k+1) - f(i,k))^2\right)\\
    =&~ \sum_{i=1}^n (2(n-i)+1) f(i,i+1)^2 + \REV{2} \sum_{i=1}^{n-1}\sum_{k=i+1}^{n}(f(i,k+1) - f(i,k))^2\sum_{j=k+1}^{n+1}(j-i-1)\\
    = &~ \sum_{i=1}^n (2(n-i)+1) f(i,i+1)^2 + \sum_{i=1}^{n-1}\sum_{k=i+1}^{n}(f(i,k+1) - f(i,k))^2 \cdot \REV{(n+k-2i)(n-k+1)}\\
    \leq &~ \Or(n)\cdot \sum_{i=1}^n 2f(i,i+1)^2 + \Or(n^2)\cdot \sum_{(u,v)\in E_{n,h}} (f(u)-f(v))^2\,,
\end{align*}
where $E_{n,h}$ are the set of horizontal edges in $G_n$, and the last step follows from $2(n-i)+1\leq 2n$ and $(n+k-2i)(n-k+1)\leq n(n-1)$.

Since $E_{n,h}\subset E_n$, we have
\begin{align*}
    \|f\|_2^2 = \sum_{v\in V_n}f(v)^2 \leq &~ \Or(n^2)\cdot \left(\sum_{i=1}^n 2f(i,i+1)^2 + \sum_{(u,v)\in E_{n}} (f(u)-f(v))^2\right)\\
    = &~ \Or(n^2)\cdot f^\top H_n f\,.
\end{align*}
By the variational characterization of eigenvalues,
\begin{align*}
    \lambda_{\min}(H_n)=\min_{f\in \mathbb{R}^{V_n}} \frac{f^\top H_n f}{\|f\|_2^2}= \Omega\left(\frac{1}{n^2}\right)\,.
\end{align*}
\end{proof}

The following lemma shows that the $1/{n^2}$ lower bound is tight:
\begin{lemma}\label{lem:lower_bound}
For $n\in \mathbb{N}_+$, let $H_n$ be defined as in \cref{thm:graph_spectral}. Then,
\begin{align*}
    \lambda_{\min}(H_n)\leq \Or\left(\frac{1}{n^2}\right)\,.
\end{align*}
\end{lemma}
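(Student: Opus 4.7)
The plan is to exhibit an explicit test vector $f\in\mathbb{R}^{V_n}$ and invoke the Rayleigh quotient characterization
\[
\lambda_{\min}(H_n)\;\le\;\frac{f^\top H_n f}{\|f\|_2^2}\;=\;\frac{\sum_{(u,v)\in E_n}(f(u)-f(v))^2+2\sum_{i=1}^n f(i,i+1)^2}{\sum_{v\in V_n}f(v)^2}.
\]
The test vector should be small (zero) on the diagonal to kill the $D_n$ contribution, should change slowly along edges (to keep the edge sum small), yet must attain large values somewhere to make the denominator big. A natural candidate is the \emph{distance-to-the-diagonal} function
\[
f(i,j)\;:=\;j-i-1,\qquad 1\le i\le n,\;i+1\le j\le n+1,
\]
which vanishes on $\{(i,i+1)\}$ and grows linearly off the diagonal, taking its maximum value $n-1$ at $(1,n+1)$.

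For this $f$, every horizontal edge $(i,j)\sim(i,j+1)$ and every vertical edge $(i,j)\sim(i+1,j)$ satisfies $(f(u)-f(v))^2=1$. The total edge count of the stair graph $G_n$ is $n(n-1)$ (sum of $n-i$ horizontal edges over $i=1,\dots,n$ and $n-i$ vertical edges over $i=1,\dots,n-1$), so the numerator equals
\[
\sum_{(u,v)\in E_n}(f(u)-f(v))^2+2\sum_{i=1}^n f(i,i+1)^2 \;=\; n(n-1)+0\;=\;O(n^2).
\]

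For the denominator, I compute
\[
\|f\|_2^2\;=\;\sum_{i=1}^n\sum_{j=i+1}^{n+1}(j-i-1)^2\;=\;\sum_{i=1}^n\sum_{k=0}^{n-i}k^2\;=\;\sum_{i=1}^n\frac{(n-i)(n-i+1)(2(n-i)+1)}{6},
\]
which, setting $m=n-i$, is $\sum_{m=0}^{n-1}\tfrac{m(m+1)(2m+1)}{6}=\Theta(n^4)$. Substituting gives the Rayleigh quotient $O(n^2)/\Theta(n^4)=O(n^{-2})$, which yields the claim. No step looks delicate here: the construction is explicit and all sums are elementary; the only ``obstacle'' is choosing the right test vector, and the distance-to-diagonal function is essentially forced by the three constraints (vanish on diagonal, unit gradient on every edge, growth of order $n$).
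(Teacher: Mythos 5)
Your proposal is correct and uses the same argument as the paper: the paper's proof also takes the distance-to-diagonal test vector $g(i,j)=j-i-1$, notes it vanishes on the diagonal, has unit gradient on every edge so that $g^\top H_n g = |E_n| = n(n-1)$, and computes $\|g\|_2^2 = \tfrac{1}{12}n^2(n^2-1)$ (you stop at $\Theta(n^4)$, which is all that's needed). There is no substantive difference between the two.
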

\begin{proof}
We can construct a test vector $g\in\R^{V_n}$  such that:
\begin{align*}
    g(i,j):=j-i-1~~~\forall 1\leq i\leq n,\, i+1\leq j\leq n+1\,.
\end{align*}

Then, $g(i,i+1)=0$ for all $i\in [n]$. And for each edge $(u,v)\in E_n$, it is easy to check that
\begin{align*}
    (g(u) - g(v))^2 = 1\,.
\end{align*}
Thus,
\begin{align*}
    g^\top H_n g = 0 + \sum_{(u,v)\in E_{n}} (g(u) - g(v))^2 = |E_n|= n(n-1)\,.
\end{align*}

On the other hand,
\begin{align*}
    \|g\|_2^2 = \sum_{i=1}^n \sum_{j = i+1}^{n+1} (j-i-1)^2=\sum_{i=1}^n \sum_{k=1}^{n-i} k^2=\frac{1}{12}n^2(n^2-1)\,.
\end{align*}

Thus, by Rayleigh quotient,
\begin{align*}
    \lambda_{\min}(H_n) \leq \frac{g^\top H_n g}{\|g\|_2^2} = \Or\left(\frac{1}{n^2}\right)\,.
\end{align*}
\end{proof}

Therefore, we conclude that $\lambda_{\min}(H_n)=\Theta(n^{-2})$.

\begin{rem}
\cref{thm:graph_spectral} and \cref{lem:lower_bound} can be easily generalized to the setting where
\begin{align*}
    H_n(a):=\mathcal{L}_n + aD_n\,,
\end{align*}
where $a=\Omega(1/n)$. In this case, the proof of \cref{thm:graph_spectral} still implies that the minimum eigenvalue $\lambda_{\min}(H_n(a))=\Omega(1/n^2)$, and the upper bound construction in \cref{lem:lower_bound} does not depend on $a$. Therefore, we have $\lambda_{\min}(H_n(a))=\Theta(1/n^2)$.
\end{rem}

\section{Low temperature Gibbs state preparation for 1D ferromagnetic Ising model} \label{sec:1d_ising}


This section studies the low-temperature thermal state preparation for a 1D ferromagnetic Ising chain with the periodic boundary condition. Although the Hamiltonian of this model is classical (a diagonal matrix in the computational basis), the jump operators are quantum and involve a significant amount of non-diagonal elements. The analysis of this model also shows the generality of core techniques for the 2D toric code in \cref{sec:2D_toric}.

We consider $N$ spins on a ring structure, modeled by the Hilbert space $\mc{H} \cong \C^{2^N}$. The Hamiltonian is
\begin{equation} \label{model:ising}
    H^{\rm Ising} = - J \sum_{j = 1}^N  \sigma^z_j \sigma^z_{j + 1}\,,\q J > 0\,,
\end{equation}
with $\si_{N+1}^z = \si_1^z$ (see \cref{fig:1D_Ising}, top), where $\{\bz_j := \sigma^z_j \sigma^z_{j + 1}\}_{j = 1}^N$ are bond observables satisfying
\begin{equation} \label{eq:bond}
    \prod_{j=1}^N \bz_j = I\,.
\end{equation}
We propose the following Gibbs smaller for the above Ising model:
\begin{equation} \label{sampler:ising}
    \mc{L}_\beta = \sum_{j = 1}^N \left(\mc{L}_{\sigma^x_j} + \mc{L}_{\sigma^y_j} + \mc{L}_{\sigma^z_j}\right) + \mc{L}_{\xx}\,,
\end{equation}
which is the sum of local Lindbladians with Pauli couplings plus a global one with coupling operator $\xx:= \prod_{j = 1}^N \si_j^x$. Here $\mc{L}_{\si_j^{x/y/z}}$ and $\mc{L}_\xx$ are defined via \eqref{eqq:davies2}. Their explicit formulas can be similarly derived as in \eqref{2eq:local_lind}, \eqref{2eq:local_lindlz}, and \eqref{2eq:lx}. The main result of this section is the following spectral gap lower bound of $\mc{L}_\beta$ in \cref{sampler:ising}.

\begin{thm} \label{thm:ising_gap}
For the Davies generator \eqref{sampler:ising}, we have
\begin{equation*}
    \gap(\mc{L}_\beta) \ge \max\left\{\Theta(e^{- 4 \beta J}), \mmg{\Theta(N^{-2})}\right\}\,.
\end{equation*}
\end{thm}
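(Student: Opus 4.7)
The plan is to follow the three-step road map of \cref{sec:technical_overview} adapted to the 1D ring. The model has one logical qubit (the degenerate ground states $\ket{0^N}, \ket{1^N}$) and a syndrome space parameterized by the bond observables $\{\bz_j\}_{j=1}^N$ subject to the parity constraint \eqref{eq:bond}. First I would set up the algebra decomposition $\mc{B}(\mc{H}) \cong \mc{Q} \otimes \mc{A}^{\rm full}$, where $\mc{Q} \cong M_2(\C)$ is the Pauli algebra of the logical qubit generated by $\xx = \prod_j \si^x_j$ and a representative $\zz = \si^z_1$ (which commutes with every $\bz_j$ and anti-commutes with $\xx$), and $\mc{A}^{\rm full}$ is generated by $\{\bz_j\}_{j=1}^N$ together with a chain of local spin flips, mirroring \cref{lem:2D_decomposition}. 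Writing $\mc{L}_\beta = \mc{L}^{\rm gapped} + \mc{L}^{\rm rest}$ with $\mc{L}^{\rm gapped} := \sum_j \mc{L}_{\si^x_j} + \mc{L}_\xx$, the analogues of \cref{lem:L_gapped_property,lem:L_gapped_property2} will show that $\mc{L}^{\rm gapped}$ is primitive and block-diagonal with respect to
\[
\mc{B}(\mc{H}) = \bigoplus_{B \in \{\mi, \xx, \yy, \zz\}} B \otimes \mc{A}^{\rm full}
\]
in both the HS and GNS inner products, so by \cref{lem1} (item~2) it suffices to lower bound $\gap(-\mc{L}^{\rm gapped})$.

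For the logical blocks $B \neq \mi$, the operator $\xx$ commutes with itself and anti-commutes with $\yy$ and $\zz$, so by \eqref{2eq:lx} we obtain $\langle B, -\mc{L}_\xx B\rangle_{\si_\beta} \geq 2$, while $\|\mc{L}_\xx\|_{\si_\beta \to \si_\beta} = \Theta(1)$ by \eqref{2eq:lxnorm}. Combined with $\ker(\mc{L}_{\rm local}|_{B\otimes \mc{A}^{\rm full}}) = \mathrm{Span}\{B\}$ and \cref{lem1} (item~4), this yields
\[
-\mc{L}^{\rm gapped}|_{B\otimes \mc{A}^{\rm full}} \succeq \Omega\left(\frac{\gap(-\mc{L}_{\rm local}|_{\mc{A}^{\rm full}})}{1 + \gap(-\mc{L}_{\rm local}|_{\mc{A}^{\rm full}})}\right),
\]
exactly as in \cref{prop:second_step}. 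The theorem then reduces to the syndrome-space estimate $\gap(-\mc{L}_{\rm local}|_{\mc{A}^{\rm full}}) \geq \max\{\Theta(e^{-4\beta J}), \Omega(N^{-2})\}$, where $\mc{L}_{\rm local} := \sum_j \mc{L}_{\si^x_j}$.

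For this gap I would adapt the proof of \cref{lem:L_1}: decompose $\mc{A}^{\rm full} = \bigoplus_\Lambda \mc{A}^{\rm full}(\Lambda)$ over even-parity subsets $\Lambda \subset \{1, \dots, N\}$ and sort neighboring bond pairs into $(\Gamma_{\rm int}, \Gamma_{\rm flip}, \Gamma_{\rm ab})$. Because $\mc{L}_{\si^x_j}$ has the same local structure as on the 2D snake (now with Bohr frequencies $\pm 4J, 0$ and $h_\pm = 2/(e^{\pm 4\beta J}+1)$), the local $4\times 4$ master-Hamiltonian blocks \eqref{matrix_int}, \eqref{matrix_flip}, \eqref{matrix_ab} transfer verbatim and give $-\mc{L}_{\rm local}|_{\mc{A}^{\rm full}(\Lambda)} \succeq 1/2$ whenever $\Gamma_{\rm int} \neq \emptyset$ or $\Lambda = \emptyset$. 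The only nontrivial case is the fully-Abelian sector $\Lambda = \{1,\dots,N\}$: at $\beta = \infty$ the master Hamiltonian on this sector becomes a classical graph Laplacian on the space of bond configurations, whose two-domain-wall block is the ring analogue of the stair graph plus the absorbing diagonal $D_n$ of \cref{thm:graph_spectral}. An iterative argument in the style of \eqref{eqn:iteration} will then control the $k \ge 3$ domain-wall sectors by the $k = 2$ sector, and \cref{thm:graph_spectral} supplies the $\Omega(N^{-2})$ lower bound.

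The main obstacle, as in the 2D toric code, is the low-temperature control of the Abelian sector, where naive energy-barrier arguments yield only $e^{-\Theta(\beta)}$. Fortunately the 1D ring is strictly simpler than the snake-plus-comb geometry of \cref{sec:2D_toric}, so no new spectral-graph estimate is required beyond \cref{thm:graph_spectral}. The only care needed is to verify that the residual $\beta$-dependent correction $K_N^\beta - K_N$ has operator norm $\Or(N e^{-4\beta J})$, which is negligible once $\beta = \Omega(\ln N)$; this allows the zero-temperature stair-Laplacian estimate to dominate at low temperatures and combine with the classical $\Theta(e^{-4\beta J})$ bound from \cite{Alicki_2009} at moderate $\beta$ to yield the stated maximum.
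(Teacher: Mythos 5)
Your proposal has a genuine gap at the very first step: the choice $\mc{L}^{\rm gapped} := \sum_{j} \mc{L}_{\sigma^x_j} + \mc{L}_{\xx}$ is \emph{not} primitive, so \cref{lem1} (item~2) cannot be invoked to pass from $\gap(-\mc{L}_\beta)$ to $\gap(-\mc{L}^{\rm gapped})$. The logical operator $\xx = \prod_j \sigma^x_j$ commutes with every $\sigma^x_j$ and with $H^{\rm Ising}$, so every Fourier component $\sigma^x_j(\omega)$ commutes with $\xx$, giving $\mc{L}_{\sigma^x_j}(\xx) = 0$ for all $j$, and likewise $\mc{L}_{\xx}(\xx) = -\tfrac12[\xx,[\xx,\xx]] = 0$. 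Hence $\ker(\mc{L}^{\rm gapped}) \supseteq \mathrm{Span}\{\mi, \xx\}$. The sentence in your proof ``$\xx$ commutes with itself and anti-commutes with $\yy$ and $\zz$, so $\langle B, -\mc{L}_\xx B\rangle_{\si_\beta} \geq 2$'' is self-contradictory for $B = \xx$: commutation with $\xx$ forces $\mc{L}_\xx(\xx) = 0$, so the Dirichlet form vanishes on the $\xx$ sector rather than being bounded below by $2$. Physically, the single global jump $\xx$ only dephases the logical $\zz$ and $\yy$ observables; it cannot by itself relax the logical bit flip.

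The paper's own proof repairs exactly this by including the dephasing $\mc{L}_{\zz} = \mc{L}_{\sigma^z_1}$ in the gapped part, taking $\wt{\mc{L}} + \mc{L}_{\xx} + \mc{L}_{\zz}$ with $\wt{\mc{L}} = \sum_{j=2}^N \mc{L}_{\sigma^x_j}$. Since $\sigma^z_1$ commutes with $H^{\rm Ising}$, this is the zero-Bohr-frequency Lindbladian $A \mapsto -\tfrac12[\sigma^z_1,[\sigma^z_1,A]]$, which gives eigenvalue $-2$ on $\xx$ and $\yy$ (the logical observables anti-commuting with $\zz$); combined with $\mc{L}_\xx$, which kills $\zz$ and $\yy$, this yields a uniform spectral gap of $2$ on all three nontrivial logical sectors and restores primitivity of the gapped part. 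If you replace your $\mc{L}^{\rm gapped}$ with this one, the remainder of your plan---the $\Lambda$-decomposition of $\mc{A}^{\rm full}_{\rm b}$, the three local $4\times 4$ master-Hamiltonian blocks, the iterative reduction to the two-domain-wall sector, the stair-Laplacian estimate from \cref{thm:graph_spectral}, and the $\Or(N e^{-4\beta J})$ control of $K_N^\beta - K_N$---aligns with the paper and should go through. One further, more minor point: the paper deliberately restricts the local part to $j \ge 2$ because $\sigma^x_1 = \xx\, \prod_{j\ge 2}\sigma^x_j$ straddles $\mc{Q}\otimes\mc{A}^{\rm full}_{\rm b}$, so $\mc{L}_{\sigma^x_1}(QA) \ne Q\,\mc{L}_{\sigma^x_1}(A)$ in general (cf.\ \cref{lem2}); keeping $\mc{L}_{\sigma^x_1}$ in $\mc{L}_{\rm local}$ as you do preserves block diagonality but changes the effective syndrome dynamics in the $\yy$ and $\zz$ blocks, so the identification of those blocks' gaps with the syndrome gap is no longer immediate.
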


\begin{figure}[bthp]
\centering
\includegraphics[width=0.8\textwidth]{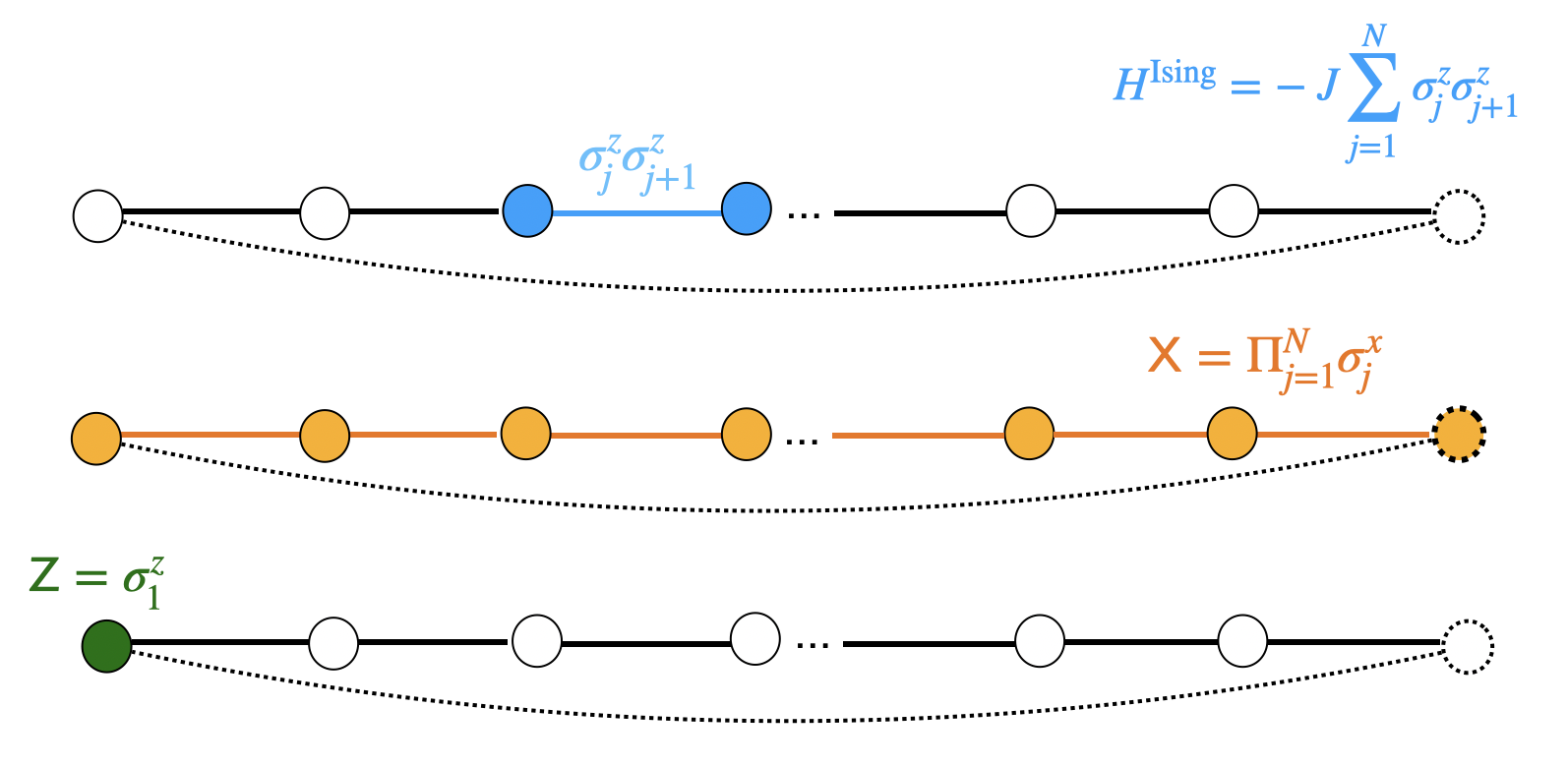}
\caption{1D Ising model. \emph{Top}: The 1D ferromagnetic Ising Hamiltonian $H^{\rm Ising}$. The dased dot and line denote the periodic boundary condition. \emph{Middle}: The logic operator $\xx = \Pi_{i = 1}^N \sigma^x_i$. \emph{Bottom}: The logic operator $\zz = \sigma^z_1$.}
\label{fig:1D_Ising}
\end{figure}

The proof is similar to that of \cref{thm:fast_mixing_2D}, based on the structures of the ground states of the Ising model and the associated observable algebra $\bh$. We know that $H^{\rm Ising}$ has a two-fold degenerate ground state
space spanned by
$\ket{0^N}$ and $\ket{1^N}$, and it is frustration-free, namely, any ground state $\ket{\vp}$ of  $H^{\rm Ising}$ satisfies $\sigma^z_j \sigma^z_{j + 1} \ket{\vp} = \ket{\vp}$ for each $j$.
This two two-fold degeneracy encodes a single logical qubit in the sense that with some abuse of notation, $\ket{0^N}$ and $\ket{1^N}$ can be identified as the logical qubit $\ket{0}$ and as $\ket{1}$, respectively:
\begin{equation*}
    \ket{0} \simeq \ket{0^N}\,,\q \ket{1} \simeq \ket{1^N}\,.
\end{equation*}
The excited states are given by acting Pauli string $X_{j_1} \cdots X_{j_n}$ on the ground states $\ket{0^N}$ and $\ket{1^N}$.
We define observables (see \cref{fig:1D_Ising})
\begin{align*}
    \xx = \prod_{i = 1}^N \sigma^x_i\,,\q  \zz = \sigma^z_1\,.
\end{align*}
The observable $\xx$ flips the logical qubit $\xx \ket{0/1}=\ket{1/0}$. The measurement outcome by $\zz$ indicates which state we are observing:
\begin{equation*}
    \bra{0}\zz\ket{0} = 1\,,\q  \bra{1}\zz\ket{1} = - 1\,.
\end{equation*}
We note that $\xx$ and $\zz$ commute with local terms $\{\sigma^z_j \sigma^z_{j + 1}\}$, and thus all the eigenspaces of $H^{\rm Ising}$ are the invariant subspaces of $\xx$ and $\zz$.


Let $\mc{Q} \cong \mc{B}(\C^2)$ be the algebra on the logical qubit generated by $\xx$ and $\zz$. Then we have $\mc{Q} = {\rm Span}\{I, \xx,\yy, \zz\}$, where $\yy = i \zz \xx$.
To describe the excited states and observable algebra $\bh$, we first define an admissible bond on a ring of $N$ sites as a configuration containing an even number of $-1$:
\begin{equation*}
    \ket{b} = \ket{b_1,b_2,\ldots, b_N}\in \left\{+1,-1\right\}^{N}\ \text{such that}\quad \#\left\{b_i=-1\right\} \in 2\ZZ\,.
\end{equation*}
The space spanned by the admissible bond, denoted by $\mc{H}_+$, is of dimension $2^{N-1}$.
An important observation is that
a natural orthonormal tensor basis of $\C^{2^N}$ consisting of $\ket{\epsilon_1\ldots\epsilon_N}$ with $\eps_i = 0/1$ can be uniquely written as $\ket{\eps_1}\ket{b}\in \C^2 \otimes \mc{H}_+$, where the first qubit
$\ket{\eps_1}$ is regarded as a logical qubit and $b_j$ is determined by the bond observable $\bz_j = \sigma^z_j \sigma^z_{j + 1}$:
\begin{equation*}
     \bz_j \ket{\epsilon_1\ldots\epsilon_N} = b_j \ket{\epsilon_1\ldots\epsilon_N}\,.
\end{equation*}
We define the full bond algebra $\mc{A}^{\rm full}_{b}$ by all the linear transformations on $\mc{H}_+$.
Then, as a consequence of the above decomposition of $\ket{\epsilon_1\ldots\epsilon_N}$, the observable algebra can be decomposed as follows~\cite[Lemma 4]{Alicki_2009}.
\begin{lemma}\label{lem:decom_1D_Ising}
The algebra of observables on a ring can be decomposed into
\begin{equation*}
    \mc{B}(\mc{H}) = \mc{Q} \otimes \mc{A}^{\rm full}_b\,.
\end{equation*}
In particular, we have the orthogonal decomposition in GNS inner product
\begin{equation} \label{eq:subspace_decom}
  \mc{B}(\mc{H}) = (I \otimes \mc{A}^{\rm full}_b)\oplus  (\xx \otimes \mc{A}^{\rm full}_b)\oplus  (\yy \otimes \mc{A}^{\rm full}_b)\oplus  (\zz \otimes \mc{A}^{\rm full}_b)\,.
\end{equation}
\end{lemma}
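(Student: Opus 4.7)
The plan is to construct an explicit unitary isomorphism $U:\mc{H} \to \C^2 \otimes \mc{H}_+$ on the computational basis by
\[
U\ket{\epsilon_1\epsilon_2\cdots\epsilon_N} \,=\, \ket{\epsilon_1} \otimes \ket{b_1,b_2,\ldots,b_N}, \qquad b_j := (-1)^{\epsilon_j + \epsilon_{j+1}} \ \ (\epsilon_{N+1}:=\epsilon_1),
\]
and then transport the algebra tensor decomposition $\mc{B}(\C^2\otimes \mc{H}_+) = \mc{B}(\C^2)\otimes \mc{B}(\mc{H}_+)$ back through $U$. First, I would check that $U$ is well-defined and bijective: the emitted bond word $(b_1,\ldots,b_N)$ automatically satisfies $\prod_j b_j = 1$ so it lies in $\mc{H}_+$, and conversely any admissible bond together with $\epsilon_1$ determines $\epsilon_2,\ldots,\epsilon_N$ uniquely via the recursion $\epsilon_{j+1} = \epsilon_j \oplus (1-b_j)/2$, whose periodic closure is guaranteed precisely by the parity constraint $\#\{b_i=-1\}\in 2\Z$. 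A dimension count $2\cdot 2^{N-1}=2^N$ then promotes this basis bijection to a unitary.

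Next, I would compute the action of the generators of $\mc{Q}$ under $U$. The logical flip $\xx=\prod_j \sigma^x_j$ negates every $\epsilon_j$ but leaves each $b_j$ invariant, so $U\xx U^\dagger = \sigma^x \otimes \mi$; the logical readout $\zz=\sigma^z_1$ acts on $\ket{\epsilon_1\cdots}$ by $(-1)^{\epsilon_1}$ without touching the bonds, giving $U\zz U^\dagger = \sigma^z\otimes \mi$, and hence $U\yy U^\dagger = \sigma^y\otimes \mi$. Thus $U\mc{Q}\,U^\dagger = \mc{B}(\C^2)\otimes \mi$. Since $\mathcal{A}^{\rm full}_b = \mc{B}(\mc{H}_+)$ is by definition the full operator algebra on the second factor, and the two algebras commute and together generate the whole of $\mc{B}(\C^2\otimes\mc{H}_+)$ (verified by the dimension identity $4\cdot 4^{N-1}=4^N$), the tensor product decomposition $\mc{B}(\mc{H}) = \mc{Q} \otimes \mathcal{A}^{\rm full}_b$ follows.

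For the GNS-orthogonal decomposition, I would next observe that each bond observable $\bz_j=\sigma^z_j \sigma^z_{j+1}$ acts after $U$ as $\mi \otimes B_j$, where $B_j$ is the diagonal operator returning $b_j$ on $\mc{H}_+$. Hence $U H^{\rm Ising} U^\dagger = \mi \otimes H_b$ with $H_b=-J\sum_j B_j$, so the Gibbs state factorizes as $U\si_\beta U^\dagger = (\mi/2)\otimes \w{\si}_\beta$ with $\w{\si}_\beta = e^{-\beta H_b}/\tr_{\mc{H}_+}(e^{-\beta H_b})$. For $P,P'\in\{\mi,\xx,\yy,\zz\}$ and $A_b,B_b\in\mathcal{A}^{\rm full}_b$, the GNS inner product therefore factorizes,
\[
\l P\otimes A_b,\ P'\otimes B_b\r_{\si_\beta}
\,=\, \tfrac{1}{2}\tr(P^\dag P')\cdot \tr(A_b^\dag B_b\,\w{\si}_\beta),
\]
and the first factor vanishes whenever $P\neq P'$ by the standard orthogonality of the single-qubit Pauli basis under the maximally mixed weight. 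This yields the orthogonal decomposition \eqref{eq:subspace_decom}.

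The only subtle step, rather than a real obstacle, is pinning down the isomorphism $U$ carefully around the periodic boundary: one must confirm that the cyclic recursion $\epsilon_{j+1}=\epsilon_j\oplus(1-b_j)/2$ is consistent precisely on the parity-even bond sector that defines $\mc{H}_+$. Once that is in place, every remaining statement reduces to a direct Pauli-string computation and the factorization of $\si_\beta$ just noted, both of which are routine.
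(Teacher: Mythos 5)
Your proposal is correct and follows essentially the same route as the paper, which sets up exactly this identification $\ket{\epsilon_1\cdots\epsilon_N}\leftrightarrow\ket{\epsilon_1}\ket{b}$ via the bond observables and then defers the algebra statement to Lemma~4 of Alicki--Fannes--Horodecki; you simply supply the details (bijectivity from the parity constraint, the images of $\xx,\zz$, and the factorization of $\sigma_\beta$ for the GNS orthogonality). One cosmetic remark: with the paper's convention $\yy=i\zz\xx$ one gets $U\yy U^\dagger=-\sigma^y\otimes\mi$ rather than $+\sigma^y\otimes\mi$, which changes nothing in the decomposition.
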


We are now ready to sketch the proof of \cref{thm:ising_gap}. For simplicity, we use $b_j = \pm$ for the bond configuration. Similarly to \cref{lem:2D_decomposition}, the algebra $\mc{A}_{b}^{\rm full}$ is generated by the observables $\{\bz_j\}^N_{j=1}$ and $\{\sigma^x_j\}_{j = 2}^N$, which commute with $\xx$. Then, by a direct computation, we have $\mathcal{L}_{\xx}\left(I\otimes \mathcal{A}^{\rm full}_{\rm b}\right)=0$. Moreover, there holds
\begin{equation} \label{eq:lxeig}
\mc{L}_\xx \left(\zz/\yy\otimes \mathcal{A}^{\rm full}_{\rm b}\right) = - 2 (\zz/\yy)\otimes \mathcal{A}^{\rm full}_{\rm b}, \q  - \l (\zz/\yy), \mc{L}_\xx (\zz/\yy)\r_{\si_\beta}  = 2\,.
\end{equation}
Similarly, we derive  $\mc{L}_{\zz}\left(I\otimes \mathcal{A}^{\rm full}_{\rm b}\right)=0$, and
\begin{equation} \label{eq:lzeig}
\mc{L}_\zz \left(\xx/\yy\otimes \mathcal{A}^{\rm full}_{\rm b}\right) = - 2 \left(\xx/\yy\right)\otimes \mathcal{A}^{\rm full}_{\rm b}\,,\q  - \l (\xx/\yy), \mc{L}_\zz (\xx/\yy)\r_{\si_\beta}  = 2\,.
\end{equation}
Further, for the action of local Lindbladian $\mc{L}_{\sigma^x_j/\sigma^y_j/\sigma^z_j}$ on the decomposition \eqref{eq:subspace_decom}, we have the following lemma, in analog with \cref{lem:L_gapped_property}.

\begin{lemma} \label{lem2}
The Lindbladian with Pauli coupling $\mc{L}_{\sigma^x_j}$, $\mc{L}_{\sigma^y_j}$, and $\mc{L}_{\sigma^z_j}$ ($j \ge 1$) are block diagonal with respect to the decomposition \eqref{eq:subspace_decom}:
\begin{align*}
    \mc{L}_{\sigma^x_j/\sigma^y_j/\sigma^z_j} \left(I/\xx/\yy/\zz \otimes \mc{A}_{\rm b}^{\rm full}\right) \subset I/\xx/\yy/\zz \otimes \mc{A}_{\rm b}^{\rm full}\,.
\end{align*}
In particular, it holds that for $j \ge 2$, $\mc{L}_{\sigma^x_j}(\mc{Q} \otimes \mc{A}_{\rm b}^{\rm full}) = \mc{Q} \otimes  \mc{L}_{\sigma^x_j}(\mc{A}_{\rm b}^{\rm full})$, that is,
\begin{equation*}
     \mc{L}_{\sigma^x_j}(Q A) = Q  \mc{L}_{\sigma^x_j}(A)\q \text{for $Q \in \mc{Q}$ and $A \in \mc{A}^{\rm full}_{\rm b}$}\,.
\end{equation*}
\end{lemma}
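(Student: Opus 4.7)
The plan is to establish block diagonality via a symmetry argument using the two commuting unitary conjugations $\Phi_\xx(X) := \xx X \xx$ and $\Phi_\zz(X) := \zz X \zz$ on $\bh$. First I will identify the four summands in \eqref{eq:subspace_decom} as the joint $\pm 1$ eigenspaces of $(\Phi_\xx, \Phi_\zz)$. Since $\mc{Q}$ and $\mc{A}^{\rm full}_b$ commute by \cref{lem:decom_1D_Ising}, both $\Phi_\xx$ and $\Phi_\zz$ act trivially on the $\mc{A}^{\rm full}_b$ factor, and a direct Pauli (anti)commutation check gives the eigenvalues $(+,+),(+,-),(-,-),(-,+)$ on $I,\xx,\yy,\zz$ respectively, matching the decomposition.

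Next I will show that each $\mc{L}_{\sigma^{x/y/z}_j}$ commutes with both $\Phi_\xx$ and $\Phi_\zz$, which forces invariance of every joint eigenspace. The key ingredient is the Davies covariance identity: if $U$ is unitary and commutes with $H$, then the spectral projections of $H$ are $U$-invariant, so $U S(\ww) U^\dag = (U S U^\dag)(\ww)$ in \eqref{eq:jumpfou}, and substitution into \eqref{eqq:davies2} yields $\Phi_U \circ \mc{L}_S \circ \Phi_U^\dag = \mc{L}_{U S U^\dag}$. I will verify that $\xx$ and $\zz$ both commute with each bond term $\bz_j$ (hence with $H^{\rm Ising}$): for $\xx$ this holds because $\bz_j = \sigma^z_j \sigma^z_{j+1}$ contains an even number of $\sigma^z$ factors that anticommute with $\xx$; for $\zz = \sigma^z_1$ it is immediate. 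A short case check then yields $U \sigma^{x/y/z}_j U^\dag = \pm \sigma^{x/y/z}_j$ for $U \in \{\xx, \zz\}$, and since the Glauber-rate Davies generator satisfies $\mc{L}_{-S} = \mc{L}_S$ (the jumps $S(\ww)$ pick up a global sign that cancels in both quadratic expressions $S(\ww)^\dag X S(\ww)$ and $S(\ww)^\dag S(\ww)$), this closes the covariance.

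For the refinement $\mc{L}_{\sigma^x_j}(Q A) = Q \mc{L}_{\sigma^x_j}(A)$ with $j \ge 2$, I will observe that $\sigma^x_j$ lies in $\mc{A}^{\rm full}_b$ in this range, so every Fourier jump of $\mc{L}_{\sigma^x_j}$ stays in $\mc{A}^{\rm full}_b$; indeed each such jump is a product of $\sigma^x_j$ with the spectral projections in \eqref{2eq:exp_proj}, which are polynomials in $\bz_{j-1}, \bz_j$. Because $Q \in \mc{Q}$ commutes with every element of $\mc{A}^{\rm full}_b$ by \cref{lem:decom_1D_Ising}, $Q$ commutes with every jump operator $S_a(\ww)$ and with $S_a(\ww)^\dag S_a(\ww)$, so $Q$ factors straight through the Davies formula \eqref{eqq:davies2}.

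The main obstacle is the bookkeeping in step two: confirming cleanly that $\xx$ and $\zz$ are genuine symmetries of $H^{\rm Ising}$ and that each local Pauli coupling is mapped to $\pm$ itself under both conjugations. Once that checklist is complete, the rest is a tautological application of Davies covariance, and no further spectral estimates are needed.
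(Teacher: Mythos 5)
Your proof is correct. The covariance argument in step two---that $\Phi_U \circ \mc{L}_S \circ \Phi_U^\dag = \mc{L}_{USU^\dag}$ whenever the unitary $U$ commutes with $H^{\rm Ising}$, combined with $U\sigma^{x/y/z}_j U^\dag = \pm\sigma^{x/y/z}_j$ for $U \in \{\xx,\zz\}$ and the sign-insensitivity $\mc{L}_{-S} = \mc{L}_S$---cleanly establishes that each $\mc{L}_{\sigma^{x/y/z}_j}$ commutes with the conjugations $\Phi_\xx$ and $\Phi_\zz$, whose joint eigenspaces are exactly the four summands of \eqref{eq:subspace_decom} with the eigenvalues you list. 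This handles all three Pauli types and the boundary site $j=1$ in one stroke. The paper gives no written proof of \cref{lem2}; its proof of the two-dimensional analogue \cref{lem:L_gapped_property} argues directly that the local jump operators and their spectral projections lie inside the syndrome subalgebra and commute with the logical factor, which is precisely your third step. So the two arguments rest on the same commutation structure, but you package the block-diagonality through a symmetry/covariance principle, while the paper's style is direct structural verification; yours is somewhat more general and handles $\sigma^y_j,\sigma^z_j$ and $j=1$ without additional case analysis. One small remark: the identity $\mc{L}_{-S}=\mc{L}_S$ has nothing specific to do with the Glauber rate---it is just the quadratic dependence of the generator on the jump operators---so the qualifier ``Glauber-rate'' in your parenthetical is unnecessary (though harmless).
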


We now define the local Lindbladian:
\begin{equation} \label{eq:locallind}
    \wt{\mc{L}} = \sum_{j = 2}^N \mc{L}_{\sigma^x_j}
\end{equation}
which is primitive when restricted on $\mc{B}(\mc{H}_+) = \mc{A}^{\rm full}_{\rm b}$ \cite[Lemma 6]{Alicki_2009}. Thanks to the properties \eqref{eq:lxeig}--\eqref{eq:lzeig} of $\mc{L}_{\xx}$ and $\mc{L}_{\zz}$, the Lindbladian $\wt{\mc{L}}+\mc{L}_{\xx}+\mc{L}_{\zz}$ (as a part of $\mc{L}_\beta$ in \eqref{sampler:ising}) is primitive on $\mc{B}(\mc{H})$. Then, \cref{lem1} (item 2) readily gives
\begin{equation}\label{eqn:first_gap_lower_bound}
  \gap\left(\mc{L}_\beta\right)\geq \gap\left(\wt{\mc{L}}+\mc{L}_{\xx}+\mc{L}_{\zz}\right)\,.
\end{equation}
Thus, it suffices to consider the spectral gap of the latter one.
For this, we note from \eqref{eq:lxeig}--\eqref{eq:lzeig} that $\mathcal{L}_{\xx/\zz}$ is also block diagonal for the decomposition \eqref{eq:subspace_decom}. Then, by \cref{lem2}, we only need to estimate the gap of $\wt{\mc{L}}+\mc{L}_{\xx}+\mc{L}_{\zz}$ on each invariant subspace $I/\xx/\yy/\zz \otimes \mc{A}_{\rm b}^{\rm full}$:


\begin{itemize}
\item On $I \otimes \mc{A}^{\rm full}_{\rm b}$. Noting $\mc{L}_{\xx}|_{I \otimes \mc{A}^{\rm full}_{\rm b}}=\mc{L}_{\zz}|_{I \otimes \mc{A}^{\rm full}_{\rm b}}=0$, we have
\begin{equation*}
  \gap\left(\left(\w{\mc{L}} + \mc{L}_\xx+\mc{L}_\zz\right)\Big|_{I \otimes \mc{A}^{\rm full}_{\rm b}}\right) =  \gap\left(\w{\mc{L}}|_{I \otimes \mc{A}^{\rm full}_{\rm b}}\right)\,.
\end{equation*}

\item On $\xx/\yy/\zz \otimes \mc{A}^{\rm full}_{\rm b}$. Noting that any Davies generator is negative, we have
\begin{equation*}
  -\left(\w{\mc{L}} + \mc{L}_\xx+\mc{L}_\zz\right)\Big|_{\zz \otimes \mc{A}^{\rm full}_{\rm b}} \succeq -\left(\w{\mc{L}} + \mc{L}_\xx\right)\Big|_{\zz \otimes \mc{A}^{\rm full}_{\rm b}}\,.
\end{equation*}
Moreover, $\ker\big(\w{\mc{L}}|_{\zz \otimes \mc{A}^{\rm full}_{\rm b}}\big) = {\rm Span}(\zz \otimes I)$. It can be seen as follows. For any $\zz A\in \zz \otimes \mc{A}^{\rm full}_{\rm b}$ in the kernel, by the second part of \cref{lem2}, $\w{\mc{L}}(\zz A) = \zz \w{\mc{L}}(A)=0$. Thus, $A\in \ker\big(\w{\mc{L}}|_{\mc{A}^{\rm full}_{\rm b}}\big)$. Since $\w{\mc{L}}|_{\mc{A}^{\rm full}_{\rm b}}$ is primitive, it must be the case that $A=I$.
Then, by \cref{lem1} (item 4), there holds
 \begin{equation*}
        - \left(\w{\mc{L}} + \mc{L}_{\xx}\right)\Big|_{\zz \otimes \mc{A}^{\rm full}_{\rm b}} \succeq \frac{- \gap\left(\w{\mc{L}}|_{I \otimes \mc{A}^{\rm full}_{\rm b}}\right) \left\l \zz , \mc{L}_{\xx}(\zz ) \right\r_{\si_\beta}}{\gap\left(\w{\mc{L}}|_{I \otimes \mc{A}^{\rm full}_{\rm b}}\right) + \norm{\mc{L}_{\xx}}} = \Theta\left(\frac{\gap\left(\w{\mc{L}}|_{I \otimes \mc{A}^{\rm full}_{\rm b}}\right)}{\gap\left(\w{\mc{L}}|_{I \otimes \mc{A}^{\rm full}_{\rm b}}\right) + 1}\right)\,.
\end{equation*}
The same estimates hold for $- \left(\w{\mc{L}} + \mc{L}_{\xx}\right)\Big|_{\yy \otimes \mc{A}^{\rm full}_{\rm b}}$ and $- \left(\w{\mc{L}} + \mc{L}_{\zz}\right)\Big|_{\xx \otimes \mc{A}^{\rm full}_{\rm b}}$.
\end{itemize}
Combining the above arguments with \eqref{eqn:first_gap_lower_bound}, we find
\begin{align*}
     \gap\left(\mc{L}_\beta\right) &\geq \min\left\{\gap\left(-\left(\w{\mc{L}} + \mc{L}_\xx+\mc{L}_\zz\right)\middle|_{I \otimes \mc{A}^{\rm full}_{\rm b}}\right),\lambda_{\min}\left(-\left(\w{\mc{L}} + \mc{L}_\xx+\mc{L}_\zz\right)|_{\xx/\yy/\zz \otimes \mc{A}^{\rm full}_{\rm b}}\right)\right\} \\ & = \Theta\left(\gap\left(\w{\mc{L}}|_{I \otimes \mc{A}^{\rm full}_{\rm b}}\right)\right)\,,
\end{align*}
as $\gap\left(\w{\mc{L}}|_{I \otimes \mc{A}^{\rm full}_{\rm b}}\right) \to 0$. It has been proved in \cite[Proposition 1]{Alicki_2009} that for any $N$,
\begin{equation*}
    \gap\left(\w{\mc{L}}|_{I \otimes \mc{A}^{\rm full}_{\rm b}}\right) \ge \frac{e^{- 4 \beta J}}{e^{-4 \beta J} + 1}\,.
\end{equation*}
Therefore, to prove \cref{thm:ising_gap}, it suffices to focus on the gap of $\wt{\mc{L}}$ on $\mc{A}^{\rm full}_{\rm b}$ and show
\begin{equation}\label{eqn:gap_wt_L}
  \gap\left(\w{\mc{L}}|_{I \otimes \mc{A}^{\rm full}_{\rm b}}\right) = \mmg{\Omega\left(N^{-2}\right)}\,,  \quad \text{when}\ \beta \ge \Omega(\log N)\,,
\end{equation}
which can be done as \cref{lem:gap_abelian}. We have completed the proof of \cref{thm:ising_gap}.

\end{document}